\begin{document}

\title{On Three Alternative Characterizations of  Combined Traces}

\author{Dai Tri Man L\^{e}}
\institute{Department of Computer Science,\\
University of Toronto\\
\email{ledt@cs.toronto.edu}}

\maketitle






\newcommand{\LT}{\mathcal{L}}
\newcommand{\eqa}{\thickapprox}
\newcommand{\eqb}{\equiv}
\newcommand{\PS}[1]{\wp(#1)}

\newcommand{\h}[1]{\overline{#1}}
\newcommand{\set}[1]{\{#1\}}
\newcommand{\bset}[1]{\bigl\{#1\bigr\}}
\newcommand{\Bset}[1]{\Bigl\{#1\Bigr\}}
\newcommand{\E}{\mathbb{S}}
\newcommand{\EC}{\widehat{\mathcal{S}}}
\newcommand{\PO}{\prec}
\newcommand{\st}{\mathsf{st}}
\newcommand{\com}{<\!\!>}
\newcommand{\df}{:=}
\newcommand{\iffdf}{\stackrel{\textit{\scriptsize{df}}}{\iff}\ }
\newcommand{\calf}[1]{\mathcal{#1}}
\newcommand{\sq}{\sqsubset}
\newcommand{\todo}[1]{ \textcolor{red}{TODO: #1}}
\newcommand{\tcomment}[1]{\text{\hspace*{2mm}$\langle$~\parbox[t]{\textwidth}{ #1 $\rangle$}}}
\newcommand{\ttcomment}[1]{\text{$\langle$~#1~$\rangle$}}
\newcommand{\imm}[1]{{#1}^{\mathsf{cov}\;}}

\newcommand{\TR}{\mathsf{TR}}
\newcommand{\LCT}{\mathsf{LCT}}
\newcommand{\DCD}{\mathsf{CDG}}
\newcommand{\andspace}{\hspace*{3mm}\text{ and }\hspace*{3mm}}
\newcommand{\myspace}[1]{\mbox{\hspace{#1cm}}}
\newcommand{\map}{\mathsf{imap}}
\newcommand{\bt}{\mathsf{bt}}
\newcommand{\cl}{\mathsf{ct2lct}}
\newcommand{\lc}{\mathsf{lct2ct}}
\newcommand{\cd}{\mathsf{ct2dep}}
\newcommand{\dl}{\mathsf{dep2lct}}
\newcommand{\ld}{\mathsf{lct2dep}}
\newcommand{\enu}{\mathsf{enum}}

\newcommand{\tf}[1]{\mathbf{#1}}

\newcommand{\lhdf}{\lhd^\frown}
\newcommand{\flhd}{\frown_\lhd}
\newcommand{\lex}[1]{\,{#1}^{\textit{lex}}\,}
\newcommand{\optord}{\,<^{\textit{opt}}\,}
\newcommand{\stor}[1]{\,{#1}^{\textit{st}}\,}
\newcommand{\reco}[1]{ {#1}^{\,{\textsf{C}}} }
\newcommand{\si}[1]{(#1)^\Cap}
\newcommand{\CT}[1]{\mathfrak{C}(#1)}
\newcommand{\GCT}[1]{\mathfrak{gC}(#1)}
\newcommand{\eqrel}[1]{\equiv_{#1}}
\newcommand{\quotient}[2]{{#1}/{#2}}
\newcommand{\FEC}{\mathbb{C}\mathit{at}_{\mathsf{FE}}}
\newcommand{\cau}{\longrightarrow}
\newcommand{\wcau}{\dasharrow}
\newcommand{\bcau}{\rightsquigarrow}

\newcommand{\seq}[1]{\left[   #1 \right] }
\renewcommand{\emptyset}{\varnothing}

\newcommand{\It}[1]{\mathit{#1}}
\newcommand{\defref}[1]{Definition~\ref{def:#1}}
\newcommand{\theoref}[1]{Theorem~\ref{theo:#1}}
\newcommand{\propref}[1]{Proposition~\ref{prop:#1}}
\newcommand{\corref}[1]{Corollary~\ref{cor:#1}}
\newcommand{\lemref}[1]{Lemma~\ref{lem:#1}}
\newcommand{\exref}[1]{Example~\ref{ex:#1}}
\newcommand{\reref}[1]{Remark~\ref{re:#1}}
\newcommand{\eref}[1]{\eqref{eq:#1}}
\newcommand{\figref}[1]{Figure~\ref{fig:#1}}

\newcommand{\secref}[1]{Section~\ref{sec:#1}}
\newcommand{\eqnref}[1]{Eq.~(\ref{eq:#1})}
\newcommand{\TCT}{\stackrel{\mathsf{t}\leftrightsquigarrow \mathsf{c}}{\equiv}}

\newcommand{\EOD}{\hfill {$\blacktriangleleft$}}

\numberwithin{equation}{section}


\begin{abstract}
The \emph{com}bined \emph{trace}  (i.e., \emph{comtrace}) notion was introduced by Janicki and Koutny in 1995 as a generalization of the Mazurkiewicz trace notion. Comtraces are congruence classes of step sequences, where the congruence relation is defined from two relations \emph{simultaneity} and \emph{serializability} on events.  They also showed that comtraces correspond to \emph{some} class of labeled stratified order structures, but left open the question of  what class of labeled stratified orders represents comtraces. In this work, we proposed a class of labeled stratified order structures that captures exactly the  comtrace notion. Our main technical contributions are representation theorems showing that comtrace quotient monoid,  \emph{combined dependency graph} (Kleijn and Koutny 2008) and our labeled stratified order structure characterization are three different and yet equivalent ways to represent comtraces. This paper is a revised and expanded version of  L\^e (in Proceedings of PETRI NETS 2010, LNCS 6128, pp. 104-124).\\

\keywordname\quad causality theory of concurrency, generalized trace theory, combined trace, step sequence, stratified order structure
\end{abstract}


\section{Introduction}
Partial orders are one of the main tools for modelling ``true concurrency'' semantics of concurrent systems (cf. \cite{Pra}). They are utilized to develop powerful partial-order based automatic verification techniques, e.g.,  the partial order reduction  technique for model checking of concurrent software (see, e.g., \cite[Chapter 10]{CGP} and \cite{EH}). Partial orders are also equipped with \emph{traces}, their powerful formal language counterpart,  introduced by Mazurkiewicz in his seminal paper \cite{Ma1}. In  \emph{The Book of Traces} \cite{DR}, trace theory has been used to tackle problems from diverse areas including formal language theory, combinatorics, graph theory, algebra, logic, and especially concurrency theory.

However while partial orders and traces can sufficiently model the ``earlier than" relationship, Janicki and Koutny argued that it is problematic to use a single partial order to specify both  the ``earlier than" and  the ``not later than" relationships \cite{J4}. This motivates them to develop the theory of \emph{relational structures}, where a pair of relations is used to capture concurrent behaviors.  The most well-known among the classes of relational structures  proposed by Janicki and Koutny is the class of \emph{stratified order structures} (\emph{so-structures}) \cite{GP,JK0,JK97}. A so-structure is a triple $(X,\prec,\sqsubset)$, where $\prec$ and $\sqsubset$ are binary relations on $X$. They were invented to model both the ``earlier than" 
(the relation $\prec$) and ``not later than" (the relation $\sqsubset$) relationships, under the assumption that system runs can be described using \emph{stratified partial orders}, i.e., step sequences. So-structures have been successfully used to give semantics of inhibitor and priority systems \cite{JK99,KK,JLM08,JLM08b}. 

The \emph{com}bined \emph{trace} (\emph{comtrace}) notion, introduced by Janicki and Koutny \cite{JK95}, generalizes the trace notion by utilizing step sequences instead of words. First the set of all possible steps that generates step sequences are identified by a relation $sim$, which is called {\em simultaneity}. Second a congruence relation is determined by a relation $ser$, which is called {\em serializability} and is in general \emph{not} symmetric. Then a comtrace is defined to be a  congruence class of step sequences. Comtraces were introduced as a formal language representation of so-structures to provide an operational semantics for  Petri nets with inhibitor arcs. Unfortunately comtraces have been less often known and applied than so-structures, even though in many cases they appear to be more natural. We believe one  reason is that the comtrace notion was too succinctly discussed in \cite{JK95} without a full treatment dedicated to comtrace theory. Motivated by this, Janicki and the author have devoted our recent effort on the study of comtraces \cite{Le08,JL11}, yet there are too many different aspects to explore and the truth is  that we could barely scratch the surface. In particular, a huge amount of results from  trace theory (e.g., from \cite{DR,DM}) need to be generalized to comtraces. These  tasks are often challenging since we are required to develop novel techniques to deal with the complex interactions of the ``earlier than'' and ``not later than'' relations.

\subsection{Motivation}
In the literature of Mazurkiewicz traces, traces are defined using the following three equivalent methods. The first method is to define a trace to be a congruence class of words, where the congruence relation is induced from an independency relation on events. In the second method, a trace can be viewed as  a \emph{dependence graph} (cf. \cite[Chapter 2]{DR}). A dependence graph is a directed acyclic  graph whose vertices are labeled with events, and satisfies the condition that every two distinct vertices with dependent labels must be connected by exactly one directed edge.  The third method is to define a trace as a labeled partially ordered set, whose elements are labeled with events,  and we also require  the partial order  to be ``compatible'' with the independency relation (see \defref{ltraces} for the precise formulation). Although the above three characterizations of traces can be shown to be equivalent, depending on the situation one characterization can be more convenient than the others. When studying graph-theoretics of traces, the dependence graph representation is the most natural. The treatment of traces as congruence classes of words is more convenient in Ochma\'nski's characterization of recognizable trace languages \cite[Chapter 6]{DR} and Zielonka's theory of asynchronous automata \cite[Chapter 7]{DR}. On the other hand most results on temporal logics for traces (see, e.g., \cite{TW02,Wal02,DG06a,DG06b,GK07}) utilize the labeled poset representation of traces. The reason is that it is more natural to interpret temporal logics on the vertices (for local temporal logics) or \emph{finite downward closed subsets} (for global temporal logics) of a labeled partially order set. Thus all of these three representations are indispensable in Mazurkiewicz trace theory.

Since our long-term goal is to generalize the results from  Mazurkiewicz trace theory to comtraces, there is a strong need for all three analogous representations for comtraces.  In \cite{JK95}, Janicki and Koutny  already gave us the definition of comtraces using congruence classes of step sequences. They also showed that every comtrace can be represented by a labeled so-structure, but  a direct method for defining comtraces using labeled so-structures was not given. Inspired by the  dependence graph representation of traces,  Kleijn and Koutny \cite{KK08}  recently introduced the \emph{combined dependency graph} (\emph{cd-graph}) notion, but a theorem showing that cd-graphs can be represented by comtraces was not given. Thus the goal of this paper is to complete the picture by giving  a  new characterization of comtraces using labeled so-structures and developed a unified framework to show the equivalence of these three representations of comtraces.

\subsection{Organization}
This paper is the revised and expanded version of the conference paper \cite{Le10}. Although no new results are added, several sections and  proofs are rewritten and more examples are included to improve readability of this paper. We also fix a few serious typos and mistakes found in the previous version. The paper is organized as follows.

In \secref{background}, we recall some preliminary definitions and notations. In \secref{survey}, we gives a concise exposition of the theory of so-structures and comtraces \cite{JK95,JK97} to make the paper self-contained. 

In \secref{lsos-comtrace}, we introduce the concept of quotient so-structure and use it to construct our definition of comtraces using labeled so-structure, which we called \emph{lsos-comtraces} to avoid confusion with the original comtrace notion by Janicki and Koutny in \cite{JK95}. We also give some remarks on how we arrived  at such definition. 

\secref{representation} contains the main technical contributions. We prove the first representation theorem which establishes bijections between the set of all comtraces and the set of all lsos-comtraces over the same comtrace alphabet. Then using this theorem, we prove the second representation theorem which provides bijections between the set of all lsos-comtraces and the set of all cd-graphs \cite{KK08} over the same alphabet. 

In \secref{op}, we define  \emph{composition} operators for lsos-comtraces and for cd-graphs, analogous to the comtrace concatenation operator, and show that the set of all lsos-comtraces (or cd-graphs) over a fixed comtrace alphabet together with its composition operator forms a monoid. We also strengthen the representation theorems from Section 5 by showing that the bijections from these theorems are indeed monoid isomorphisms.

Finally, \secref{conc} contains some final remarks and future work.

\section{Notation \label{sec:background}}

\subsection{Relations, orders and equivalences}

The \emph{powerset} of a set $X$ will be denoted by $\PS{X}$. We let $\emph{id}_X$ denote the \emph{identity relation} on a set $X$. We write $R\circ S$ to denote the \emph{composition} of relations $R$ and $S$. We write $R^{*}$ to denote the \emph{reflexive transitive closure} of $R$ respectively.

Let $f:A\rightarrow B$ be a function, then for every set $C\subseteq A$, we write $f[C]$ to denote the image of the set $C$ under $f$, i.e., $f[C] \df \set{f(x) \mid x\in C}$.

A binary relation $R\subseteq X\times X$ is an \emph{equivalence relation} relation on $X$ if and only if it is reflexive, symmetric and transitive. If $R$ is an equivalence relation, we write $[x]_R$ to denote the equivalence class of $x$ with respect to $R$, and the set of all equivalence classes in $X$  is denoted as $X/R$ and called the \emph{quotient set} of $X$ by $R$.  We drop the subscript and write $[x]$ to denote the equivalence class of $x$ when $R$ is clear from the context.

A binary relation $\prec \;\subseteq X \times X$ is a
{\em partial order} if and only if it is {\em irreflexive} and {\em transitive}.
The pair $(X,\prec)$ in this case is called a \emph{partially ordered set} (\emph{poset}). The pair $(X,\prec)$ is called a \emph{finite poset} if $X$ is finite. For convenience, we define:
\begin{align*}
\simeq_\prec  &\df  \bset{(a,b)\in X\times X\mid a\not\prec b \;\wedge\; b\not\prec a}& \text{\emph{(incomparable)}}\\
\frown_\prec  &\df \bset{(a,b)\in X\times X\mid  a \simeq_\prec b \;\wedge\; a\neq b} &
\text{\emph{(distinctly incomparable)}}\\
\prec^\frown  &\df \bset{(a,b)\in X\times X\mid a \prec b \;\vee\; a\frown_\prec b} & \text{\emph{(not greater)}}
\end{align*}

A poset $(X,\prec)$ is {\em total} if and only if $\frown_\prec$ is empty; and {\em stratified} if and only if $\simeq_\prec$ is an equivalence relation. Evidently every total order is stratified.

\subsection{Step sequences\label{sec:steps}}
For every finite set $E$, a set $\E \subseteq \PS{E} \setminus \set{\emptyset}$ can be seen as an alphabet. 
The elements of $\E$ are called {\em steps} and
the elements of $\E^*$ are called {\em step sequences}. For example, if the set of possible steps is 
$\E = \bset{ \{a,b,c\}, \{a,b\}, \{a\}, \{c\} }$, then
$\{a,b\}\{c\}\{a,b,c\} \in \E^*$ is a step sequence.
The triple $(\E^*,\ast,\epsilon)$, where $\_\,\ast\,\_$ denotes the step sequence concatenation operator (usually omitted) and $\epsilon$ denotes the empty step sequence, is a monoid.

Let $t=A_1\ldots A_k$ be a step sequence. We define  $|t|_a$, the number of occurrences of an event $a$ in $t$, as $|t|_a \df \bigl\lvert\bset{A_i\mid 1\le i \le k \wedge a\in A_i}\bigr\rvert$. Then we can construct its unique \emph{enumerated step sequence} $\h{t}$ as\\
\[\h{t}\df \h{A_1}\ldots \h{A_k}\text{, where } \h{A_i} \df \Bset{e^{(|A_1\ldots A_{i-1}|_e+1)}\bigl\lvert e\in A_i\bigr.}.\]
We will call such $\alpha=e^{(j)}\in \h{A_i} $ an \emph{event occurrence} of $e$. We let $\Sigma_t=\bigcup_{i=1}^k \h{A_i}$ denote the set of all event occurrences in all steps of $t$.   We also define $\ell:\Sigma_t\rightarrow E$ to be the function that returns the \emph{label} of $\alpha$ for each $\alpha \in \Sigma_t$. For example, if $\alpha=e^{(j)}$, then $\ell(\alpha)=\ell(e^{(j)})=e$.

\begin{example}\label{ex:ss1}
Given the step sequence   $t= \set{a,b}\set{b,c}\set{c,a}\set{a}$, 
then the enumerated step sequence is  
\begin{align*}
\overline{t}=\bset{a^{(1)},b^{(1)}}\bset{b^{(2)},c^{(1)}}\bset{a^{(2)},c^{(2)}}\bset{a^{(3)}}.
\label{eq:t2}
\end{align*}
The set of event occurrences is 
$\Sigma_t=\bset{ a^{(1)},a^{(2)},a^{(3)}, b^{(1)}, b^{(2)},
c^{(1)},c^{(2)} }.$
\EOD
\end{example}

Given a step sequence $s=B_{1}\ldots B_{m}$ and any function $f$ defined on $\bigcup_{i=1}^m B_i$, we define \[\map(f,s)\df f[B_{1}]\ldots f[B_{m}],\] which is  the step sequence derived from $s$ by computing the image of  each $A_{i}$ under the function $f$ successively. Using the function $\map$, from an enumerated step sequence $\h{t} = \h{A_1}\ldots \h{A_k}$, we can recover its step sequence $t =  \map(\ell, t)=  \ell[\,\h{A_1}\,]\ldots \ell[\,\h{A_k}\,].$

For each $\alpha \in \Sigma_t$, we let $pos_t(\alpha)$ denote the consecutive number of a step where $\alpha$ belongs, i.e., if $\alpha\in \h{A_i}$ then $pos_t(\alpha)=i$. For our example, $pos_t(a^{(2)})=3$ and  $pos_t(b^{(2)})=pos_t(c^{(1)})=2$. 

It is important to note that step sequences and stratified orders are interchangeable concepts. Given a step sequence $t$, define the binary relation $\lhd_t$  on $\Sigma_t$ as
\[\alpha \lhd_t \beta \iffdf pos_u(\alpha)<pos_t(\beta).\]

Intuitively, $\alpha \lhd_t \beta$ simply means $\alpha$ occurs before $\beta$ on the step sequence $u$. Thus, $\alpha \lhd_t^\frown \beta$ if and only if $(\alpha\not=\beta\wedge pos_t(\alpha)\le pos_t(\beta))$; and $\alpha\simeq_t \beta$ if and only if  $pos_t(\alpha)=pos_t(\beta)$. Obviously, the order $\lhd_t$ is stratified and we will call it the stratified order generated by the step sequence $t$. For instance, from the step sequence $t$ and its enumerated step sequence $\h{t}$ as given in  \exref{ss1}, the stratified order $\lhd_{t}$ is shown in \figref{so1} (the edges that can be inferred by transitivity are omitted).

\begin{figure} 
\begin{center}
\begin{footnotesize}
\begin{tikzpicture}[scale=1.5,->,>=stealth']
        \tikzstyle{every node}= [circle,fill=black!30,minimum size=20pt,inner sep=0pt]
 
	\foreach \pos/\name/\index in {{(0,0)/a/1}, {(0,-1)/b/1}, {(1,0)/b/2},
                            {(1,-1)/c/1}, {(2,0)/a/2}, {(2,-1)/c/2}, {(3,-0.5)/a/3}}
        \node (\name\index) at \pos {$\name^{(\index)}$};
        
        \foreach \sou/\des in { a1/b2, a1/c1,  b1/b2, b1/c1, b2/a2, b2/c2, c1/a2, c1/c2,a2/a3,c2/a3}
        		\path [thick,shorten >=1pt]  (\sou) edge (\des);
\end{tikzpicture}	
\end{footnotesize}
\end{center}
\caption{The stratified order $\lhd_{t}$ defined from the step sequence $t$ given in \exref{ss1}.}
\label{fig:so1}
\end{figure}

Conversely, let $\lhd$ be a stratified order on a set $\Sigma$. The set $\Sigma$ can be partitioned into a
sequence of equivalence classes $\Omega_\lhd=B_1\ldots B_k$ ($k\ge 0$) such that \[\lhd = \bigcup_{i<j}B_i\times B_j\text{\mbox{\hspace{4mm}} and \mbox{\hspace{4mm}} }\simeq_\lhd \;= \bigcup_{i}B_i\times B_i.\]
The sequence $\Omega_\lhd$ is called the step sequence \emph{representing} $\lhd$. For example, the let $\lhd$ be the stratified order in \figref{so1}, then 
\[\Omega_\lhd = \bset{a^{(1)},b^{(1)}}\bset{b^{(2)},c^{(1)}}\bset{a^{(2)},c^{(2)}}\bset{a^{(3)}},\]
which is exactly the enumerated step sequence $\h{t}$ given in \exref{ss1}. To get back the step sequence $t$, we need to get rid of all the superscripts  as follows:
\begin{align*}
\map(\ell,\h{t}\,) &= \ell\left[\bset{a^{(1)},b^{(1)}}\right]\ell\left[\bset{b^{(2)},c^{(1)}}\right]\ell\left[\bset{a^{(2)},c^{(2)}}\right]\ell\left[\bset{a^{(3)}}\right]\\
&=  \set{a,b}\set{b,c}\set{c,a}\set{a}
=  t
\end{align*}

\section{Stratified order structures and combined traces \label{sec:survey}}
In this section, we review the Janicki-Koutny theory of stratified order structures and comtraces from \cite{JK95,JK97}. This introduction might be too concise for readers who are not familiar with the subject, so we also refer to \cite{KK08} for an excellent introductory tutorial on traces, comtraces, partial orders and so-structures with many motivating examples.

\subsection{Stratified order structures}
A \emph{relational structure} is a triple $T=(X,R_1,R_2)$, where $X$ is a set and $R_1$, $R_2$ are binary relations on $X$. A relational structure $T'=(X',R'_1,R'_2)$ is an \emph{extension} of $T$, denoted as $T\subseteq T'$, if and only if $X=X'$, $R_1\subseteq R_1'$ and $R_2\subseteq R_2'$.

\begin{definition}[stratified order structure \cite{JK97}] \label{def:sos}
A \emph{stratified order structure} (\emph{so-structure}) is a relational structure $S=(X,\prec,\sqsubset),$ such that for all $\alpha,\beta,\gamma \in X$, the following hold:
\begin{align*}
\text{\textsf{S1:}\hspace{3mm}}& \alpha\not\sqsubset \alpha & \text{\textsf{S3:}\hspace{3mm}}& \alpha\sqsubset \beta \sqsubset \gamma \;\wedge\; \alpha \not= \gamma \implies \alpha\sqsubset \gamma\\
\text{\textsf{S2:}\hspace{3mm}}& \alpha \prec \beta \implies \alpha \sqsubset \beta \hspace{3mm}& \text{\textsf{S4:}\hspace{3mm}}&
\alpha\sqsubset \beta \prec \gamma \;\vee\; \alpha\prec \beta \sqsubset \gamma \implies \alpha\prec \gamma
\end{align*}
When $X$ is finite, $S$ is called a \emph{finite so-structure}. \EOD
\end{definition}

The axioms \textsf{S1}--\textsf{S4} imply that $\PO$ is a partial order and $\alpha\prec \beta\Rightarrow \beta\not\sq \alpha.$  The axioms \textsf{S1} and \textsf{S3} imply $\sq$ is a \emph{strict preorder}. The relation $\prec$ is called \textit{causality} and represents the ``earlier than" relationship, while the relation $\sqsubset$ is called \textit{weak causality} and represents the ``not later than" relationship. The axioms \textsf{S1}-\textsf{S4} model the mutual relationship between ``earlier than" and ``not later than" relations, provided that {\em system runs are modeled by stratified orders}.  Historically, the name ``stratified order structure'' came from the fact that stratified orders can be seen as a special kind of so-structures.

\begin{proposition}[cf. \cite{J4}] \label{prop:soss}
For every stratified poset $(X,\lhd)$, the triple $S_\lhd=(X,\lhd,\lhd^\frown)$ is an so-structure.
\end{proposition}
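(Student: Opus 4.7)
The plan is to verify each of the four axioms \textsf{S1}--\textsf{S4} in turn for the triple $S_\lhd = (X, \lhd, \lhd^\frown)$. Throughout, I will exploit the reformulation $\alpha \lhd^\frown \beta \iff \alpha \neq \beta \wedge \beta \not\lhd \alpha$, which follows directly from the definitions of $\frown_\lhd$ and $\lhd^\frown$, together with the fact that because $(X,\lhd)$ is stratified, $\simeq_\lhd$ is an equivalence relation.

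For \textsf{S1}, $\alpha \lhd^\frown \alpha$ would force either $\alpha \lhd \alpha$ (contradicting irreflexivity of the partial order $\lhd$) or $\alpha \frown_\lhd \alpha$ (which demands $\alpha \neq \alpha$); either way we get a contradiction. For \textsf{S2}, the inclusion $\lhd \,\subseteq\, \lhd^\frown$ is immediate from the defining disjunction of $\lhd^\frown$.

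The substantive cases are \textsf{S3} and \textsf{S4}. For \textsf{S3}, I would assume $\alpha \lhd^\frown \beta$, $\beta \lhd^\frown \gamma$, and $\alpha \neq \gamma$, then argue by contradiction from $\gamma \lhd \alpha$, splitting into four sub-cases according to which disjunct of $\lhd^\frown$ holds in each of the two hypotheses. The cases where at least one is $\lhd$ are handled by transitivity of $\lhd$ clashing with irreflexivity or with $\simeq_\lhd$; the case where both are $\simeq_\lhd$ is precisely where the stratified assumption matters, since transitivity of $\simeq_\lhd$ yields $\alpha \simeq_\lhd \gamma$, contradicting $\gamma \lhd \alpha$. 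For \textsf{S4}, I treat the two disjuncts $\alpha \lhd^\frown \beta \lhd \gamma$ and $\alpha \lhd \beta \lhd^\frown \gamma$ symmetrically: when the $\lhd^\frown$ step is actually $\lhd$, transitivity of $\lhd$ finishes; when it is $\simeq_\lhd$, ruling out both $\gamma \lhd \alpha$ (via transitivity of $\lhd$ against the $\simeq_\lhd$ step) and $\alpha \simeq_\lhd \gamma$ (via transitivity of $\simeq_\lhd$ against the $\lhd$ step) forces $\alpha \lhd \gamma$.

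The only obstacle I foresee is simply bookkeeping the case split in \textsf{S3} and \textsf{S4} without redundancy. There is no deep trick: the stratified hypothesis does exactly the work one would expect, namely supplying transitivity of $\simeq_\lhd$ in the cases that mix ``earlier than'' with ``not later than,'' and indeed this is what distinguishes so-structures induced by stratified orders from those induced by arbitrary partial orders.
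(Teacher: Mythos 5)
Your proof is correct. The paper offers no proof of this proposition at all---it is imported from the cited reference \cite{J4}---so there is no in-paper argument to compare against; your direct verification of \textsf{S1}--\textsf{S4}, using the reformulation $\alpha \lhd^\frown \beta \iff \alpha \neq \beta \wedge \beta \not\lhd \alpha$ and isolating transitivity of $\simeq_\lhd$ (the stratified hypothesis) exactly in the doubly-incomparable case of \textsf{S3} and the mixed cases of \textsf{S4}, is the standard argument and all the case analyses check out.
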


We next recall the notion of \emph{stratified order extension}. This concept is important for our purpose since the relationship between stratified orders and so-structures is analogous to the one between total orders and partial orders.

\begin{definition}[stratified extension \cite{JK97}] 
Let $S=(X,\prec,\sqsubset)$ be an so-structure. A {\em stratified} order $\lhd$ on $X$ is a {\em stratified extension} of $S$ if and only if $(X,\prec,\sqsubset)\subseteq (X,\lhd,\lhd^{\frown})$. The set of all stratified extensions of $S$  is denoted as  $ext(S)$.  \EOD
\label{def:extsos}
\end{definition}

Szpilrajn's Theorem \cite{Szp} states that every poset can be reconstructed by taking the intersection of all of its total order extensions. Janicki and Koutny showed that a similar result holds for so-structures and stratified extensions.

\begin{theorem}[{\cite{JK97}}] \label{theo:SzpStrat}
Let $S=(X,\PO,\sq)$ be an so-structure. Then
\[S=\left(X,\bigcap_{\lhd\;\in\; ext(S)}\lhd,\bigcap_{\lhd\;\in\; ext(S)}\lhd^\frown\right).\]
\end{theorem}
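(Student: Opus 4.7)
The plan is to prove the two inclusions separately. The containment
\[S \;\subseteq\; \Bigl(X,\; \bigcap_{\lhd\in ext(S)} \lhd,\; \bigcap_{\lhd\in ext(S)} \lhd^\frown\Bigr)\]
is immediate from \defref{extsos}: every $\lhd\in ext(S)$ satisfies $\prec\subseteq\lhd$ and $\sqsubset\subseteq\lhd^\frown$, and both inclusions survive intersection. The substantive direction is the reverse one, which I would establish contrapositively by proving the two separation claims
\begin{align*}
\text{(A)}\quad & \alpha \not\prec \beta \;\Longrightarrow\; \exists\,\lhd\in ext(S).\; \alpha \not\lhd \beta,\\
\text{(B)}\quad & \alpha \not\sqsubset \beta \;\Longrightarrow\; \exists\,\lhd\in ext(S).\; \alpha \not\lhd^\frown \beta.
\end{align*}

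The central device is an \emph{so-structure extension lemma}: whenever $\alpha \neq \beta$ and $\alpha \not\sqsubset \beta$ in $S$, there exists an so-structure $S' = (X,\prec',\sqsubset')$ with $S \subseteq S'$ and $\beta \prec' \alpha$. I would construct $\prec'$ by enlarging $\prec$ with the minimal set of pairs
\[R \;:=\; \bigl\{(\gamma,\delta) \mid (\gamma \sqsubset \beta \,\vee\, \gamma = \beta) \;\wedge\; (\alpha \sqsubset \delta \,\vee\, \alpha = \delta)\bigr\}\]
needed to host $(\beta,\alpha)$ consistently with S4, and correspondingly enlarge $\sqsubset$ to $\sqsubset \cup \prec'$. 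Axioms S2--S4 then hold by construction, so the heart of the argument is verifying S1: any would-be loop in $\prec'$ or $\sqsubset'$ must traverse the new edge $(\beta,\alpha)$ and, unwound via S2 and S4 in the original $S$, would force $\alpha \sqsubset \beta$, contradicting the hypothesis. A dual extension lemma --- adjoining $\beta \sqsubset' \alpha$ when $\alpha \sqsubset \beta$ but $\alpha \not\prec \beta$ --- handles the remaining subcase of (A) by placing $\alpha$ and $\beta$ in a common step of the eventual stratified extension.

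With these two lemmas in hand, I would finish via Zorn's lemma. The family of so-structures $S'\supseteq S$ that still exclude the offending pair, ordered by $\subseteq$, is clearly chain-inductive and therefore admits a maximal element $\widehat{S} = (X,\widehat{\prec},\widehat{\sqsubset})$. Maximality together with the two extension lemmas forces $\widehat{\prec}$ to be stratified: were $\simeq_{\widehat{\prec}}$ not an equivalence, some incomparable pair would admit a further nontrivial extension, contradicting maximality. By \propref{soss}, $\widehat{\prec}$ is the required element of $ext(S)$. I expect the main obstacle to be the S1 verification inside the extension lemma: ruling out every cycle the new edge $(\beta,\alpha)$ could close requires a careful case analysis that exploits $\alpha \not\sqsubset \beta$ in $S$ to block each candidate loop at its origin, and is the one place where the full force of axioms S2--S4 is genuinely needed.
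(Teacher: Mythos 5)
You cannot be faulted for not matching the paper's proof here, because the paper does not prove this theorem at all: it is imported verbatim from \cite{JK97}, with only the remark that the infinite case requires a form of the axiom of choice. Measured against the actual proof in that cited source, your reconstruction is essentially the genuine article --- the Szpilrajn-style reduction to two separation claims, one-pair extension lemmas, and a Zorn/maximality argument is exactly how Janicki and Koutny proceed --- and I find no fatal gap, only three compressed steps that a full write-up must expand. First, your closure $R$ does work for adjoining $\beta \prec' \alpha$ (a reflexive pair in $R$ would give $\alpha \sqsubset \gamma \sqsubset \beta$ up to equalities, collapsing via \textsf{S3} to $\alpha \sqsubset \beta$, contradicting the hypothesis), but the dual lemma is not merely a mirror image: adjoining $\beta \sqsubset' \alpha$ also forces \emph{new $\prec$-pairs} through \textsf{S4} (e.g.\ $\gamma \prec \beta \sqsubset' \alpha$ yields $\gamma \prec' \alpha$), so both relations grow, and its \textsf{S1} check rests on $\alpha \not\prec \beta$ --- a cycle through the new edge containing a $\prec$-edge collapses, by \propref{so-cl}(4), to $\alpha \prec \beta$ --- while pure $\sqsubset$-cycles through the new edge are harmless, since so-structures permit two-cycles in $\sqsubset$. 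Second, maximality must deliver more than ``$\simeq_{\widehat{\prec}}$ is an equivalence'': to get $\widehat{\prec} \in ext(S)$ you need $\widehat{\sqsubset} \subseteq \widehat{\prec}^{\frown}$, which follows because maximality plus your lemma for (B) gives, for all $\alpha \neq \beta$, that $\alpha \mathrel{\widehat{\sqsubset}} \beta$ or $\beta \mathrel{\widehat{\prec}} \alpha$, and the dual lemma makes $\widehat{\sqsubset} \setminus \widehat{\prec}$ symmetric; only with both facts does \propref{soss} together with maximality yield $\widehat{S} = (X,\widehat{\prec},\widehat{\prec}^{\frown})$, and one must also note that the excluded pair survives every further extension (once $\beta \prec \alpha$ holds, no so-structure extension can add $\alpha \sqsubset \beta$, by the property $\alpha \prec \beta \Rightarrow \beta \not\sqsubset \alpha$ noted after \defref{sos}). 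Third, the displayed identity tacitly presupposes $ext(S) \neq \emptyset$; your machinery delivers this --- run the Zorn argument with no excluded pair --- but it deserves a sentence.
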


This theorem holds even when $X$ is infinite, and its proof requires some version of the axiom of choice. But we are only concerned with finite so-structures in this paper.  Using this theorem, we can show the following properties relating so-structures with their stratified extensions.
\begin{corollary} \label{cor:SzpStrat}
For every so-structure $S=(X,\PO,\sq)$,
\begin{enumerate}
 \item $\bigl(\exists \lhd\in ext(S),\ \alpha\lhd\beta\bigr)\wedge\bigl(\exists \lhd\in ext(S),\ \beta\lhd\alpha\bigr)\implies \bigl(\exists \lhd\in ext(S),\ \beta\frown_\lhd\alpha\bigr).$
 \item $\bigl(\forall \lhd\in ext(S),\ \alpha\lhd\beta \vee \beta\lhd\alpha\bigr)\iff \alpha \PO \beta \vee \beta \PO \alpha.$
\end{enumerate}
\end{corollary}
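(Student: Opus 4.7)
\emph{Part 1.} The hypothesis supplies $\lhd_1,\lhd_2 \in ext(S)$ with $\alpha \lhd_1 \beta$ and $\beta \lhd_2 \alpha$; in particular $\alpha \neq \beta$. First, I would use \theoref{SzpStrat} to deduce that $\alpha \not\sqsubset \beta$ and $\beta \not\sqsubset \alpha$: from $\beta \lhd_2 \alpha$, the fact that $\lhd_2$ is a stratified order gives $\alpha \not\lhd_2^\frown \beta$, and since $\sqsubset = \bigcap_{\lhd \in ext(S)} \lhd^\frown$, this forces $\alpha \not\sqsubset \beta$; the symmetric argument using $\lhd_1$ yields $\beta \not\sqsubset \alpha$.

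Next, I would build a relational structure $S'=(X,\prec',\sqsubset')$ that is still an so-structure, extends $S$, and satisfies $\alpha \sqsubset' \beta$ and $\beta \sqsubset' \alpha$. Concretely, start from $\sqsubset^{(0)} := \sqsubset \cup \{(\alpha,\beta),(\beta,\alpha)\}$ and $\prec^{(0)} := \prec$, then iteratively apply the inference rules implicit in \textsf{S2}, \textsf{S3}, and \textsf{S4} until a fixed point $(\prec',\sqsubset')$ is reached. The main obstacle is verifying that \textsf{S1} survives the closure: no rule ever adds a reflexive $\sqsubset'$-pair, but for $\prec'$ one must exclude any derived cycle $\gamma \prec' \cdots \prec' \gamma$. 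Such a cycle would necessarily pass through one of the freshly added edges $(\alpha,\beta)$ or $(\beta,\alpha)$, and unwinding it using \textsf{S2}--\textsf{S4} would force one of $\alpha \sqsubset \beta$, $\beta \sqsubset \alpha$, or $\gamma \prec \gamma$ to already hold in $S$, each contradicting the previous paragraph or \textsf{S1} for $S$. Once $S'$ is built, \theoref{SzpStrat} applied to $S'$ yields a stratified extension $\lhd \in ext(S')$, and since $S \subseteq S'$ we also have $\lhd \in ext(S)$; moreover, $\alpha \sqsubset' \beta$ and $\beta \sqsubset' \alpha$ give $\alpha \lhd^\frown \beta$ and $\beta \lhd^\frown \alpha$, which for distinct $\alpha,\beta$ force $\alpha \frown_\lhd \beta$, as required.

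\emph{Part 2.} The direction $(\Leftarrow)$ is immediate from $\prec \subseteq \lhd$ for every $\lhd \in ext(S)$. For $(\Rightarrow)$ I would contrapose: suppose $\alpha \not\prec \beta$ and $\beta \not\prec \alpha$. The case $\alpha = \beta$ is trivial by irreflexivity. Otherwise \theoref{SzpStrat} supplies $\lhd_a,\lhd_b \in ext(S)$ with $\alpha \not\lhd_a \beta$ and $\beta \not\lhd_b \alpha$; if either extension already witnesses both non-relations we are done, and otherwise $\beta \lhd_a \alpha$ together with $\alpha \lhd_b \beta$ is exactly the premise of Part~1, which then delivers an extension $\lhd$ with $\alpha \frown_\lhd \beta$, i.e., $\alpha \not\lhd \beta$ and $\beta \not\lhd \alpha$.
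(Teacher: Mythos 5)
Your proof is correct, but it is genuinely more self-contained than the paper's: for assertion (1) the paper offers no argument at all, simply citing Theorem 3.6 of \cite{JK97}, whereas you reconstruct the underlying Szpilrajn-style one-step extension argument; your assertion (2) — contraposition, extracting witnesses $\lhd_a,\lhd_b$ from \theoref{SzpStrat} and feeding them into (1) — is exactly the paper's one-line proof spelled out. Two remarks on your reconstruction of (1). First, your ad hoc fixed-point closure is precisely the paper's $\lozenge$-closure (\defref{SO-CL}) applied to $\bigl(X,\PO,\sq\cup\set{(\alpha,\beta),(\beta,\alpha)}\bigr)$, and \propref{so-cl} (3) already packages the verification: irreflexivity of the second component is automatic (it is $(R_1\cup R_2)^*\setminus id_X$), so everything reduces to irreflexivity of $(R_1\cup R_2)^*\circ R_1\circ (R_1\cup R_2)^*$, i.e., your cycle analysis. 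Second, that analysis, stated by you in one sentence, does complete: since $S=S^\lozenge$ (\propref{so-cl} (4)), a cycle through a $\PO$-edge avoiding both new edges yields $\gamma\PO\gamma$; a segment of old edges between consecutive uses of new edges running from $\beta$ to $\alpha$ (resp.\ from $\alpha$ to $\beta$) yields $\beta\sq\alpha$ (resp.\ $\alpha\sq\beta$), contradicting what your first paragraph established; and if every segment is a loop at $\alpha$ or at $\beta$, the segment containing the mandatory $\PO$-edge yields $\gamma\PO\gamma$ for $\gamma\in\set{\alpha,\beta}$. A small simplification you missed: $\sq\;\subseteq\;\lhd^\frown$ for each $\lhd\in ext(S)$ is immediate from \defref{extsos}, so the full strength of \theoref{SzpStrat} is not needed to get $\alpha\not\sq\beta$ and $\beta\not\sq\alpha$. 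What your route buys is a proof of (1) internal to the paper, using only machinery already present (\defref{SO-CL}, \propref{so-cl}, \theoref{SzpStrat}); what the paper's citation buys is brevity, at the cost that the corollary's key combinatorial step lives outside the paper.
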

\begin{proof}\textbf{1. } See \cite[Theorem 3.6]{JK97}. \\
\textbf{2. } Follows from \textbf{1.} and \theoref{SzpStrat}. 
\qed
\end{proof}

\subsection{Combined traces}

{\em Comtraces} (\emph{Com}bined \emph{traces}) were introduced in \cite{JK95} as a generalization of traces to represent so-structures. The \emph{comtrace congruence} is defined via two binary relations {\em simultaneity} and {\em serializability} on a finite set of events.

\begin{definition}[comtrace alphabet \cite{JK95}] \label{def:comalpha}
Let $E$ be a finite set of events, and  $ser \subseteq sim \subset E\times E$  two relations called \emph{serializability} and \emph{simultaneity} respectively, where $sim$ is irreflexive and symmetric. The triple $\theta = (E,sim,ser)$ is called a \emph{comtrace alphabet}.  \EOD
\end{definition}

Note that since $sim$ is irreflexive and $ser\subseteq sim$, it follows that the relation $ser$ is also irreflexive. Intuitively, if $(a,b)\in sim$ then $a$ and $b$ may occur simultaneously with each other in a step $\set{a,b}$. If $(a,b)\in ser$, then $a$ and $b$ may occur together in a step $\set{a,b}$ and, moreover,  such step can be split into the sequence $\set{a}\set{b}$. A step can involve more than two events as long as all events within the same step are pairwise related by the simultaneity relation. More formally, we define the set of all possible  steps $\E_\theta$ induced from the simultaneity relation $sim$ to be the set of all cliques of
the graph $(E,sim)$, i.e.,
\[\E_{\theta} \df \bset{ A \mid A\neq\emptyset \;\wedge\; \forall a,b\in A, \bigl(a=b \vee (a,b)\in sim\bigr)}.\]

\begin{definition}[comtrace congruence \cite{JK95}] \label{def:commonoid}
For a comtrace alphabet $\theta=(E,sim,ser)$, we define $\eqa_{\theta}\;\subseteq\; \E_{\theta}^*\times\E_{\theta}^*$ to be the relation comprising all pairs $(t,u)$ of step sequences such that 
\begin{align*}
t&=wAz\\ 
u&=wBCz
\end{align*}
where $w,z\in\E_{\theta}^*$ and $A$, $B$, $C$ are steps in $\E_{\theta}$ satisfying $B\cup C \;= \;A$ and $B\times C\;\subseteq\; ser$. 

We define \emph{comtrace congruence} $\eqb_{\theta}\;\df \left(\eqa_{\theta}\cup\eqa^{-1}_{\theta}\right)^*$, and the equivalence classes in $\E^*_{\theta}/\!\!\equiv_{\theta}$ are called \emph{comtraces}. 

We define the comtrace concatenation operator $\_\circledast\_$ as $[r]\circledast[t] \df [r\ast t]$. The quotient monoid $(\E^*_{\theta}/\!\!\equiv_{\theta},\circledast,[\epsilon])$ is called the comtrace monoid over the comtrace alphabet $\theta$.  \EOD
\end{definition}

Note that since $ser$ is irreflexive, $B\times C\subseteq ser$ implies that $B\cap C=\emptyset$. The fact that the comtrace concatenation operator $\_\circledast\_$ is well-defined was shown in \cite[Proposition 4.14]{JK95}.  We will omit the subscript $\theta$ from $\eqa_{\theta}$ and $\eqb_{\theta}$, and write $\equiv$ and $\eqa$ when it causes no ambiguity. To shorten our notations, we often write $[s]_{\theta}$ or $[s]$ instead of $[s]_{\eqb_{\theta}}$ to denote the comtrace generated by the step sequence $s$ over the comtrace alphabet $\theta$.

\begin{example} \label{ex:comtrace1}
Consider three  atomic operations $a$, $b$ and $c$ as follows
\[a:\;\; y\leftarrow x+y \myspace{1,5} b:\;\; x\leftarrow y+2\myspace{1.5}  c:\;\; y \leftarrow y+1\]
Assume simultaneous reading is allowed, but simultaneous writing is not allowed. Then the events $b$ and $c$ can be performed simultaneously, and the execution of the step $\set{b,c}$ gives the same outcome as executing $b$ followed by $c$. The events $a$ and $b$ can also be performed simultaneously, but the outcome of executing the step $\set{a,b}$ is not the same as executing $a$ followed by $b$, or $b$ followed by $a$. Note that although executing the steps $\set{a,b}$ and $\set{b,c}$ is allowed, we cannot execute the step $\set{a,c}$ since that would require writing on the same variable $y$. Thus, the simultaneity relation $sim$ is not transitive. 

Let $E=\set{a,b,c}$ be the set of events. Then we can define the comtrace alphabet $\theta=(E,sim,ser)$, where $sim = \bset{(a,b),(b,a),(b,c),(c,b)}$ and $ser=\set{(b,c)}$. Thus the set of all possible steps is \[\E_{\theta}=\bset{\{a\},\{b\},\{c\},\set{a,b},\set{b,c}}.\]
We observe that the set $\textbf{t}=[\{a\}\set{a,b}\{b,c\}] =\bset{ \{a\}\set{a,b}\{b,c\},\{a\}\set{a,b}\{b\}\{c\}}$ is a comtrace.
But the step sequence $\{a\}\set{a,b}\{c\}\{b\}$ is not an element of $\textbf{t}$ because $(c,b)\not\in ser$.  \EOD
\end{example}

Even though traces correspond to quotient monoids over sequences and comtraces correspond to
quotient monoids over step sequences, traces can be regarded as a special kind
of comtraces when the relation $ser=sim$. For a more detailed discussion on this connection between traces and comtraces, the reader is referred to \cite{JL11}.

\begin{definition}[\cite{JK95}] \label{def:s2inv}
Let $u\in \E_{\theta}^*$. We define the relations $\prec_u,\sqsubset_u \subseteq \Sigma_{u}\times \Sigma_{u}$ as
\begin{enumerate}
\item
$\alpha \prec_u \beta \iffdf \alpha \lhd_u \beta \wedge (\ell(\alpha),\ell(\beta))\notin ser,$

\item
$ \alpha \sqsubset_u \beta \iffdf \alpha \lhd_u^\frown \beta \wedge (\ell(\beta),\ell(\alpha))\notin ser$.  \EOD
\end{enumerate}
\end{definition}

It is worth noting that the structure $( \Sigma_{u}, \prec_u ,\sqsubset_u, \ell )$ is exactly the \emph{cd-graph} (cf. \defref{comdag}) that represents the comtrace $[u]$. This gives us some intuition on how Kleijn and Koutny  constructed their cd-graph definition in \cite{KK08}. The structure $( \Sigma_{u}, \prec_u ,\sqsubset_u )$ is usually \emph{not} an so-structure since $\PO_u$ and $\sq_u$ describe only basic ``local'' causality and weak causality invariants of the event occurrences of $u$ by considering pairwise serializable relationships of event occurrences, and thus $\PO_u$ and $\sq_u$ might not capture ``global'' invariants that can be inferred from \textsf{S2}-\textsf{S4} of \defref{sos}. To ensure  all invariants are included, we need the following $\lozenge$-closure operator. 

\begin{definition}[\cite{JK95}]\label{def:SO-CL}
For every relational structure $S=(X,R_1,R_2)$, we define $S^\lozenge$ as
\[S^\lozenge \df \bigl(X,(R_1\cup R_2)^* \circ R_1 \circ (R_1\cup R_2)^*,(R_1\cup R_2)^* \setminus id_X\bigr).\]  
\EOD
\end{definition}

Intuitively the $\lozenge$-closure generalizes the transitive closure for relations to relational structures. The motivation is that for appropriate  relations $R_1$ and $R_2$ (see assertion (3) of \propref{so-cl} below), the relational structure $(X,R_1,R_2)^\lozenge$ is an so-structure. The $\lozenge$-closure satisfies the following properties.

\begin{proposition}[\cite{JK95}] \label{prop:so-cl}
Let $S=(X,R_1,R_2)$ be a relational structure.
\begin{enumerate}
\item If $R_2$ is irreflexive then $S\subseteq S^\lozenge$.
\item $(S^\lozenge)^\lozenge = S^\lozenge$.
\item $S^\lozenge$ is an so-structure if and only if 
$(R_1\cup R_2)^* \circ R_1 \circ (R_1\cup R_2)^*$
is irreflexive.
\item If $S$ is an so-structure then $S=S^\lozenge$.
\item If $S$ is an so-structure and $S_0 \subseteq S$, then $S_0^{\lozenge}\subseteq S$ and $S_0^{\lozenge}$ is an so-structure. \hspace{-5mm}
\end{enumerate}
\end{proposition}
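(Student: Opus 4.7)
The plan is to verify the five parts in order, leveraging each to prove the next. The proofs are largely diagram-chasing with $(R_1\cup R_2)^*$, so I will set $R_1' \df (R_1\cup R_2)^* \circ R_1 \circ (R_1\cup R_2)^*$ and $R_2' \df (R_1\cup R_2)^* \setminus id_X$ at the outset and work componentwise.

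For \textbf{(1)}, I would note that $id_X \subseteq (R_1\cup R_2)^*$ by definition of reflexive transitive closure, so $R_1 \subseteq R_1'$ by composing with identities on both sides; and $R_2 \subseteq (R_1\cup R_2)^*$, while irreflexivity of $R_2$ ensures $R_2 \cap id_X = \emptyset$, giving $R_2 \subseteq R_2'$. Part \textbf{(2)} is routine: idempotence of reflexive transitive closure gives $((R_1'\cup R_2')\cup \cdots)^* = (R_1\cup R_2)^*$ because both $R_1'$ and $R_2'$ are contained in $(R_1\cup R_2)^*$ and contain $R_1, R_2$ (when $R_2$ is irreflexive; if not, one still checks directly that the $\lozenge$-expansion stabilizes), from which both components of $(S^\lozenge)^\lozenge$ coincide with those of $S^\lozenge$.

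For \textbf{(3)}, I would verify the four so-structure axioms for $S^\lozenge = (X, R_1', R_2')$. Axiom \textsf{S1} ($R_2'$ irreflexive) is immediate since we subtract $id_X$. Axiom \textsf{S3} ($R_2'$ is a strict preorder modulo equal endpoints) follows from transitivity of $(R_1\cup R_2)^*$. Axiom \textsf{S4} is the key structural step: a chain $\alpha\,R_2'\,\beta\,R_1'\,\gamma$ produces a path in $(R_1\cup R_2)^*$ with at least one $R_1$ edge in the middle, hence lies in $R_1'$; symmetric argument gives the other half. Finally \textsf{S2} ($R_1' \subseteq R_2'$) holds precisely when no element of $R_1'$ has equal endpoints, i.e.\ exactly when $R_1'$ is irreflexive. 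Thus the stated irreflexivity condition is both necessary (it is just \textsf{S1}$\wedge$\textsf{S2} read off $R_1'$) and, together with the other automatic properties, sufficient.

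For \textbf{(4)}, assuming $S=(X,\prec,\sqsubset)$ is an so-structure, the nontrivial direction is $S^\lozenge \subseteq S$, which I would prove by a short induction on the length of a witnessing $(R_1\cup R_2)^*$-path. For $R_2' \subseteq \sqsubset$, repeatedly apply \textsf{S2} (to convert $\prec$-edges to $\sqsubset$-edges) and \textsf{S3} (to concatenate), handling the ``equal endpoints forbidden'' case by the subtraction of $id_X$. For $R_1' \subseteq \prec$, once the prefix and suffix paths are contained in $\sqsubset \cup id_X$, use \textsf{S4} to absorb them into the single $\prec$-edge in the middle. The reverse inclusion $S \subseteq S^\lozenge$ is part \textbf{(1)} because \textsf{S1} says $R_2=\sqsubset$ is irreflexive. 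Part \textbf{(5)} then drops out by monotonicity of the $\lozenge$ operator applied to part \textbf{(4)}: $S_0 \subseteq S$ gives $S_0^\lozenge \subseteq S^\lozenge = S$, and since the causal component of $S_0^\lozenge$ embeds into the irreflexive partial order $\prec$ of $S$, it is irreflexive, so part \textbf{(3)} certifies $S_0^\lozenge$ is an so-structure.

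The main obstacle I anticipate is the bookkeeping for \textsf{S4} inside part \textbf{(3)}, where one has to be careful that a composed path may split into several possible decompositions around the mandatory $R_1$ edge; a clean way is to fix the leftmost $R_1$-occurrence and argue that everything before it is in $R_2' \cup id_X$ and everything after it lies in $(R_1\cup R_2)^*$.
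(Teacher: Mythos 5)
Your verification is sound, but note that the paper contains no proof of this proposition to compare against: it is stated as an imported result, cited to Janicki and Koutny \cite{JK95}, so a self-contained argument like yours is exactly what the reference supplies. Two remarks on your argument itself. First, the obstacle you anticipate for \textsf{S4} in part (3) is not actually there: writing $Q := (R_1\cup R_2)^*$, the relation $Q$ is a preorder, so $Q\circ Q = Q$, and hence $R_2'\circ R_1' \subseteq Q\circ\bigl(Q\circ R_1\circ Q\bigr) = Q\circ R_1\circ Q = R_1'$ in one line, and symmetrically for $R_1'\circ R_2'$; no choice of a leftmost $R_1$-edge or path decomposition is needed, since relational composition does the bookkeeping for you. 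Second, part (2) holds unconditionally, so your hedge about irreflexivity of $R_2$ can be dropped: one always has $R_1'\cup R_2'\subseteq Q$ and, conversely, $Q = R_2'\cup (Q\cap id_X)\subseteq R_2'\cup id_X$, whence $(R_1'\cup R_2')^* = Q$; both components of $(S^\lozenge)^\lozenge$ then collapse to those of $S^\lozenge$ by $Q\circ Q = Q$. Similarly, in part (4) your induction can be replaced by the algebraic observation that \textsf{S1} and \textsf{S3} make $\sqsubset\cup\, id_X$ a preorder, so $(\prec\cup\sqsubset)^* \subseteq\; \sqsubset\cup\, id_X$ using \textsf{S2}, after which \textsf{S4} absorbs the two flanking factors of $R_1'$ into the middle $\prec$-edge. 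Everything else — the necessity direction of (3) via \textsf{S1}$\wedge$\textsf{S2}, and deriving (5) from monotonicity of $\lozenge$ together with (4) and (3) — is correct as written.
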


\begin{definition}\label{def:s2sos}
Given a step sequence  $u\in \E_{\theta}^*$ and its respective comtrace $\mathbf{t}=[u]\in \E_{\theta}^*/\!\equiv$, we define the relational structure $S_{\mathbf{t}}$ as: 
\[S_{\mathbf{t}}  = \bigl( \Sigma_{\mathbf{t}}, \prec_{\mathbf{t}} ,\sqsubset_{\mathbf{t}} \bigr)\df \bigl( \Sigma_{u}, \prec_u ,\sqsubset_u \bigr)^\lozenge.\]
\EOD
\end{definition}

The relational structure $S_{\mathbf{t}}$ is called the \emph{so-structure defined by the comtrace} $\mathbf{t}=[u]$, where $\Sigma_{\mathbf{t}}$, $\prec_{\mathbf{t}}$  and $\sqsubset_{\mathbf{t}}$ are used to denote the event occurrence set, causality relation and weak causality relation induced by the comtrace $\mathbf{t}$ respectively.  The following nontrivial theorem and its corollary justifies the name by showing that  step sequences in a comtrace $\mathbf{t}$ are exactly the stratified extensions  of the so-structure $S_\mathbf{t}$, and that $S_{\mathbf{t}}$ is uniquely defined for the comtrace $\mathbf{t}$ regardless of the choice of the step sequence $u\in \mathbf{t}$. 

\begin{theorem}[\cite{JK95}] \label{theo:com2sos}
For each comtrace $\mathbf{t}\in \quotient{\E^*_{\theta}}{\equiv}$, the structure $S_{\mathbf{t}}$ is an so-structure and the set $ext\bigl(S_{\mathbf{t}}\bigr)$ of stratified extesions of $S_{\mathbf{t}}$ is exactly the same as $\bset{ \lhd_u \mid u \in \mathbf{t} }$, the set  of stratified orders induced by the step sequences in the comtrace $\mathbf{t}$. 
\end{theorem}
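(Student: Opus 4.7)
The plan is to prove the theorem in three stages: (A) $S_{\mathbf{t}}$ is a well-defined so-structure, independent of the representative $u\in\mathbf{t}$ used in \defref{s2sos}; (B) every $\lhd_u$ with $u\in\mathbf{t}$ is a stratified extension of $S_{\mathbf{t}}$; and (C) conversely, every $\lhd\in \It{ext}(S_{\mathbf{t}})$ arises as $\lhd_v$ for some $v\in\mathbf{t}$. Stages (A) and (B) fall out cleanly from \propref{so-cl}; the real work is in (C).

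For (A), fix a representative $u$. By \defref{s2inv} one has $\prec_u\subseteq\lhd_u$ and $\sqsubset_u\subseteq\lhd_u^\frown$, and since $\lhd_u$ is stratified the absorption laws $\lhd_u^\frown\circ\lhd_u\subseteq\lhd_u$ and $\lhd_u\circ\lhd_u^\frown\subseteq\lhd_u$ hold. Consequently $(\prec_u\cup\sqsubset_u)^*\circ\prec_u\circ(\prec_u\cup\sqsubset_u)^*$ is contained in the irreflexive relation $\lhd_u$, so \propref{so-cl}(3) yields that $(\Sigma_u,\prec_u,\sqsubset_u)^\lozenge$ is an so-structure. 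For representative-independence I would reduce by induction on $\eqb$ to the one-step generator $u=wAz\eqa wBCz=v$ and check directly that $\prec_u=\prec_v$ and $\sqsubset_u=\sqsubset_v$: the only occurrence pairs that change $\lhd$-position are pairs $(\alpha,\beta)$ with $\ell(\alpha)\in B$, $\ell(\beta)\in C$, and for those the hypothesis $B\times C\subseteq ser$ combined with \defref{s2inv} yields identical $\prec$- and $\sqsubset$-membership in both sequences.

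For (B), observe that $(\Sigma_u,\lhd_u,\lhd_u^\frown)$ is itself an so-structure by \propref{soss} and contains $(\Sigma_u,\prec_u,\sqsubset_u)$; hence \propref{so-cl}(5) gives $S_{\mathbf{t}}=(\prec_u,\sqsubset_u)^\lozenge\subseteq(\lhd_u,\lhd_u^\frown)$, which is exactly $\lhd_u\in \It{ext}(S_{\mathbf{t}})$.

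Stage (C) is the main obstacle. Given $\lhd\in \It{ext}(S_{\mathbf{t}})$, set $v=\map(\ell,\Omega_\lhd)$ and fix some $u\in\mathbf{t}$; the goal is $u\eqb v$. I would argue by induction on the first position where $u$ and $v$ disagree, exhibiting at each step a single $\eqa$- or $\eqa^{-1}$-move that strictly decreases this measure. The crucial case is when $u$'s step $A_i$ splits as $B\cup C$ with $v$ retaining only $B$ at position $i$; for the split to be a legal congruence move one must verify $B\times C\subseteq ser$, which is the technical heart of the proof. Given $(e,f)\in B\times C$ and occurrences $\alpha,\beta\in A_i$ with $\ell(\alpha)=e$, $\ell(\beta)=f$, the assumption $(e,f)\notin ser$ together with \defref{s2inv} forces either $\alpha\sqsubset_u\beta$ or $\beta\sqsubset_u\alpha$, hence the analogous $\sqsubset_{\mathbf{t}}$-relation after $\lozenge$-closure; but $\lhd$ places $\alpha$ strictly before $\beta$, while $\sqsubset_{\mathbf{t}}\subseteq\lhd^\frown$ demands $\beta$ not later than $\alpha$, yielding the contradiction. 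The dual case, where $v$ merges events from $A_i$ and $A_{i+1}$, is handled symmetrically via $\eqa^{-1}$ using that every step of $v$ is a clique of $sim$; the principal bookkeeping challenge is choosing a measure that strictly decreases under either move.
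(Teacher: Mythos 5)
First, note that the paper itself does not prove this theorem: it is imported wholesale from Janicki--Koutny \cite{JK95} (the paper only uses it, e.g.\ in Lemmas~\ref{lem:l2} and \ref{lem:l3}), so the comparison here is with the standard proof in the literature rather than with anything in this text. Your stages (A) and (B) are correct and essentially complete. The absorption argument showing that $(\prec_u\cup\sqsubset_u)^*\circ\prec_u\circ(\prec_u\cup\sqsubset_u)^*\subseteq\lhd_u$ (position weakly increases along $\lhd_u^\frown$-steps and strictly at the $\lhd_u$-step) correctly feeds \propref{so-cl}(3), and your one-step check that $\prec_u=\prec_v$ and $\sqsubset_u=\sqsubset_v$ across a single $\eqa$-move is exactly the content of \cite[Lemma 4.7]{JK95}, which the present paper itself invokes later when proving well-definedness of $\cd$. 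Stage (B) via \propref{prop:soss} and \propref{so-cl}(5) is the standard argument. In stage (C), your key technical verification is also the right one: if $(e,f)\notin ser$ for $\alpha,\beta$ in the same step of $u$, then $\beta\sqsubset_u\beta\!\!$-style membership (precisely, $\beta\sqsubset_u\alpha$, since both $\alpha\lhd_u^\frown\beta$ and $\beta\lhd_u^\frown\alpha$ hold) forces $\beta\sqsubset_{\mathbf{t}}\alpha$, contradicting $\alpha\lhd\beta$ via $\sqsubset_{\mathbf{t}}\subseteq\lhd^\frown$; your ``either $\alpha\sqsubset_u\beta$ or $\beta\sqsubset_u\alpha$'' is slightly loose but the contradiction you draw is the correct one.

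The genuine gap is the induction scheme in (C). A single $\eqa$- or $\eqa^{-1}$-move does \emph{not} in general strictly decrease the ``first position of disagreement'' between $\overline{u}$ and $\Omega_\lhd$: the target step $\Delta_i$ of $\Omega_\lhd$ need not be a subset of $\overline{A_i}$, so elements of $\Delta_i$ may live in several later steps of $u$, and assembling $\Delta_i$ at position $i$ requires a whole sequence of splits and merges during which your measure is stationary. You acknowledge the measure is unresolved, but this is precisely where the proof's real work lies: one needs either a finer (e.g.\ lexicographic) measure, or a restructured induction --- induct on the number of steps of $\Omega_\lhd$, and prove as a sub-lemma that $u\eqb \ell[\Delta_1]\,u'$ for some $u'$, i.e.\ that every element of $\Delta_1$ can be commuted to the front, each local swap/merge being legitimized by the same $ser$-violation argument you give (if a needed swap were illegal, a $\prec_u$- or $\sqsubset_u$-constraint would survive $\lozenge$-closure and contradict $\lhd\in ext(S_{\mathbf{t}})$). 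A second, smaller omission: before arguing $u\eqb v$ you must check that $v=\map(\ell,\Omega_\lhd)$ is a legal step sequence over $\E_\theta$ and that its enumerated form is $\Omega_\lhd$ itself --- i.e.\ each step of $\Omega_\lhd$ is a $sim$-clique with no repeated labels (via \textsf{S}-style reasoning from $\prec_u$ when $(\ell(\alpha),\ell(\beta))\notin sim$), and occurrences $e^{(i)}, e^{(i+1)}$ appear in increasing order in every extension because $ser$ is irreflexive, so $e^{(i)}\prec_u e^{(i+1)}$. You gesture at the clique property only in the dual case; it is needed up front.
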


\begin{corollary} \label{cor:com2sos}
For all comtraces $\mathbf{t},\mathbf{q}\in \quotient{\E^*_{\theta}}{\equiv}$,
\begin{enumerate}
\item $\mathbf{t}=\mathbf{q}$ if and only if $S_{\mathbf{t}}=S_{\mathbf{q}}$ 
\item $ S_{\mathbf{t}}  = \bigl( \Sigma_{\mathbf{t}}, \prec_{\mathbf{t}} ,\sqsubset_{\mathbf{t}} \bigr) = \Bigl( \Sigma_{\mathbf{t}}, \bigcap_{w\in \mathbf{t}}\lhd_{w} ,\bigcap_{w\in \mathbf{t}}\lhd_{w}^{\frown} \Bigr)$ 
\end{enumerate}
\end{corollary}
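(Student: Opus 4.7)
The plan is to derive both parts of the corollary by combining Theorem~\ref{theo:com2sos} (which identifies $ext(S_{\mathbf{t}})$ with $\{\lhd_u\mid u\in\mathbf{t}\}$) with the Szpilrajn-type Theorem~\ref{theo:SzpStrat} for so-structures. Neither step should require new machinery; the only mild subtlety is keeping straight the bijection between step sequences and stratified orders established in \secref{steps}.

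I would prove part~2 first, since part~1 will be an almost immediate consequence. By \theoref{com2sos}, $S_{\mathbf{t}}$ is an so-structure, so Theorem~\ref{theo:SzpStrat} applies and gives
\[
S_{\mathbf{t}} \;=\; \Bigl(\Sigma_{\mathbf{t}},\,\bigcap_{\lhd\,\in\,ext(S_{\mathbf{t}})}\lhd,\,\bigcap_{\lhd\,\in\,ext(S_{\mathbf{t}})}\lhd^{\frown}\Bigr).
\]
Again by \theoref{com2sos}, $ext(S_{\mathbf{t}}) = \{\lhd_{w}\mid w\in\mathbf{t}\}$, so I would just substitute to obtain the desired expression
\[
S_{\mathbf{t}} \;=\; \Bigl(\Sigma_{\mathbf{t}},\,\bigcap_{w\in\mathbf{t}}\lhd_{w},\,\bigcap_{w\in\mathbf{t}}\lhd_{w}^{\frown}\Bigr),
\]
which is part~2.

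For part~1, the forward direction ($\mathbf{t}=\mathbf{q}\Rightarrow S_{\mathbf{t}}=S_{\mathbf{q}}$) is immediate from part~2, because $\bigcap_{w\in\mathbf{t}}\lhd_{w}$ and $\bigcap_{w\in\mathbf{t}}\lhd_{w}^{\frown}$ depend only on the comtrace $\mathbf{t}$ and not on the representative chosen in \defref{s2sos}; this simultaneously confirms that $S_{\mathbf{t}}$ is well defined. For the converse, assume $S_{\mathbf{t}}=S_{\mathbf{q}}$. Then $ext(S_{\mathbf{t}})=ext(S_{\mathbf{q}})$, and applying \theoref{com2sos} to each side yields $\{\lhd_{u}\mid u\in\mathbf{t}\}=\{\lhd_{v}\mid v\in\mathbf{q}\}$. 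Since the correspondence $u\mapsto\lhd_{u}$ between step sequences and stratified orders is a bijection (as recalled in \secref{steps}, a step sequence is recovered from $\lhd_{u}$ by reading off $\Omega_{\lhd_{u}}$ and then applying $\map(\ell,\cdot)$), we conclude that $\mathbf{t}$ and $\mathbf{q}$ contain the same step sequences, i.e., $\mathbf{t}=\mathbf{q}$.

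The main ``obstacle'' is really just bookkeeping: one must notice that \theoref{com2sos} does all of the heavy lifting, so the corollary is essentially a transcription of the Szpilrajn-style representation for so-structures into the comtrace setting. No induction, case analysis, or closure computation is needed here beyond what is already packaged in the cited theorems.
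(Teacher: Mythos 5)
Your proposal is correct and takes essentially the same route as the paper, which states this corollary without an explicit proof as an immediate consequence of Theorem~\ref{theo:com2sos} together with the Szpilrajn-type Theorem~\ref{theo:SzpStrat} --- exactly the substitution argument you give for part~2 and the extension-set comparison you give for part~1. Your two pieces of bookkeeping (that part~2 shows $S_{\mathbf{t}}$ is independent of the representative $u\in\mathbf{t}$, and that the injectivity of $u\mapsto\lhd_{u}$ recovers $\mathbf{t}=\mathbf{q}$ from $ext(S_{\mathbf{t}})=ext(S_{\mathbf{q}})$) are precisely the details the paper leaves implicit.
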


 The first part of the corollary states that two comtraces are the same if and only if they define the same so-structure. The second part of corollary gives us two equivalent methods for constructing the so-structure from the comtrace $\mathbf{t}$. The first method is to  use the construction from \defref{s2sos}. The second method is to consider all the step sequences in $\mathbf{t}$, and for every two event occurrences $\alpha$ and $\beta$ of  $\mathbf{t}$, define $\alpha \PO_{\mathbf{t}} \beta$ if $\alpha$ always occurs strictly before $\beta$ in all of these step sequences, and define $\alpha \sq_{\mathbf{t}} \beta$ if $\alpha$ always occurs before or simultaneously with $\beta$ in all of these step sequences.

\section{Comtraces as labeled stratified order structures \label{sec:lsos-comtrace}}
Even though \theoref{com2sos} shows that each comtrace can be represented uniquely by a labeled so-structure, it does not give us any explicit definition describing how these labeled so-structures look like. The goal of this section is to define exactly the class of labeled so-structures  representing comtraces. To provide us with more intuition, we will first recall how Mazurkiewicz traces can be characterized using labeled posets.

A \emph{trace alphabet}  is  a pair $(E,ind)$, where
$ind$ is a symmetric irreflexive binary relation on the finite set $E$. A \emph{trace congruence} $\equiv_{ind}$ can then be defined as the smallest  equivalence relation such that for all sequences $uabv, ubav \in E^*$, if $(a,b) \in \textit{ind}$, then $uabv \equiv_{ind} ubav$. The elements of $\quotient{E^{*}}{\equiv_{ind}}$ are called \emph{traces}.  

Traces can also be defined alternatively as posets whose elements are labeled with symbols of a concurrent alphabet $(E,ind)$ satisfying certain conditions.  

Given a binary relation $R\subseteq X\times X$, the \emph{covering relation} of $R$ is defined as \[\imm{R}\df \bset{(x,y)\mid x\; R\; y \wedge \neg \exists z,\; x\; R\; z\; R \;y }.\]
In other words, $x\,\imm{R} y$ if and only if  $x$ is \emph{immediately related} to $y$ by the relation $R$. When $R$ is a partial order, the graph representing $\imm{R}$ is exactly the familiar Hasse diagram of $R$, and the notion of covering relation is easier to visualize in this case.

Using the covering relation notion, an alternative definition of Mazurkiewicz trace can be given as in the following definition.

\begin{definition}[cf. \cite{TW02}] \label{def:ltraces}
A trace over a trace alphabet $(E,ind)$ is a finite labeled poset $P=(X,\PO,\lambda)$, where $\lambda:X\rightarrow E$ is a labeling function, such that for all elements $\alpha\not=\beta$ of $X$,
\begin{enumerate}
\item $\alpha \imm{\PO} \beta \implies (\lambda(\alpha),\lambda(\beta))\not \in ind$ (immediately causally related event occurrences  must be labeled with dependent events), and
\item $(\lambda(\alpha),\lambda(\beta))\not \in ind \implies \alpha \PO \beta \vee \beta \PO \alpha$ (any two event occurrences with dependent labels must be causally related).  \EOD
\end{enumerate}
\end{definition}

A trace in this definition is only identified unique up to  \emph{label-preserving isomorphism}. The first condition of the above definition is particularly important since two immediately causally related event occurrences will occur next to each other in at least one of its linear extensions, and thus they cannot be labeled by independent events without violating the causality relation $\prec$. This observation is the key to relate  \defref{ltraces} with quotient monoid definition of traces. 

We would like to establish a analogous definition for  comtraces. An immediate technical difficulty  is that weak causality might be cyclic, so the notion of ``immediate weak causality''  does not make sense. However, we can still deal with cycles of an so-structure by taking advantage of  the following simple fact: \emph{the weak causality relation is a strict preorder}.

\subsection{Quotient so-structure}

Let $S=(X,\PO,\sq)$ be an so-structure. We define the relation $\eqrel{\sq}\subseteq X\times X$ as
\[\alpha \eqrel{\sq} \beta \iffdf \alpha=\beta \;\vee\; \bigl(\alpha\sq \beta \wedge \beta \sq \alpha\bigr)\]

Since $\sq$ is a strict preorder, it follows that $\eqrel{\sq}$ is an equivalence relation. The relation $\eqrel{\sq}$ will be called the \emph{$\sq$-cycle equivalence relation} and an element of the quotient set $\quotient{X}{\eqrel{\sq}}$ will be called a \emph{$\sq$-cycle equivalence class}.  Define the following binary relations $\widehat{\PO}$ and $\widehat{\sq}$ on  the quotient set $\quotient{X}{\eqrel{\sq}}$ as
\begin{align}
[\alpha] \widehat{\PO} [\beta] \iffdf \alpha\not=\beta \;\wedge\; ([\alpha]\times[\beta]) \;\cap \PO\not= \emptyset \label{eq:qsos1}\\
[\alpha] \widehat{\sq} [\beta] \iffdf \alpha\not=\beta \;\wedge\; ([\alpha]\times[\beta]) \;\cap \sq\not= \emptyset  \label{eq:qsos2}
\end{align}
We call the relational structure $\left(\quotient{X}{\eqrel{\sq}},\widehat{\PO} ,\widehat{\sq} \right)$ the \emph{quotient so-structure} induced by  the so-structure $S$. 

Using this quotient construction,  we will show that every so-structure, whose weak causality relation might be cyclic, can be uniquely represented by an \emph{acyclic} quotient so-structure. 

\begin{proposition} The relational structure $\quotient{S}{\eqrel{\sq}} \df \left(\quotient{X}{\eqrel{\sq}},\widehat{\PO},\widehat{\sq}\right)$ is an so-structure, the relations $\widehat{\PO}$ and $\widehat{\sq}$ are partial orders, and for all $\alpha,\beta\in X$,
\begin{enumerate}
\item $\alpha \PO \beta\iff [\alpha] \widehat{\PO} [\beta]$ 
\item $\alpha \sq \beta \iff [\alpha] \widehat{\sq} [\beta] \vee (\alpha\not=\beta \wedge [\alpha]=[\beta])$  
\end{enumerate}
\end{proposition}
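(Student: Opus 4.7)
My strategy is to first establish parts (1) and (2) of the conclusion, since they give representative-free characterizations of $\widehat{\PO}$ and $\widehat{\sq}$ from which irreflexivity, transitivity, and the axioms \textsf{S1}--\textsf{S4} for the quotient follow by routine chasing. The key lemma underpinning everything is the following ``transfer'' observation: if $\alpha\eqrel{\sq}\alpha'$ and $\beta\eqrel{\sq}\beta'$, then $\alpha\PO\beta$ iff $\alpha'\PO\beta'$. The reason is that by definition of $\eqrel{\sq}$ the pair $(\alpha',\alpha)$ is either the identity or lies in $\sq\cap\sq^{-1}$, and similarly for $(\beta,\beta')$, so one can insert $\sq$-steps before and after $\PO$ and then collapse with axiom \textsf{S4} to get $\alpha'\PO\beta'$ (using that $\PO$ is irreflexive to ensure the endpoints stay distinct). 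The symmetric argument gives the converse.

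Once the transfer lemma is in hand, part (1) is immediate: the forward direction is witnessed by $(\alpha,\beta)$ itself, while the reverse direction picks a witness $(\gamma,\delta)\in([\alpha]\times[\beta])\cap\PO$ and applies the transfer lemma to move it to $(\alpha,\beta)$. For part (2), I split the forward direction on whether $\beta\sq\alpha$. If yes, then $\alpha\eqrel{\sq}\beta$ with $\alpha\neq\beta$ (using \textsf{S1}), giving the second disjunct. If no, then $\alpha\not\eqrel{\sq}\beta$, so $[\alpha]\neq[\beta]$, and the witness $(\alpha,\beta)$ shows $[\alpha]\widehat{\sq}[\beta]$. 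For the reverse direction, the second disjunct follows directly from the definition of $\eqrel{\sq}$, while from $[\alpha]\widehat{\sq}[\beta]$ one chains $\alpha\,\sq^?\gamma\sq\delta\,\sq^?\,\beta$ and applies \textsf{S3} twice, each time verifying distinctness using $[\alpha]\neq[\beta]$, to get $\alpha\sq\beta$.

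With (1) and (2) established, the remaining claims fall out. Irreflexivity of both $\widehat{\PO}$ and $\widehat{\sq}$ reduces via (1), (2) to irreflexivity of $\PO$ and \textsf{S1} for $\sq$. For transitivity of $\widehat{\PO}$, I pull back through (1), apply transitivity of $\PO$ (which holds since $\PO$ is a partial order), and push forward. For transitivity of $\widehat{\sq}$, given $[\alpha]\widehat{\sq}[\beta]\widehat{\sq}[\gamma]$, I extract $\alpha\sq\beta\sq\gamma$; the main subtlety here is ruling out $[\alpha]=[\gamma]$, since otherwise I cannot apply \textsf{S3}. But $[\alpha]=[\gamma]$ would give $\gamma\sq\alpha$ or $\gamma=\alpha$, which combined with $\alpha\sq\beta$ and $\beta\sq\gamma$ and \textsf{S3} yields $\beta\sq\alpha$, forcing $[\alpha]=[\beta]$ and contradicting $[\alpha]\widehat{\sq}[\beta]$; hence $[\alpha]\neq[\gamma]$ and \textsf{S3} delivers $\alpha\sq\gamma$, so $[\alpha]\widehat{\sq}[\gamma]$. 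Antisymmetry of $\widehat{\sq}$ (needed for ``partial order'' rather than merely strict preorder) is by the same contradiction argument. Finally the so-structure axioms for the quotient are direct translations: \textsf{S2} uses (1) and (2) plus \textsf{S2} of $S$ together with $[\alpha]\neq[\beta]$, \textsf{S3} is the transitivity just proven, and \textsf{S4} pulls back via (1) and (2), applies \textsf{S4} of $S$, and pushes forward again.

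The main obstacle I expect is the careful handling of the degenerate case $[\alpha]=[\gamma]$ in the transitivity/antisymmetry arguments for $\widehat{\sq}$: precisely because $\sq$ is only a strict preorder and may contain cycles, one must repeatedly exploit the definition of $\eqrel{\sq}$ to convert would-be $\widehat{\sq}$-cycles into membership in the same equivalence class, then derive a contradiction. All other steps are straightforward once the transfer lemma is established.
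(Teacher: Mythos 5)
Your proposal is correct and is essentially the paper's approach spelled out in full: the paper's own proof of this proposition is the one-line remark that it ``follows from \defref{sos} and how $\widehat{\PO}$ and $\widehat{\sq}$ are defined,'' and your transfer lemma (inserting $\sq$-steps around $\PO$ and collapsing via \textsf{S4}), the case split on $\beta\sq\alpha$ for part~(2), and the careful handling of the degenerate case $[\alpha]=[\gamma]$ in the transitivity argument for $\widehat{\sq}$ constitute exactly the routine verification the paper elides. One micro-detail worth adding: in the forward direction of~(1) you must also confirm $[\alpha]\neq[\beta]$ (part of the definition of $\widehat{\PO}$), which is immediate from the paper's noted consequence $\alpha\PO\beta\Rightarrow\beta\not\sq\alpha$ together with irreflexivity of $\PO$.
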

\begin{proof} Follows from \defref{sos}, and how  $\widehat{\PO}$ and $\widehat{\sq}$ are defined in  \eref{qsos1} and \eref{qsos2}. 
\qed
\end{proof}

Each $\sq$-cycle equivalence class is what Juh\'as, Lorenz and Mauser called a \emph{synchronous step} \cite{JLM08}. In their papers, they also used equivalence classes to capture synchronous steps but only for the special class of \emph{synchronous closed} so-structures, where $(\sq\setminus \PO)\cup id_{X}$ is an equivalence relation. Note that when  an so-structure is synchronous closed, the $\sq$-cycle equivalence classes of an so-structure are exactly the equivalence classes of the equivalence relation $(\sq\setminus \PO)\cup id_{X}$.

We extend their ideas by using $\sq$-cycle equivalence classes to capture ``synchronous steps'' in arbitrary so-structures. The name is justified in the following simple yet useful  proposition.

\begin{proposition} \label{prop:covlsos}
Let $S = (X,\PO,\sq)$ be an so-structure. We use $u$ and $v$  to denote some step sequences over $\PS{X}\setminus \set{\emptyset}$. Then for all $\alpha,\beta \in X$,
	\begin{enumerate}
	\item $\alpha \in [\beta] \iff  \forall \lhd \in ext(S),\; \alpha \simeq_{\lhd} \beta$
	\item There is a stratified extension $\lhd$ of $S$ such that $\Omega_{\lhd} = [\gamma_{1}]\ldots[\gamma_{k}]$. 
	\item If $[\alpha] \imm{\hat{\sq}} [\beta]$, then  there  is a stratified extension $\lhd$ of $S$ such that $\Omega_{\lhd} = [\delta_{1}]\ldots[\delta_{m}]$ and $\alpha \in [\delta_{i}]$ and $\beta \in [\delta_{i+1}]$ for some $1\le i < m$.
	\end{enumerate}
\end{proposition}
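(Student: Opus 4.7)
The plan is to handle the three assertions using Theorem~\ref{theo:SzpStrat} and the quotient so-structure construction developed immediately above the proposition.

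For assertion 1, I would simply unfold the definitions. The forward direction splits into the trivial case $\alpha=\beta$ and the case $\alpha \sq \beta \wedge \beta \sq \alpha$; in the latter, any $\lhd \in ext(S)$ gives both $\alpha \lhd^\frown \beta$ and $\beta \lhd^\frown \alpha$, and irreflexivity of $\lhd$ rules out $\alpha \lhd \beta$ or $\beta \lhd \alpha$, forcing $\alpha \simeq_\lhd \beta$. For the reverse direction, if $\alpha \neq \beta$, then $\alpha \simeq_\lhd \beta$ for all $\lhd \in ext(S)$ yields $\alpha \lhd^\frown \beta$ and $\beta \lhd^\frown \alpha$ in every extension, so Theorem~\ref{theo:SzpStrat} delivers $\alpha \sq \beta$ and $\beta \sq \alpha$, placing $\alpha \in [\beta]$.

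For assertion 2, I would apply Szpilrajn's theorem to the finite poset $(\quotient{X}{\eqrel{\sq}}, \widehat{\sq})$, which is a genuine poset by the previous proposition, to obtain a linear extension $[\gamma_1] < \cdots < [\gamma_k]$. I would then define $\lhd$ on $X$ by $a \lhd b \iff [a] < [b]$, so that $\lhd$ is a stratified order whose $\simeq_\lhd$-classes coincide with the $\sq$-cycle equivalence classes, giving $\Omega_\lhd = [\gamma_1]\ldots[\gamma_k]$ by construction. Verifying $\lhd \in ext(S)$ then reduces to two bookkeeping checks: $\alpha \PO \beta$ implies $[\alpha] \widehat{\PO} [\beta] \subseteq \widehat{\sq}$, hence $\alpha \lhd \beta$; and $\alpha \sq \beta$ implies either $[\alpha] \widehat{\sq} [\beta]$ (so $\alpha \lhd \beta$) or $[\alpha]=[\beta]$ with $\alpha\neq\beta$ (so $\alpha \frown_\lhd \beta$), yielding $\alpha \lhd^\frown \beta$ in both subcases.

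For assertion 3, it suffices, by the construction in Part 2, to exhibit a linear extension of $\widehat{\sq}$ in which $[\alpha]$ is immediately followed by $[\beta]$. Here the covering hypothesis $[\alpha] \imm{\widehat{\sq}} [\beta]$ is crucial: it guarantees that no $[\gamma] \neq [\alpha],[\beta]$ satisfies $[\alpha] \widehat{\sq} [\gamma] \widehat{\sq} [\beta]$. I would partition $\quotient{X}{\eqrel{\sq}} \setminus \set{[\alpha],[\beta]}$ into the five blocks determined by each class's relationship to $[\alpha]$ and $[\beta]$: strictly $\widehat{\sq}$-below $[\alpha]$; $\widehat{\sq}$-below $[\beta]$ but incomparable to $[\alpha]$; incomparable to both; $\widehat{\sq}$-above $[\alpha]$ but incomparable to $[\beta]$; strictly $\widehat{\sq}$-above $[\beta]$. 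Each block is then linearized consistently with $\widehat{\sq}$, and the first three blocks are concatenated before $[\alpha]$, the last two after $[\beta]$. The main obstacle, though combinatorial rather than deep, is the cross-block consistency check: one must show that no element of a later block $\widehat{\sq}$-precedes one of an earlier block, for otherwise that inversion combined with $[\alpha] \widehat{\sq} [\beta]$ and the block definitions would either produce a $\widehat{\sq}$-cycle (contradicting antisymmetry) or force a class strictly between $[\alpha]$ and $[\beta]$ (contradicting the covering assumption). Applying Part 2 to the resulting linear extension yields the stratified extension $\lhd$ with $\alpha \in [\delta_i]$ and $\beta \in [\delta_{i+1}]$.
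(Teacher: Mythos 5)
Your proposal is correct and takes essentially the same approach as the paper: assertion 1 via \theoref{SzpStrat}, and assertions 2 and 3 by taking a suitable total order extension of the quotient poset $\left(\quotient{X}{\eqrel{\sq}},\widehat{\sq}\right)$ and reading it back as a stratified extension of $S$. You merely spell out the two verifications the paper waves through as ``not hard to check'' and ``we can easily choose'' --- namely that the induced stratified order extends $S$, and the block-decomposition argument producing a linear extension with $[\alpha]$ immediately followed by $[\beta]$ --- and both are filled in correctly.
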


Assertion~(1) states that two elements belong to the same synchronous step of an so-structure $S$ if and only if  they must be executed simultaneously in every stratified extension of $S$. In other words, when reinterpreting each stratified  extensions of $S$ as a step sequence, all elements of a $\sq$-cycle equivalence class must always occur together within the same step. Assertion (2) says that all elements of a synchronous step must occur together as a single step in at least one stratified extension of $S$. Assertion (3) gives a sufficient condition for two synchronous steps to occur as consecutive steps in at least one stratified extension of $S$.

\begin{proof}
\textbf{1.} ($\Rightarrow$):  Since $ \alpha \in [\beta]$,  we know that $\alpha=\beta$ or $(\alpha\sq \beta\wedge \beta \sq \alpha)$. The former case is trivial. For the latter case, by \theoref{SzpStrat},  we have
$\forall \lhd \in ext(S).\; \alpha \lhd^{\frown} \beta$ and $\forall \lhd \in ext(S).\; \beta \lhd^{\frown} \alpha$. (Recall that $ \lhd^{\frown}  := \lhd \;\cup  \frown_{\lhd}$.) Thus, it follows that  $\forall \lhd \in ext(S).\; \alpha \frown_{\lhd} \beta$.

($\Leftarrow$): By definition, we have  $\simeq_{\lhd}:= \frown_{\lhd} \cup\; id_{X}$, so we consider two cases. The first case when $\alpha=\beta$ is trivial. The second case when  we have $\forall \lhd \in ext(S).\; \alpha \frown_{\lhd} \beta$ and $\alpha\not=\beta$. Thus, by \theoref{SzpStrat}, $\alpha \sq \beta$ and $\beta \sq \alpha$, which implies  $\alpha$ and $\beta$ belong to the same equivalence class.

\textbf{2. }We will construct a step sequence $s = [\gamma_{1}]\ldots[\gamma_{k}]$ which can be converted back to a stratified extension $\lhd$ of the so-structure $S$ as follows. 
Since $P=\left(\quotient{X}{\eqrel{\sq}},\widehat{\sq} \right)$ is a poset, we can simply choose $s$ to be an arbitrary total order extension of $P$. From this, it is not hard to check that the stratified order $(X,\lhd)$ representing the step sequence $s$ is an extension of $S$.

\textbf{3. }Similarly to \textbf{2.}, we can choose the step sequence $t = [\delta_{1}]\ldots[\delta_{m}]$ to be a total order extension of the poset $P$. Moreover, since $[\alpha]$ and $[\beta]$ are immediately related by the partial order $\widehat{\sq}$, we can easily choose the step sequence $t$ such that $[\alpha]$ and $[\beta]$ occur consecutively on $t$. 
\qed
\end{proof}

\subsection{Using quotient so-structure to define comtrace}
We need to define  \textit{label-preserving isomorphisms} for labeled so-structures more formally. A tuple $T = (X,P,Q,\lambda)$ is a \textit{labeled relational structure} if and only if $(X,P,Q)$ is a relational structure and $\lambda$ is a function with domain $X$. If $(X,P,Q)$ is an so-structure, then $T$ is a \textit{labeled so-structure}.

\begin{definition}[label-preserving isomorphism] Let $T_{1}$ and $T_{2}$ be labeled relational structures, where  we let $T_{i}=(X_{i},P_{i},Q_{i},\lambda_{i})$. We write $T_{1}\cong T_{2}$ if and only if $T_1$ and $T_2$ are  \emph{label-preserving isomorphic} (\emph{lp-isomorphic}). In other words, there is a bijection $f: X_{1} \rightarrow X_{2}$ such that for all $\alpha,\beta \in  X_{1}$, 
\begin{enumerate}
\item $(\alpha,\beta) \in P_{1} \iff (f(\alpha),f(\beta))\in P_{2}$
\item $(\alpha,\beta) \in Q_{1} \iff (f(\alpha),f(\beta))\in Q_{2}$
\item $\lambda_{1}(\alpha) = \lambda_{2}(f(\alpha))$
\end{enumerate}
Such function $f$ is called a \emph{label-preserving isomorphism} (\emph{lp-isomorphism}).  \EOD
\end{definition}

Note that all notations, definitions and results for so-structures are applicable to labeled so-structures. We also write $[T]$ or $\seq{X,P,Q,\lambda}$ to denote the lp-isomorphic class of a labeled relational structure $T=(X,P,Q,\lambda)$. We  will not distinguish  an lp-isomorphic class $\seq{T}$ with a single labeled relational structure $T$ when it does not cause ambiguity.

We are now ready to give an alternative definition for comtraces. To avoid confusion with the comtrace notion by Janicki and Koutny in \cite{JK95}, we will use the term \emph{lsos-comtrace} to denote a comtrace defined using our definition. 

\begin{definition}[lsos-comtrace]  \label{def:lcomtrace}
A \emph{lsos-comtrace} over a comtrace alphabet $\theta=(E,sim,ser)$ is (an lp-isomorphic class of) a finite labeled so-structure $\seq{X,\PO,\sq,\lambda}$ such that $\lambda:X\rightarrow E$  and for all elements $\alpha\not=\beta$ of $ X$,
\begin{center}
\tikzstyle{mybox} = [draw=black, very thick,
    rectangle, rounded corners, inner sep=5pt, inner ysep=10pt]

\begin{tikzpicture}
\node [mybox] (box){%
    \begin{minipage}{0.6\textwidth}
\begin{enumerate}
\item[] \textsf{LC1:\mbox{\hspace{5mm}}} $[\alpha] (\imm{\hat{\sq}}\cap \hat{\PO}) [\beta] \implies \lambda\bigl[[\alpha]\bigr]\times \lambda\bigl[[\beta]\bigr] \nsubseteq ser$
\item[] \textsf{LC2:\mbox{\hspace{5mm}}} $[\alpha] (\imm{\hat{\sq}}\setminus \hat{\PO}) [\beta] \implies \lambda\bigl[[\beta]\bigr]\times \lambda \bigl[[\alpha]\bigr] \nsubseteq ser$
\item[] \textsf{LC3:\mbox{\hspace{5mm}}} $\forall A,B \in \PS{[\alpha]} \setminus \set{\emptyset},\; A\cup B = [\alpha] \implies \lambda[A]\times \lambda[B] \not \subseteq ser$
\item[] \textsf{LC4:\mbox{\hspace{5mm}}} $(\lambda(\alpha),\lambda(\beta))\not \in ser \implies \alpha \PO \beta \vee \beta \sq \alpha$
\item[] \textsf{LC5:\mbox{\hspace{5mm}}} $(\lambda(\alpha),\lambda(\beta))\not\in sim \implies \alpha\PO \beta \vee \beta \PO\alpha$ 
\end{enumerate}
    \end{minipage}
};
\end{tikzpicture}%
\end{center}
We write $\LCT(\theta)$ to denote the class of all lsos-comtraces over $\theta$.  \EOD
\end{definition}

To understand the first three conditions \textsf{LC1}--\textsf{LC3} of this definition, it is more intuitive to consider the Hasse diagram  $\left(\quotient{X}{\eqrel{\sq}},\imm{\widehat{\sq}} \right)$ of the poset $\left(\quotient{X}{\eqrel{\sq}},\widehat{\sq} \right)$. Each vertex $[\alpha]$ of this Hasse diagram is a $\sq$-cycle equivalence class, so all  events in $[\alpha]$ must happen simultaneously, and thus each equivalence class $[\alpha]$ can be seen as a ``composite event''. Hence, condition \textsf{LC3} imposes  that we cannot serialize $[\alpha]$ into two separate steps. Also, given $[\alpha] \imm{\widehat{\sq}} [\beta]$ on this Hasse diagram, we know that the ``composite event'' $[\alpha]$ must happen not later than the ``composite event'' $[\beta]$, and  there are two possible cases. If $[\alpha]$ must happen ealier than $[\beta]$, then  condition \textsf{LC1} ensures that the events in $[\alpha]$ and $[\beta]$ cannot be put back together into a single step, i.e., $\lambda\bigl[[\alpha]\bigr]\times \lambda\bigl[[\beta]\bigr] \nsubseteq ser$. Condition \textsf{LC2} is the dual of condition \textsf{LC1} and takes care of the remaining case when $[\alpha]$ does not have to happen ealier than $[\beta]$. 

The last two conditions are needed to ensure that the relationships $ser$ and $sim$ are properly translated into the ``ealier than'' relation $\PO$ and the ``not later than'' relation $\sq$ of the so-structure. It is important to note that the definition is still valid if  we substitute \textsf{LC4} and \textsf{LC5} with the following two conditions:
\begin{center}
\tikzstyle{mybox} = [draw=black, very thick,
    rectangle, rounded corners, inner sep=5pt, inner ysep=10pt]

\begin{tikzpicture}
\node [mybox] (box){%
    \begin{minipage}{0.6\textwidth}
\begin{itemize}
\item[] \textsf{LC4':\mbox{\hspace{5mm}}} $\lambda\bigl[[\alpha]\bigr] \times \lambda\bigl[[\beta]\bigr]\not \subseteq ser \wedge [\alpha]\not=[\beta] \;\implies [\alpha] \widehat{\PO} [\beta] \vee [\beta] \widehat{\sq} [\alpha]$
\item[] \textsf{LC5':\mbox{\hspace{5mm}}} $\lambda\bigl[[\alpha]\bigr] \times \lambda\bigl[[\beta]\bigr]\not\subseteq sim  \wedge [\alpha]\not=[\beta] \implies [\alpha] \widehat{\PO} [\beta] \vee [\beta] \widehat{\PO} [\alpha]$ 
\end{itemize}
    \end{minipage}
};
\end{tikzpicture}%
\end{center}
Then all conditions will involve only the $\sq$-cycle equivalence classes.  In other words, this definition can be seen as a characterization of comtraces using the quotient so-structure $\left(\quotient{X}{\eqrel{\sq}},\widehat{\PO} ,\widehat{\sq} \right)$.

\begin{example} \label{ex:comtrace2}
Let $\theta=(E,sim,ser)$ be a comtrace alphabet, where 
\begin{align*}
E&=\set{a,b,c}\\
sim &= \bset{(a,b),(b,a),(a,c),(c,a),(b,c),(c,b)}\\
ser&=\bset{(a,b),(b,a),(a,c)}
\end{align*}
The set of all possible steps is
$\E=\set{\{a\},\{b\},\{c\},\set{b,c},\set{a,c},\set{a,b},\set{a,b,c}}$. The lp-isomorphic class  of the labeled so-structure $T=(X,\PO,\sq,\lambda)$ shown  in \figref{f1} is an lsos-comtrace. The dashed edges denote the $\sq$ relation and the solid edges denote the $\PO$ relation. The graph in \figref{f2} represents the labeled quotient  so-structure $\quotient{T}{\eqrel{\sq}}=(\quotient{X}{\eqrel{\sq}},\widehat{\PO},\widehat{\sq},{\lambda'})$ of $T$, where we define ${\lambda'}(A) := \lambda[A]$.
\begin{figure}[ht]
\begin{minipage}{0.32\textwidth}\centering
\begin{tikzpicture}[node distance=11mm,->,>=stealth']
        \tikzstyle{every node}=[circle,fill=black!30,minimum size=15pt,inner sep=0pt]
        \node (A) {$a$};
        \node (C) [below of=A,xshift=1.5cm] {$c$};
	\node (B) [above of=C,xshift=1.5cm] {$c$};
        \node (D) [below of=C,xshift=-1.5cm] {$b$};
        \node (E) [below of=C,xshift=1.5cm] {$b$};

        \path [thick,shorten >=1pt] 
        			(A) edge[bend left] (B) edge[bend right] (E)
                         (C) edge (B) edge (E)
                         (D) edge (C) edge[bend right] (E) edge[bend left] (B);

        \path [dashed,thick,shorten >=1pt]
        			(A) edge (C)
			(B) edge[bend left] (E)
			(E) edge[bend left] (B);
\end{tikzpicture}			
\caption{lsos-comtrace $[T]$}
\label{fig:f1}
\end{minipage}
\begin{minipage}{0.25\textwidth}\centering
\vspace{-2cm}
\begin{tikzpicture}
  \draw [black,-to,thick,snake=snake,segment amplitude=.4mm,
         segment length=2mm,line after snake=1mm]
    (0,0) -- (2.5,0)
    node [above=1mm,midway,text width=3cm,text centered]
      { quotient \\ construction};
\end{tikzpicture}
\end{minipage}
\begin{minipage}{0.37\textwidth}\centering
\begin{tikzpicture}[node distance=14mm]
        \tikzstyle{every node}= [rectangle,fill=black!30,minimum size=20pt,inner sep=0pt]

        \node (A') {$\set{a}$};
        \node (C') [below of=A',xshift=1.5cm] {$\set{c}$};
	\node (B') [right of=C',xshift=.7cm] {$\set{b,c}$};
        \node (D') [below of=C',xshift=-1.5cm] {$\set{b}$};

        \path [thick,shorten >=1pt,-stealth'] 
        			(A') edge[bend left] (B')  
			 (C') edge (B') 
                         (D') edge (C') edge[bend right] (B');

        \path [dashed,thick,shorten >=1pt,-stealth']
        			(A') edge (C');
\end{tikzpicture}
\caption{the quotient structure $\quotient{T}{\eqrel{\sq}}$}
\label{fig:f2}
\end{minipage}
\end{figure}
It is important to note that the quotient construction collapses the two rightmost nodes of the graph in \figref{f1} since they belong to the same $\sq$-cycle equivalence class. The result is a much simpler and more compact representation in \figref{f2}.

The lsos-comtrace $[T]$ actually corresponds to the comtrace 
\[[\set{a,b}\set{c}\set{b,c}] = \bset{\set{a,b}\set{c}\set{b,c},\set{a}\set{b}\set{c}\set{b,c},\set{b}\set{a}\set{c}\set{b,c},\set{b}\set{a,c}\set{b,c}},\] since the step sequences in this comtrace when reinterpreted as stratified orders are exactly the two stratified extensions of the lsos-comtrace $T$. We will show this relationship formally in \secref{ct2lct}. 
\EOD 
\end{example}

\begin{remark}
\defref{lcomtrace} can be extended to define \emph{infinite comtrace} as follows. Instead of asking $X$ to be finite, we require a labeled so-structure to be  \emph{initially finite} (cf. \cite{JK97}), i.e., $\bset{\alpha\in X\mid \alpha \sq\beta}$ is finite for all $\beta\in X$. The initial finiteness gives us a sensible interpretation that before any event there can only be finitely many events. \EOD
\end{remark}

\subsection{Canonical representation of lsos-comtrace \label{sec:can}}
Since each lsos-comtrace is defined as a class of lp-isomorphic labeled so-structures, it might seem tricky to work  with lsos-comtraces. Fortunately, the ``no autoconcurrency'' property, i.e., the relations $sim$ and $ser$  are irreflexive, gives us a \emph{canonical} way to enumerate the events of an lsos-comtrace  similar to how the events of a comtrace are enumerated.

Given an lsos-comtrace $T=\seq{X,\PO,\sq,\lambda}$  over a comtrace alphabet $\theta=(E,sim,ser)$,  a stratified order $\lhd\in ext(T)$ can be seen as a step sequence $\Omega_{\lhd}=A_{1}\ldots A_{k}$ satisfying the following properties.

\begin{proposition}\label{prop:validss}  $\;$
\begin{enumerate}
\item For all $1\le i \le k$, $|A_{i}|=|\lambda[A_{i}]|$
\item $\map(\lambda,\Omega_{\lhd}) = \lambda[A_{1}]\ldots \lambda[A_{k}]$ is a step sequence over the comtrace alphabet $\theta$. 
\end{enumerate}

\end{proposition}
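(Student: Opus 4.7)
The plan is to derive both parts from condition \textsf{LC5} of \defref{lcomtrace} together with the irreflexivity of $sim$, exploiting the elementary fact that elements sharing a step in a stratified extension $\lhd$ are $\simeq_\lhd$-related and hence cannot be compared by $\PO$.

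For part (1), I would fix $i$ and take any two distinct elements $\alpha,\beta \in A_i$. Since $A_i$ is an equivalence class of $\simeq_\lhd$, we have $\alpha \simeq_\lhd \beta$, so neither $\alpha \lhd \beta$ nor $\beta \lhd \alpha$ holds. Because $\lhd \in ext(T)$ means $(X,\PO,\sq)\subseteq (X,\lhd,\lhd^\frown)$ by \defref{extsos}, in particular $\PO \,\subseteq\, \lhd$, so $\alpha \not\PO \beta$ and $\beta \not\PO \alpha$. The contrapositive of \textsf{LC5} then yields $(\lambda(\alpha),\lambda(\beta))\in sim$, and since $sim$ is irreflexive this forces $\lambda(\alpha)\neq\lambda(\beta)$. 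Hence $\lambda\!\restriction_{A_i}$ is injective, giving $|A_i|=|\lambda[A_i]|$.

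For part (2), I need to check that each $\lambda[A_i]$ is a nonempty clique of $(E,sim)$, i.e., an element of $\E_\theta$. Non-emptiness is immediate from $A_i \neq \emptyset$. For the clique condition, pick distinct $a,b \in \lambda[A_i]$; by the injectivity established in part (1) there are uniquely determined distinct $\alpha,\beta \in A_i$ with $\lambda(\alpha)=a$, $\lambda(\beta)=b$, and the same chain of reasoning (membership in $A_i$ plus extension of $\PO$) combined with the contrapositive of \textsf{LC5} delivers $(a,b)\in sim$. So each $\lambda[A_i]\in\E_\theta$, and concatenation gives $\map(\lambda,\Omega_\lhd) = \lambda[A_1]\ldots\lambda[A_k] \in \E_\theta^\ast$.

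The whole argument is essentially a ``no autoconcurrency'' argument; the only subtlety is correctly unfolding what it means for $\lhd$ to be a stratified extension of the labeled so-structure $T$ (so that $\PO\,\subseteq\,\lhd$), and after that both parts reduce to a single application of the contrapositive of \textsf{LC5} together with irreflexivity of $sim$. I do not expect any genuine obstacle here.
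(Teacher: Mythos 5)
Your proof is correct and follows essentially the same route as the paper: both parts reduce to the contrapositive of \textsf{LC5} together with the irreflexivity of $sim$, applied to two distinct elements of the same step $A_i$. The only cosmetic difference is that where the paper cites Corollary~\ref{cor:SzpStrat}(2) to pass from $\alpha\frown_\lhd\beta$ to $\alpha\frown_\PO\beta$, you unfold the easy direction directly via the inclusion $\PO\,\subseteq\,\lhd$ from Definition~\ref{def:extsos}, which is the same content stated more elementarily.
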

This proposition ensures that  $u=\map(\lambda,\Omega_{\lhd})$ is a valid step sequence over the comtrace alphabet $\theta$.
\begin{proof}\textbf{1. } Intuitively, this follows from the fact that $sim$ is irreflexive, which guarantees that different occurrences of the same events cannot occurs simultaneously in any stratified extension of $T$. More formally, assume $\alpha,\beta \in A_i$ and $\alpha\not=\beta$. Thus, $\alpha\frown_{\lhd}\beta$.  By \corref{SzpStrat} (2), this implies that  $\alpha \frown_\PO \beta$. Hence, by \textsf{LC5} of \defref{lcomtrace},  $(\lambda(\alpha),\lambda(\beta))\in sim$. Since $sim$ is irreflexive, this implies that any two distinct $\alpha$ and $\beta$ in $A_i$ have different labels. Thus, $|A_{i}|=|\lambda[A_{i}]|$ for all $1\le i \le k$.

\textbf{2. }From the proof of \textbf{1.}, we know that  for any two distinct $\alpha,\beta \in A_i$ we have $(\lambda(\alpha),\lambda(\beta))\in sim$. Thus, $\lambda[A_{i}]\in \E_{\theta}$ for all $1\le i \le k$.
\qed
\end{proof}

Given an lsos-comtrace $T=\seq{X,\PO,\sq,\lambda}$ as above, we define a  function $\enu$ with domain $X$ as 
\[\enu(x)  = \lambda(x)^{(i)},\]
where the index $i := |\set{y \in X \mid y\PO x \;\wedge\; \lambda(y)=\lambda(x)}|+1$. In other words, $i-1$  is the number of elements that occur earlier than $x$ and have the same labels as $x$. Thus, the index $i=1$ when $x$ is the first occurrence of the event $\lambda(x)$ with respect to the relation $\PO$, and in general $\enu(x)  = \lambda(x)^{(i)}$ if $x$ is the $i^{\rm th}$ occurrence of the event $\lambda(x)$ with respect to  $\PO$.  Note that we can enumerate events with the same label in  this way because it follows from \textsf{LC5} of \defref{lcomtrace} that events with the same label are totally ordered by $\PO$.

Fix a stratified extension  $\lhd$ of the above lsos-comtrace $T$, and assume that  $\Omega_{\lhd}=A_{1}\ldots A_{k}$. Then by \propref{validss}, we know that the step sequence $u=\map(\lambda,\Omega_{\lhd})$ is a valid step sequence over the comtrace alphabet $\theta$.  Recall that $\h{u}=\h{A_{1}}\ldots \h{A_{k}}$ denotes the enumerated step sequence of $u$ and $\Sigma_{u}$ denotes the set of event occurrences in $u$.  Then it is not hard to see that the range of the function $\enu$ defined above is exactly the set $\Sigma_{u}$, i.e., $\enu[X] = \Sigma_{u}$, and this holds regardless of the choice of the stratified extension $\lhd \in ext(T)$.

 We define the \emph{enumerated so-structure} of $T$ to be the labeled so-structure $T_{0}=(\Sigma,\PO_{0},\sq_{0},\ell)$, where we let $\Sigma = \enu[X]$, the function $\ell:\Sigma \rightarrow E$ is as defined as in \secref{steps}, and the two relations $\PO_{0},\sq_{0}\;\subseteq \Sigma \times \Sigma$ are defined as:
\begin{align*}
\alpha \PO_{0} \beta &\iffdf \enu^{-1}(\alpha) \PO \enu^{-1}(\beta)\\
\alpha \sq_{0} \beta &\iffdf \enu^{-1}(\alpha) \sq \enu^{-1}(\beta)
\end{align*}
Here $\enu^{-1}$ denotes the inverse of the function $\enu$. The inverse of $\enu$ is well-defined since it is not hard to see that the function $\enu:X\rightarrow \Sigma$ is bijective.

Clearly the  enumerated so-structure $T_0$ can be uniquely determined from $T$ in this definition.  Since $T_0$ is constructed from $T$ by renaming the elements in $X$ using the function $\enu$, we can easily show the following relationships:

\begin{proposition}\label{prop:isoExt} $\;$ 
\begin{enumerate}
\item $T$ and $T_{0}$ are  lp-isomorphic under the mapping $\enu$. 
\item For every stratified extension  $\lhd$ of $T$, if $u=\map(\lambda,\Omega_{\lhd})$, then  $(X,\lhd,\lhd^{\frown},\lambda)$ and $(\Sigma,\lhd_u,\lhd_u^{\frown},\ell)$  are lp-isomorphic under the mapping $\enu$, and we also have $\lhd_u\in ext(T_{0})$.   
\end{enumerate} 
\end{proposition}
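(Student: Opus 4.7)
The plan is to prove both parts together by first establishing that $\enu$ is a well-defined bijection $X \to \Sigma$ and then verifying the relation- and label-preservation conditions.

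For part 1, I would first argue that $\enu$ is injective: if $\enu(x) = \enu(y)$ then in particular $\lambda(x) = \lambda(y)$, and by \textsf{LC5} together with the irreflexivity of $sim$ any two distinct elements of $X$ sharing a label must be $\PO$-comparable; hence the indices assigned by $\enu$ distinguish them, forcing $x = y$. Surjectivity onto $\Sigma$ is immediate once we note that $\Sigma = \enu[X]$ is independent of which stratified extension $\lhd$ we picked, because the index $i$ in $\enu(x) = \lambda(x)^{(i)}$ is defined purely from $\PO$. Having $\enu$ bijective, preservation of $\PO$ and $\sq$ is then by construction of $\PO_0$ and $\sq_0$, and label preservation is built into the formula $\ell(\lambda(x)^{(i)}) = \lambda(x)$.

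For part 2, fix $\lhd \in ext(T)$ and write $\Omega_\lhd = A_1 \ldots A_k$. The heart of the argument is the claim that $\enu[A_i] = \h{A_i}$ for each $i$, where $\h{A_i}$ is the $i$-th step of $\h{u}$ with $u = \map(\lambda, \Omega_\lhd)$. To prove this I would fix $x \in A_i$ and show that
\[
\bigl|\{\,y \in X \mid y \PO x \,\wedge\, \lambda(y) = \lambda(x)\,\}\bigr| \;=\; |A_1 \ldots A_{i-1}|_{\lambda(x)}.
\]
The $\subseteq$-direction uses $T \subseteq (X,\lhd,\lhd^{\frown})$: if $y \PO x$ then $y \lhd x$, and since $\lambda(y) = \lambda(x)$ with $sim$ irreflexive, $y$ and $x$ cannot sit in the same step (cf.\ \propref{validss}(1)), so $y$ lies in some $A_j$ with $j < i$. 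For the $\supseteq$-direction, if $y \lhd x$ and $\lambda(y) = \lambda(x)$ with $y \ne x$, then by \textsf{LC5} we have $y \PO x$ or $x \PO y$; the second case would give $x \lhd y$ via $T \subseteq (X,\lhd,\lhd^{\frown})$ and axiom \textsf{S2}, contradicting $y \lhd x$. Hence the two sets coincide, the indices assigned by $\enu$ match those in $\h{A_i}$, and $\enu$ maps each $A_i$ bijectively onto $\h{A_i}$.

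Once $\enu[A_i] = \h{A_i}$ is in hand, the lp-isomorphism between $(X,\lhd,\lhd^{\frown},\lambda)$ and $(\Sigma,\lhd_u,\lhd_u^{\frown},\ell)$ is immediate: $x \lhd y$ iff $\It{pos}(x) < \It{pos}(y)$ in $\Omega_\lhd$ iff $pos_u(\enu(x)) < pos_u(\enu(y))$ iff $\enu(x) \lhd_u \enu(y)$, and analogously for $\lhd^\frown$, while labels are preserved as in part 1. Finally, for $\lhd_u \in ext(T_0)$, I would chain three facts: $\PO_0 \cong \PO$ and $\sq_0 \cong \sq$ under $\enu$ by part 1; $\PO \subseteq \lhd$ and $\sq \subseteq \lhd^{\frown}$ since $\lhd \in ext(T)$; and $\lhd \cong \lhd_u$ and $\lhd^{\frown} \cong \lhd_u^{\frown}$ under $\enu$ by the just-established lp-isomorphism. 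The main obstacle is the bookkeeping in the key claim $\enu[A_i] = \h{A_i}$, specifically the nontrivial direction that elements sharing a label are ordered the same way by $\PO$ and by $\lhd$, which crucially relies on \textsf{LC5} plus the irreflexivity of $sim$; the rest of the proof is diagrammatic once that claim is in place.
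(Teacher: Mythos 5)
Your proof is correct and takes essentially the same route as the paper, which states the proposition without detailed proof on the grounds that $T_0$ is obtained from $T$ by renaming elements via the bijection $\enu$; your bijectivity argument via \textsf{LC5} and irreflexivity of $sim$, the key step-matching claim $\enu[A_i]=\h{A_i}$, and the final chaining $\PO_0\subseteq\lhd_u$ and $\sq_0\subseteq\lhd_u^\frown$ are exactly the natural filling-in of that sketch. (One cosmetic point: in the $\supseteq$-direction of your key claim, $x\PO y$ yields $x\lhd y$ directly from $\PO\subseteq\lhd$ in \defref{extsos}, so the appeal to axiom \textsf{S2} is unnecessary.)
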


In other words, the mapping $\enu:X\rightarrow \Sigma$ plays 
the role of both the lp-isomorphism from $T$ to $T_{0}$ and the lp-isomorphism from the stratified extension $(X,\lhd,\lambda)$ of $T$  to the stratified extension   $(\Sigma,\lhd_{u},\ell)$ of $T_{0}$. These relationships can be best captured using the following commutative diagram.
\begin{center}
\begin{tikzpicture}[description/.style={fill=white,inner sep=2pt},>=stealth']
\matrix (m) [matrix of math nodes, row sep=4em,
column sep=2.5em, text height=2ex, text depth=0.3ex]
{  (X,\PO,\sq,\lambda) & & (\Sigma,\PO_{0},\sq_{0},\ell) \\
 (X,\lhd,\lhd^{\frown},\lambda) &  & (\Sigma,\lhd_{u},\lhd_{u}^{\frown},\ell) \\ };
\path[->,font=\small,thick]
	(m-1-1) edge node[auto] {$ \enu $} (m-1-3)
	(m-2-1) edge node[auto] {$ \enu $} (m-2-3);
\path[right hook->,font=\small,thick]
	(m-1-1) edge node[auto,swap] {$ id_{X} $} (m-2-1)
	(m-1-3) edge node[auto] {$ id_{\Sigma} $} (m-2-3);
	
\end{tikzpicture}
\end{center}

We can also observe that two lsos-comtraces are identical if and only if they define the same enumerated so-structure. Henceforth, we define the  \emph{canonical representation} of $T$ to be the enumerated so-structure of $T$. \bigskip\\
\begin{minipage}{0.6 \textwidth}
\begin{example} The canonical presentation $T_{0} = (\Sigma, \PO_{0}, \sq_{0},\ell)$ of the lsos-comtrace $T$ from \exref{comtrace2} is given in the figure on the right. The solid edges of the graph represents the relation $\PO_{0}$ and the dashed edges represent the relation $\sq_{0}$. The nodes of this graph are labeled with the elements of $\Sigma$. The labeling function $\ell$ returns the label of each element of $\Sigma$ by simply  getting rid of its superscript. For example, $\ell(a^{(1)}) = a$ and $\ell(b^{(2)}) = b$. \EOD
\end{example}
\end{minipage}
\begin{minipage}{0.4 \textwidth}
\begin{center}
\begin{footnotesize}
\begin{tikzpicture}[scale=1.1,->,>=stealth']
        \tikzstyle{every node}=[circle,fill=black!30,minimum size=20pt,inner sep=0pt]

        \node (A) at (0,0) {$a^{(1)}$};
        \node (C) at (1.5,-1) {$c^{(1)}$};
	\node (B) at (3,0) {$c^{(2)}$};
        \node (D) at (0,-2) {$b^{(1)}$};
        \node (E) at (3,-2) {$b^{(2)}$};

        \path [thick] 
        			(A) edge[bend left] (B) edge[bend right] (E)
                         (C) edge (B) edge (E)
                         (D) edge (C) edge[bend right] (E) edge[bend left] (B);

        \path [dashed,thick]
        			(A) edge (C)
			(B) edge[bend left] (E)
			(E) edge[bend left] (B);
\end{tikzpicture}	
\end{footnotesize}
\end{center}
\end{minipage}

\section{Representation theorems \label{sec:representation}}
This section contains the main technical contribution of this paper by showing that for a given comtrace alphabet $\theta$, the comtrace monoid $\E^*_{\theta}/\!\!\equiv_{\theta}$, the set of lsos-comtraces $\LCT(\theta)$ and the set of cd-graphs $\DCD(\theta)$ are three equivalent ways of talking about the same class of objects. 

\subsection{Representation theorem for comtraces and lsos-comtraces \label{sec:ct2lct}}
The goal of this section is to prove the first representation theorem which establishes the representation mappings between $\E^*/\!\!\equiv_{\theta}$ and $\LCT(\theta)$. Before doing so, we need some preliminary results.

\begin{proposition} \label{prop:stratsubset}
Let $S=(X,\PO,\sq)$ and $S'=(X,\PO',\sq')$ be stratified order structures such that $ext(S)\subseteq ext(S')$. Then $S'\subseteq S$.
\end{proposition}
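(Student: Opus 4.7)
The plan is to derive this directly from Theorem~\ref{theo:SzpStrat} (the Szpilrajn-style reconstruction of an so-structure from its stratified extensions). The theorem lets us recover both $S$ and $S'$ as suitable intersections over their extension sets, so the containment $ext(S)\subseteq ext(S')$ will translate into a reverse containment of intersections, yielding $S'\subseteq S$.

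Concretely, I would first apply \theoref{SzpStrat} to $S'$ to obtain
\[
\PO' \;=\; \bigcap_{\lhd\,\in\, ext(S')} \lhd
\andspace
\sq' \;=\; \bigcap_{\lhd\,\in\, ext(S')} \lhd^\frown.
\]
Then, using the hypothesis $ext(S)\subseteq ext(S')$, the intersection on the right ranges over a superset of $ext(S)$, so
\[
\bigcap_{\lhd\,\in\, ext(S')} \lhd \;\subseteq\; \bigcap_{\lhd\,\in\, ext(S)} \lhd
\andspace
\bigcap_{\lhd\,\in\, ext(S')} \lhd^\frown \;\subseteq\; \bigcap_{\lhd\,\in\, ext(S)} \lhd^\frown.
\]
Finally, applying \theoref{SzpStrat} to $S$ identifies the right-hand sides with $\PO$ and $\sq$ respectively, giving $\PO'\subseteq \PO$ and $\sq'\subseteq \sq$, which is precisely $S'\subseteq S$.

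There is essentially no obstacle beyond bookkeeping; the only thing to watch is that $ext(S)$ must be nonempty for the intersection over $ext(S)$ to correctly reproduce $\PO$ and $\sq$ (otherwise the ``intersection over nothing'' would be $X\times X$). Since the paper restricts attention to finite so-structures and \theoref{SzpStrat} is invoked as stated, this case does not arise: every finite so-structure admits at least one stratified extension, so $ext(S)\neq\emptyset$ and the argument goes through cleanly.
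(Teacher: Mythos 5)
Your proof is correct and is exactly the argument the paper intends: its proof of \propref{stratsubset} is just ``Follows from \theoref{SzpStrat},'' and your derivation---applying the theorem to both $S$ and $S'$ and reversing the containment of intersections---fills in precisely those details. Your side remark on $ext(S)\neq\emptyset$ is a reasonable extra precaution but does not change the approach.
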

\begin{proof} Follows from \theoref{SzpStrat}.  
\qed
\end{proof}

For the next two lemmas, we let $T$ be an lsos-comtrace over a comtrace alphabet $\theta=(E,sim,ser)$. Let $T_{0}=(\Sigma,\PO_{0},\sq_{0},\ell)$ be the canonical representation of $T$. Let $\lhd_{0}\in ext(T_0)$ and $u=\map(\ell,\Omega_{\lhd_{0}})$. Since $u$ is a valid step sequence in $\E^*$ (by \propref{validss}), we can construct the so-structure $S_{[u]} = (\Sigma_{u},\PO_{[u]},\sq_{[u]})$ as in  \defref{s2sos}, where we have seen in \secref{can} that   $\Sigma =\Sigma_{u}$. Our goal is to show that the so-structure $S_{[u]}$  is exactly the canonical representation $T_{0}$ of $T$.

\begin{lemma} $S_{[u]} \subseteq (\Sigma_{u},\PO_{0},\sq_{0})$.
\label{lem:l1}
\end{lemma}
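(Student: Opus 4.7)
The plan is to reduce the $\lozenge$-closure inclusion to a much simpler ``pre-closure'' inclusion, and then apply assertion (5) of \propref{so-cl}.

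First, I would unpack the setup so that the two structures live on the same carrier and share a natural stratified order. By \propref{validss}, $u = \map(\ell,\Omega_{\lhd_0})$ is a valid step sequence over $\theta$, and one checks directly from the definition of the enumerated step sequence that $\Sigma_u = \Sigma$ and $\lhd_u = \lhd_0$ when both are viewed as relations on $\Sigma$; in particular $\lhd_u \in ext(T_0)$. Since $T_0$ is lp-isomorphic to $T$, $T_0$ is itself an lsos-comtrace and so satisfies \textsf{LC1}--\textsf{LC5}. Also, $(\Sigma_u,\PO_0,\sq_0)$ is an so-structure by the definition of lsos-comtrace.

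Next I would establish the pre-closure inclusion $(\Sigma_u,\prec_u,\sq_u) \subseteq (\Sigma_u,\PO_0,\sq_0)$. Suppose $\alpha \prec_u \beta$, i.e.\ $\alpha \lhd_u \beta$ and $(\ell(\alpha),\ell(\beta)) \notin ser$. Applying \textsf{LC4} to $T_0$ yields $\alpha \PO_0 \beta$ or $\beta \sq_0 \alpha$. The second case is impossible: since $\lhd_u = \lhd_0 \in ext(T_0)$, we have $\sq_0 \subseteq \lhd_u^{\frown}$, so $\beta \sq_0 \alpha$ would give $\beta \lhd_u^{\frown} \alpha$, contradicting $\alpha \lhd_u \beta$ (recall $\lhd_u$ is stratified, hence antisymmetric modulo $\simeq_{\lhd_u}$). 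Thus $\alpha \PO_0 \beta$, i.e.\ $\prec_u \subseteq \PO_0$. An entirely symmetric argument handles $\sq_u \subseteq \sq_0$: if $\alpha \sq_u \beta$, then $\alpha \lhd_u^{\frown} \beta$ and $(\ell(\beta),\ell(\alpha)) \notin ser$, so \textsf{LC4} applied to $T_0$ with the roles of $\alpha,\beta$ swapped yields $\beta \PO_0 \alpha$ or $\alpha \sq_0 \beta$; and $\beta \PO_0 \alpha$ is ruled out because $\PO_0 \subseteq \lhd_u$ would then contradict $\alpha \lhd_u^{\frown} \beta$.

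Finally, with the pre-closure inclusion in hand and with $(\Sigma_u,\PO_0,\sq_0)$ known to be an so-structure, assertion (5) of \propref{so-cl} gives $(\Sigma_u,\prec_u,\sq_u)^{\lozenge} \subseteq (\Sigma_u,\PO_0,\sq_0)$, which is exactly $S_{[u]} \subseteq (\Sigma_u,\PO_0,\sq_0)$.

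The only step requiring any real care is the bookkeeping identification $\lhd_u = \lhd_0$ on $\Sigma_u = \Sigma$; once this is in place, both inclusions $\prec_u \subseteq \PO_0$ and $\sq_u \subseteq \sq_0$ are one-line applications of \textsf{LC4} combined with the fact that $\lhd_0$ is a stratified extension of $T_0$, and the $\lozenge$-closure then comes for free from \propref{so-cl}(5). I do not expect a genuine obstacle here; the lemma is essentially a compatibility check between the ser-based local construction of \defref{s2inv} and the global constraints imposed by \textsf{LC4} on the lsos-comtrace.
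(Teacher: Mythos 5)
Your proof is correct and follows essentially the same route as the paper's: reduce the $\lozenge$-closure inclusion to the pre-closure inclusion $(\Sigma_{u},\PO_{u},\sq_{u}) \subseteq (\Sigma_{u},\PO_{0},\sq_{0})$ via \propref{so-cl}, then obtain $\PO_{u}\subseteq\PO_{0}$ (and symmetrically $\sq_{u}\subseteq\sq_{0}$) from \textsf{LC4} combined with the identification $\lhd_{u}=\lhd_{0}\in ext(T_{0})$. The only cosmetic difference is that you rule out the bad disjunct directly from the extension property $\sq_{0}\subseteq\lhd_{u}^{\frown}$, whereas the paper detours through \theoref{SzpStrat} before using the same particular extension $\lhd_{u}$ anyway.
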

\begin{proof} By \propref{so-cl}, to show $S_{[u]} = (\Sigma_{u},\PO_{u},\sq_{u})^{\lozenge} \subseteq (\Sigma_{u},\PO_{0},\sq_{0})$, it suffices to show $(\Sigma_{u},\PO_{u},\sq_{u}) \subseteq (\Sigma_{u},\PO_{0},\sq_{0})$. Intuitively, the lemma holds since $u$ is the step sequence representation of the stratified extension $\lhd$ of $T_{0}$, and so the relations $\PO_{u}$ and $\sq_{u})$ defined from $u$ cannot `violate' (i.e. must be contained in) the relations $\PO_{0}$ and $\sq_{0}$ respectively. The formal proof goes as follows.

($\PO_{u}\subseteq \PO_{0}$): Assume $\alpha\PO_{u}\beta$. Then from \defref{s2inv}, $\alpha\lhd_u\beta$ and  $(\ell(\alpha),\ell(\beta))\notin ser$. Since $(\ell(\alpha),\ell(\beta))\notin ser$, it follows from \defref{lcomtrace} that  $\alpha \PO_{0} \beta$ or $\beta\sq_{0} \alpha$.  Suppose for a contradiction that $\beta\sq_{0} \alpha$, then by \theoref{SzpStrat},  $\forall \lhd \in ext(T_{0}),\; \beta \lhd^{\frown}\alpha$. Since $T_0$ is the canonical representation of $T$, it is important to observe that $\lhd_0=\lhd_u$. But since we assumed that $\lhd_{0}\in ext(T_0)$, it follows that $\lhd_{u}\in ext(T_{0})$ and $\alpha \lhd_{u} \beta$, which contradicts that $\beta\sq_{0} \alpha$. Hence, $\alpha \PO_{0} \beta$.

($\sq_{u}\subseteq \sq_{0}$):  Can be shown similarly.
\qed
\end{proof}

\begin{lemma} $S_{[u]} \supseteq (\Sigma_{u},\PO_{0},\sq_{0})$.
\label{lem:l2}
\end{lemma}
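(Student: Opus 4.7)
The plan is to use Proposition~\ref{prop:stratsubset}: since $S_{[u]}$ is itself an so-structure, the desired inclusion $(\Sigma_u,\PO_0,\sq_0)\subseteq S_{[u]}$ follows from $\mathit{ext}(S_{[u]})\subseteq \mathit{ext}(T_0)$, where $T_0$ denotes $(\Sigma_u,\PO_0,\sq_0,\ell)$. By Theorem~\ref{theo:com2sos}, $\mathit{ext}(S_{[u]}) = \{\lhd_v : v\in[u]\}$, so I must show that $\lhd_v\in\mathit{ext}(T_0)$ for every $v\in[u]$. I would prove this by induction on the length of a derivation of $v$ from $u$ via the $\approx$-relation: the base case $v=u$ holds by the assumption $\lhd_0\in\mathit{ext}(T_0)$, while the inductive step considers a single rewrite $v = xAy \leftrightarrow v' = xBCy$ (with $B\cup C = A$, $B\cap C=\emptyset$, $B\times C\subseteq \mathit{ser}$) in both the serialization and the desserialization directions.

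The verifications of $\PO_0\subseteq\lhd_{v'}$ and $\sq_0\subseteq\lhd_{v'}^\frown$ are routine whenever the two events involved do not both lie in the step $A$, so the crux is the critical case where $\alpha,\beta\in A$. For the serialization direction, the dangerous subcase is $\alpha\sq_0\beta$ with $\ell(\alpha)\in C$ and $\ell(\beta)\in B$, which would force $\beta\lhd_{v'}\alpha$ and break the inclusion. Since $\lhd_v\in\mathit{ext}(T_0)$, Proposition~\ref{prop:covlsos}(1) implies that every element of the $\sq_0$-cycle classes $[\alpha]$ and $[\beta]$ lies in the step $A$, so all of their labels are in $A = B\cup C$. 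Applying LC3 to the partition of each class by $B$- versus $C$-labels (any nontrivial partition would give $\ell[X]\times\ell[Y]\subseteq B\times C\subseteq\mathit{ser}$) forces $\ell[[\alpha]]\subseteq C$ and $\ell[[\beta]]\subseteq B$. The same argument applied to any intermediate class along a chain of immediate covers $[\alpha]=[\gamma_0]\imm{\widehat{\sq_0}}\cdots\imm{\widehat{\sq_0}}[\gamma_k]=[\beta]$ in the quotient poset, together with transitivity of $\sq_0$ pinning every $[\gamma_i]$ inside step $A$, shows $\ell[[\gamma_i]]\subseteq B$ or $\ell[[\gamma_i]]\subseteq C$; hence some transition index $i$ satisfies $\ell[[\gamma_i]]\subseteq C$ and $\ell[[\gamma_{i+1}]]\subseteq B$.

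The contradiction at this transition cover splits into two subcases. If $[\gamma_i]\widehat{\PO_0}[\gamma_{i+1}]$, then $\gamma_i\PO_0\gamma_{i+1}$ would force $\gamma_i\lhd_v\gamma_{i+1}$ strictly, contradicting the fact that both $\gamma_i$ and $\gamma_{i+1}$ sit in the single step $A$. Otherwise the cover is a pure $\widehat{\sq_0}$-cover and LC2 yields $\ell[[\gamma_{i+1}]]\times\ell[[\gamma_i]]\nsubseteq\mathit{ser}$, contradicting $\ell[[\gamma_{i+1}]]\times\ell[[\gamma_i]]\subseteq B\times C\subseteq\mathit{ser}$. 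The desserialization direction is handled by a symmetric chain-and-transition argument, with the dangerous subcase being $\alpha\PO_0\beta$ with $\ell(\alpha)\in B$ and $\ell(\beta)\in C$, and with LC1 playing the role of LC2 at the transition cover (the corresponding label-restriction uses Proposition~\ref{prop:covlsos}(1) applied to $\lhd_{v'}$ instead of $\lhd_v$).

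The main obstacle is exactly this critical subcase in which $\alpha$ and $\beta$ end up in a shared step of one of the two step sequences: a direct appeal to the lsos-comtrace axioms LC1--LC5 on the specific pair $(\alpha,\beta)$ fails because those axioms constrain only \emph{immediately} related $\sq_0$-cycle classes in the quotient poset, whereas $[\alpha]$ and $[\beta]$ may be connected only through a chain of intermediate classes. The resolution is to combine Proposition~\ref{prop:covlsos}(1) with LC3 to rigidly constrain the labels of every class along the chain, after which the contradiction localizes to a single transition cover where one of LC1 or LC2 fires.
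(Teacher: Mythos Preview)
Your proposal is correct and follows essentially the same route as the paper's proof: reduce via \propref{stratsubset} and \theoref{com2sos} to showing $\lhd_w\in ext(T_0)$ for every $w\in[u]$, induct on the length of a $\approx$-derivation, and in the critical subcase pass to a chain of immediate $\widehat{\sq_0}$-covers inside $\h{A}$, pin each class to $\h{B}$ or $\h{C}$ via \textsf{LC3} and \propref{covlsos}(1), and obtain a contradiction with \textsf{LC2} (serialization) or \textsf{LC1} (deserialization) at the transition cover. The only cosmetic difference is that the paper packages the transition-cover contradiction as a single ``Claim 2'' with parts (a) and (b), whereas you phrase it as a case split on whether $[\gamma_i]\widehat{\PO_0}[\gamma_{i+1}]$; both amount to the same argument.
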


In this proof, we will include subscripts for equivalence classes to avoid confusing the elements from quotient set $\quotient{\Sigma_{u}}{\equiv_{\sq_{0}}}$ with the elements from the quotient comtrace monoid $\quotient{\E^*_{\theta}}{\equiv_{\theta}}$. In other words, we write $[\alpha]_{\equiv_{\sq_{0}}}$ to denote an element of the quotient set $\quotient{\Sigma_{u}}{\equiv_{\sq_{0}}}$, and we write $[u]_{\theta}$ to denote the comtrace generated by the step sequence $u$.

\begin{proof}Let $S'=(\Sigma_{u},\PO_{0},\sq_{0})$. To show $S_{[u]} \supseteq S'$, by \propref{stratsubset}, it suffices to show $ext(S_{[u]} )\subseteq ext(S')$. From \theoref{com2sos}, we know that $ext(S_{[u]_{\theta}}) = \set{\lhd_w\mid w\in [u]_{\theta}}$. Thus we only need to show that $\lhd_w \in ext(S')$ for all $w\in [u]_{\theta}$. The main idea of the proof is simple: since every step sequence in $[u]_{\theta}$ is generated from the sequence $u$ according to how $sim$ and $ser$ are defined and $u$ is the step sequence representation of the stratified extension $\lhd$ of $S'$, it follows from how lsos-comtraces are defined that every stratified order $\lhd_{s}$, where $s\in [u]$, must also be a stratified extension of $S'$. Unfortunately, the actual proof is a bit technical and goes as follows.

We observe that from $u$, by \defref{commonoid}, we can generate all the step sequences in the comtrace $[u]_{\theta}$ in stages using the following recursive definition:
\begin{align*}
D^0(u)&\df \set{u}\\
D^n(u)&\df \set{w\mid w\in D^{n-1}(u)\;\vee\; \exists v\in D^{n-1}(u),\ (\ v\eqa_{\theta} w\; \vee\; v\eqa_{\theta}^{-1} w)} 
\end{align*}
Since the set $[u]_{\theta}$ is finite, $[u]_{\theta}=D^n(u)$ for some stage $n\ge 0$. The proof is complete if we can show the following claim.
\begin{center}
\tikzstyle{mybox} = [draw=black, very thick,
    rectangle, rounded corners, inner sep=5pt, inner ysep=5pt]

\begin{tikzpicture}
\node [mybox] (box){%
    \begin{minipage}{0.65\textwidth}\centering
        \textbf{Claim 1.} For all $n\in \mathbb{N}$, we have $\lhd_w\in ext(S')$ for all $w\in D^n(u)$.
    \end{minipage}
};
\end{tikzpicture}%
\end{center}
We prove Claim 1 by induction on $n$.\\
\textbf{Base case:} When $n=0$, $D^0(u)=\set{u}$. Since $\lhd_0 \in ext(T)$, it follows from \propref{isoExt} that $\lhd_u \in ext(S')$. \\
\textbf{Inductive case:} When $n>0$, let $w$ be an element of $D^n(u)$. Then either $w\in D^{n-1}(u)$ or $w\in (D^n(u)\setminus D^{n-1}(u))$. For the former case, by inductive hypothesis, $\lhd_w\in ext(S')$. For the latter case, there must be some element $v\in D^{n-1}(u)$ such that $v\eqa_{\theta} w$ or  $v\eqa_{\theta}^{-1} w$. By induction hypothesis, we already known $\lhd_v\in ext(S')$. We want to show that $\lhd_w \in ext(S')$. There are two cases to consider:\smallskip\\
\textit{\textbf{Case (i):}} \\
When $v\eqa_{\theta} w$, by \defref{commonoid}, there are some $y,z\in \E_{\theta}^*$ and steps $A,B,C\in \E_{\theta}$ such that $v=yAz$ and $w=yBCz$ where $A$, $B$, $C$ satisfy $B\cap C =\emptyset$ and $B\cup C = A$ and $B\times C\subseteq ser$. Let $\h{v}=\h{y}\h{A}\h{z}$ and $\h{w}=\h{y}\h{B}\,\h{C}\h{z}$ be enumerated step sequences of $v$ and $w$ respectively. 

Suppose for a contradiction that $\lhd_w \not\in ext(S')$. By \defref{extsos}, there are $\alpha\in \h{C}$  and $\beta \in \h{B}$ such that $\alpha \sq_{0} \beta$. We now consider the quotient set $\quotient{\h{A}}{\equiv_{\sq_{0}}}$. By  \propref{covlsos} (1), $\quotient{\h{A}}{\equiv_{\sq_{0}}} \subseteq \quotient{\Sigma_{u}}{\equiv_{\sq_{0}}}$. 
Since $\alpha \sq_{0} \beta$, it follows that $[\alpha]_{\equiv_{\sq_{0}}}\hat{\sq}_{0} [\beta]_{\equiv_{\sq_{0}}}$. Thus, from the fact that $\hat{\sq}_{0}$ is a partial order, there must exists a chain
\begin{align}
[\alpha]_{\equiv_{\sq_{0}}}=\; [\gamma_1]_{\equiv_{\sq_{0}}} \;\imm{\hat{\sq}_{0}}\; [\gamma_2]_{\equiv_{\sq_{0}}}\;\imm{\hat{\sq}_{0}}\; \ldots\;\imm{\hat{\sq}_{0}}\;[\gamma_k]_{\equiv_{\sq_{0}}}\;= [\beta]_{\equiv_{\sq_{0}}} 
\label{eq:chain1}
\end{align}
We want to show the following claim.
\begin{center}
\tikzstyle{mybox} = [draw=black, very thick,
    rectangle, rounded corners, inner sep=5pt, inner ysep=10pt]

\begin{tikzpicture}
\node [mybox] (box){%
    \begin{minipage}{0.83\textwidth}\centering
        \textbf{Claim 2.} There are two consecutive elements $[\gamma_i]_{\equiv_{\sq_{0}}}$ and 
        	$[\gamma_{i+1}]_{\equiv_{\sq_{0}}}$ on the chain such that 
	\begin{align*}
	\text{(a) \quad}\ell \Bigl[[\gamma_{i+1}]_{\equiv_{\sq_{0}}}\Bigr]\times \ell \Bigl[[\gamma_i]_{\equiv_{\sq_{0}}}\Bigr]\subseteq ser	
	&&\text{(b) \quad }\neg \bigl([\gamma_i]_{\equiv_{\sq_{0}}} \hat{\PO}_{0} [\gamma_{i+1}]_{\equiv_{\sq_	{0}}}\bigr)
	\end{align*}
    \end{minipage}
};
\end{tikzpicture}%
\end{center}
Note that Claim 2 gives us the desired contradiction for this case by violating the assumption that $T_{0}$ satisfies condition \textsf{LC2} of \defref{lcomtrace}. We can show Claim 2 as follows.

Our proof of Claim~2~(a) is an instance of the following simple combinatorial observation. 
\begin{quote}
Assume that  the elements of a chain are colored such that each element can only be colored either black or white, and furthermore we know that the first element of the chain is colored black and the last element of the chain is colored white. Then there must be some point on the chain where the color is switched from black to white (i.e. there must be two consecutive elements on the chain that are colored black and white respectively).
\end{quote}
We apply this observation to show Claim~2~(a)  as follows. By \theoref{SzpStrat} and the fact that $\lhd_v\in ext(S')$, we know that $\gamma_i \in \h{A}$ for all $i$. This follows from the fact that  the chain \eref{chain1} implies that every $\gamma_i$ must always occur between $\alpha$ and $\beta$ in all stratified extensions of $S'$ and $\alpha,\beta \in \h{A}$. Hence, by \propref{covlsos} (1), we have $[\gamma_i]_{\equiv_{\sq_{0}}}\subseteq \h{A}$ for all $i$, $1\le i \le k$. Also from \textsf{LC3} of \defref{lcomtrace} and that $B\times C \subseteq ser$, we know that either $[\gamma_i]_{\equiv_{\sq_{0}}}\subseteq \h{B}$ or $[\gamma_i]_{\equiv_{\sq_{0}}}\subseteq \h{C}$ for all $i$, $1\le i \le k$. Now we note  that the first element on the chain $[\gamma_1]_{\equiv_{\sq_{0}}}=[\alpha]_{\equiv_{\sq_{0}}}\subseteq \h{C}$ and the last element on the chain $[\gamma_k]_{\equiv_{\sq_{0}}}=[\beta]_{\equiv_{\sq_{0}}}\subseteq \h{B}$. Thus, there exist two consecutive elements $[\gamma_i]_{\equiv_{\sq_{0}}}$ and $[\gamma_{i+1}]_{\equiv_{\sq_{0}}}$ on the chain such that $[\gamma_i]_{\equiv_{\sq_{0}}}\subseteq \h{C}$ and $[\gamma_{i+1}]_{\equiv_{\sq_{0}}}\subseteq \h{B}$.  Since $B\times C \subseteq ser$, we have just shown Claim~2~(a).   

Since $\lhd_v\in ext(S')$ and $\gamma_i\frown_{\lhd_v}\gamma_{i+1}$, by \theoref{SzpStrat}, Claim~2~(b) also follows.\bigskip\\
\textit{\textbf{Case (ii):}} \\
When $v\eqa_{\theta}^{-1} w$, by \defref{commonoid}, there are some $y,z\in \E_{\theta}^*$ and steps $A,B,C\in \E_{\theta}$ such that $v=yBCz$ and $w=yAz$  where $A$, $B$, $C$ satisfy $B\cap C =\emptyset$ and $B\cup C = A$ and $B\times C\subseteq ser$. Using an argument dual to the previous case, we can show that this leads to a contradiction with \textsf{LC1} of \defref{lcomtrace}. 
\qed
\end{proof}

We also need to show that the labeled so-structure defined from each comtrace is indeed an lsos-comtrace. In other words, we need to show the following lemma.

\begin{lemma}  \label{lem:l3}
Let $\theta=(E,sim,ser)$ be a comtrace alphabet. Given a step sequence  $u \in \E_{\theta}^{*}$, the lp-isomorphic class $\seq{\Sigma_{[u]},\PO_{[u]},\sq_{[u]},\ell}$ is an lsos-comtrace over $\theta$. 
\end{lemma}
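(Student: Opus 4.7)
The plan is to verify the five conditions \textsf{LC1}--\textsf{LC5} of \defref{lcomtrace} for the labeled so-structure $T := \langle \Sigma_{[u]}, \PO_{[u]}, \sq_{[u]}, \ell \rangle$, using three main ingredients: \theoref{com2sos} ($S_{[u]}$ is a finite so-structure with $ext(S_{[u]}) = \set{\lhd_w \mid w \in [u]}$), \corref{com2sos} (characterising $\PO_{[u]}$ and $\sq_{[u]}$ as intersections over stratified extensions), and \propref{covlsos} (which connects $\sq_{[u]}$-cycle classes to specific stratified extensions).

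I would handle \textsf{LC4} and \textsf{LC5} first, as they are essentially syntactic. For \textsf{LC5}, if $(\ell(\alpha),\ell(\beta))\notin sim$, then no step of any $w \in [u]$ contains both $\alpha$ and $\beta$, so they are ordered in every $\lhd_w \in ext(S_{[u]})$, and \corref{SzpStrat}(2) gives $\alpha \PO_{[u]} \beta$ or $\beta \PO_{[u]} \alpha$. For \textsf{LC4}, with $(\ell(\alpha),\ell(\beta))\notin ser$, a case split on the positions of $\alpha$ and $\beta$ in $u$ combined with \defref{s2inv} yields $\alpha \PO_u \beta$ or $\beta \sq_u \alpha$; since $(\Sigma_u, \PO_u, \sq_u) \subseteq S_{[u]}$ by \propref{so-cl}(1), this lifts to $\PO_{[u]}$ or $\sq_{[u]}$.

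The three synchronous-step conditions \textsf{LC1}, \textsf{LC2}, \textsf{LC3} all follow a common contradiction template: assume the conclusion fails, use \propref{covlsos} to exhibit some $w \in [u]$ whose steps are $\sq_{[u]}$-cycle classes arranged conveniently, apply a comtrace rewrite (splitting a step via serialization, or merging two consecutive steps) to obtain $w' \in [u]$, and then extract two event occurrences whose relative order in $w'$ contradicts a relation of $S_{[u]}$ via \corref{com2sos}. For \textsf{LC3}, given $A \cup B = [\alpha]$ with $\ell[A] \times \ell[B] \subseteq ser$ (we may assume $A \cap B = \emptyset$, else $ser$'s irreflexivity gives the conclusion immediately), split the single-step representation of $[\alpha]$ supplied by \propref{covlsos}(2) into $\ell[A]\ell[B]$, making any $\alpha' \in A$, $\beta' \in B$ satisfy $\alpha' \lhd_{w'} \beta'$; this contradicts $\beta' \sq_{[u]} \alpha'$, which holds because $\alpha', \beta'$ share a $\sq_{[u]}$-cycle class. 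For \textsf{LC1}, if $[\alpha]\imm{\hat{\sq}}[\beta]$, $\alpha' \PO_{[u]} \beta'$ for some pair, and $\ell[[\alpha]] \times \ell[[\beta]] \subseteq ser$, then merging the consecutive steps $\ell[[\alpha]], \ell[[\beta]]$ supplied by \propref{covlsos}(3) puts $\alpha'$ and $\beta'$ into a common step of $w'$, contradicting $\alpha' \PO_{[u]} \beta'$.

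The hardest case is \textsf{LC2}, because $\ell[[\beta]]\times\ell[[\alpha]]\subseteq ser$ on its own supports neither a direct merge nor a swap of the two consecutive steps of \propref{covlsos}(3). The remedy is to first produce a stratified extension $\lhd^* = \lhd_{w^*}$ in which $[\alpha]\cup[\beta]$ forms a single step of $w^*$. Existence of such $\lhd^*$ follows because $\neg([\alpha]\hat{\PO}[\beta])$ together with $[\alpha]\hat{\sq}[\beta]$ implies, via axiom \textsf{S3}, that every pair $(\alpha',\beta')\in[\alpha]\times[\beta]$ satisfies both $\alpha'\sq_{[u]}\beta'$ and $\alpha'\not\PO_{[u]}\beta'$; \theoref{SzpStrat} then supplies an extension $\lhd^*$ with $\alpha'\frown_{\lhd^*}\beta'$, and \propref{covlsos}(1) propagates this to make all of $[\alpha]\cup[\beta]$ mutually simultaneous in $\lhd^*$. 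The single step $\ell[[\alpha]]\cup\ell[[\beta]]$ of $w^*$ now admits the serialization $\ell[[\beta]]\ell[[\alpha]]$ (disjointness is guaranteed by $ser$'s irreflexivity), yielding $w^{**}\in[u]$ with $\beta'\lhd_{w^{**}}\alpha'$, which contradicts $\alpha'\sq_{[u]}\beta'$.
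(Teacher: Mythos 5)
Your overall route is the paper's own: obtain that $(\Sigma_{[u]},\PO_{[u]},\sq_{[u]})$ is an so-structure from \theoref{com2sos}, then verify \textsf{LC1}--\textsf{LC5} by combining \theoref{SzpStrat}, \corref{SzpStrat}, \propref{covlsos} and comtrace rewriting. Your \textsf{LC1}, \textsf{LC3}, \textsf{LC4} and \textsf{LC5} arguments are correct, and in fact more complete than the published proof, which writes out only \textsf{LC1} and \textsf{LC5}, dispatches \textsf{LC4} by definition-chasing, and declares \textsf{LC2} and \textsf{LC3} ``similar''; your observation that $A\cap B\neq\emptyset$ in \textsf{LC3} is absorbed by irreflexivity of $ser$ is a detail the paper skips. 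The genuine gap is in \textsf{LC2}, at the construction of $\lhd^*$. \theoref{SzpStrat} hands you \emph{some} extension with $\alpha'\frown_{\lhd^*}\beta'$ --- you do not get to choose which --- and \propref{covlsos}(1) then forces the $\lhd^*$-step containing $\alpha'$ and $\beta'$ to \emph{contain} $[\alpha]\cup[\beta]$, but not to \emph{equal} it. Every step of a stratified extension is a union of $\eqrel{\sq_{[u]}}$-classes, and nothing in your argument prevents that step from also containing a third class $[\gamma]$ whose elements happen to be simultaneous with both $[\alpha]$ and $[\beta]$ in this particular extension. In that case the step of $w^*$ is $\ell\bigl[[\alpha]\bigr]\cup\ell\bigl[[\beta]\bigr]\cup\ell\bigl[[\gamma]\bigr]$, and your split into $\ell\bigl[[\beta]\bigr]\,\ell\bigl[[\alpha]\bigr]$ is not a legal instance of \defref{commonoid}: $\ell\bigl[[\gamma]\bigr]$ must land on one side of the split, and you know nothing about $ser$ between those labels and the rest. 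An extension whose step is exactly $[\alpha]\cup[\beta]$ does exist under the \textsf{LC2} hypotheses, but your derivation does not prove it.

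The fix keeps your idea but controls the extension instead of quoting \theoref{SzpStrat}: start from the extension of \propref{covlsos}(3), with $\Omega_\lhd = [\delta_1]\ldots[\alpha][\beta]\ldots[\delta_m]$, and merge exactly the two consecutive steps $[\alpha]$ and $[\beta]$ into one. The only pairs whose order changes lie in $[\alpha]\times[\beta]$ (they pass from $\lhd$ to $\frown$), so the merged order is still in $ext(S_{[u]})$: your hypothesis $\neg([\alpha]\hat{\PO}[\beta])$ rules out $\PO$ in one direction, and $[\beta]\hat{\PO}[\alpha]$ is impossible because by \textsf{S2} it would give $\beta'\sq_{[u]}\alpha'$, which together with $\alpha'\sq_{[u]}\beta'$ collapses $[\alpha]=[\beta]$, contradicting $[\alpha]\imm{\hat{\sq}}[\beta]$; and simultaneity never violates a $\sq_{[u]}$-constraint. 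Now \theoref{com2sos} certifies that the merged step sequence lies in $[u]$ --- note that membership in $[u]$ need not be witnessed by rewriting; being a stratified extension of $S_{[u]}$ suffices --- its relevant step is exactly $\ell\bigl[[\alpha]\bigr]\cup\ell\bigl[[\beta]\bigr]$, and your $ser$-split into $\ell\bigl[[\beta]\bigr]\,\ell\bigl[[\alpha]\bigr]$ then goes through verbatim, giving $\beta'\lhd_{w^{**}}\alpha'$ against $\alpha'\sq_{[u]}\beta'$. This is precisely the work hidden by the paper's one-line ``\textsf{LC2} similarly'': unlike in \textsf{LC1}, where the merge is licensed by the rewriting relation itself, the merge in \textsf{LC2} must be justified order-theoretically, and you correctly sensed this even though your particular justification falls short.
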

\begin{proof} Let $T=\bigl(\Sigma_{[u]},\PO_{[u]},\sq_{[u]},\ell \bigr)$. From \theoref{com2sos}, $T$ is a labeled so-structure. It only remains to show that $T$ satisfies conditions \textsf{LC1}-\textsf{LC5} of \defref{lcomtrace}. 

\textsf{LC1}: Assume  $[\alpha] (\imm{\hat{\sq}}\cap \hat{\PO}) [\beta]$ and suppose for a contradiction that $\ell\bigl[[\alpha]\bigr]\times \ell\bigl[[\beta]\bigr] \subseteq ser$. Then from \propref{covlsos} (3), there exists $\lhd \in ext(T)$ such that $\Omega_{\lhd} = v[\alpha][\beta]w$. By \theoref{com2sos}, we have $\lhd \in  \bset{\lhd_{s}\mid s \in [u]} = ext(S_{[u]})$. Thus, the step sequence $\map(\ell,v[\alpha][\beta]w)$ is in the comtrace $[u]$. Since $\ell\bigl[[\alpha]\bigr]\times \ell\bigl[[\beta]\bigr]\subseteq ser$, the step sequence $\map(\ell,u([\alpha]\cup [\beta])v)$ is also in  $[u]$. Hence, the stratified order $\lhd'$ satisfying $\Omega_{\lhd'} =  u([\alpha]\cup [\beta])v$ is also an extension of $T$, where $\alpha \frown_{\lhd'} \beta$. But this contradicts that $[\alpha] \hat{\PO} [\beta]$. We can also show \textsf{LC2} and \textsf{LC3} similarly. 

\textsf{LC4}: Follows from Definitions \ref{def:s2inv} and \ref{def:s2sos} and the $\lozenge$-closure definition.

\textsf{LC5}: Assume that $\alpha \not \PO \beta$ and $\beta \not \PO \alpha$, then $\alpha \frown_{\PO}\beta$. Thus,  it follows from \corref{SzpStrat} that there exists $\lhd \in ext(T)$ where $\alpha \frown_\lhd \beta$. Since $\bset{\lhd_{s}\mid s \in [u]} = ext(S_{[u]})$, there exists a step sequence $s\in [u]$ such that $s = \map(\ell,\Omega_\lhd)$. This implies $\alpha$ and $\beta$ belong to the same step in the step sequence  $\h{s}$. Thus,  $(\ell(\alpha),\ell(\beta))\in sim$.  
\qed
\end{proof}

\begin{definition}[representation mappings $\cl$ and $\lc$] \label{def:repmaps}
Let $\theta$ be a comtrace alphabet. 
\begin{enumerate}
 \item The mapping  $\cl:\quotient{\E_\theta^*}{\equiv_{\theta}} \rightarrow \LCT(\theta)$ is defined as
\[\cl(\mathbf{t}) \df \seq{\Sigma_{\mathbf{t}},\PO_{\mathbf{t}},\sq_{\mathbf{t}},\ell},\]
where the function $\ell:\Sigma_s\rightarrow E$ is defined in \secref{steps} and $S_{\mathbf{t}}=(\Sigma_{\mathbf{t}},\PO_{\mathbf{t}},\sq_{\mathbf{t}})$ is the so-structure defined by the comtrace $\mathbf{t}$ from \defref{s2sos}. 
 \item The mapping  $\lc:\LCT(\theta)\rightarrow \quotient{\E_\theta^*}{\equiv_{\theta}}$ is defined as
\[\lc\bigl((X,\PO,\sq,\lambda)\bigr) \df \Bset{map(\lambda,\Omega_\lhd)\mid \lhd \in ext\bigl((X,\PO,\sq)\bigr)}.\]
\EOD  
\end{enumerate}
\end{definition}

Intuitively, the mapping $\cl$ is used to convert a comtrace to  an lsos-comtrace while the mapping $\lc$ is used to transform an lsos-comtrace into a comtrace. The next theorem will show that $\cl$ and $\lc$ are valid representation mappings for $\quotient{\E_\theta^*}{\equiv_{\theta}}$ and  $\LCT(\theta)$.

\begin{theorem}[The 1$^{st}$ Representation Theorem]  \label{theo:rep}
Let $\theta$ be a comtrace alphabet.
\begin{enumerate}
\item For every $\mathbf{t}\in \quotient{\E_\theta^*}{\equiv_{\theta}}$, $\lc\circ \cl(\mathbf{t}) = \mathbf{t}$.
\item For every $T\in \LCT(\theta)$, $\cl\circ \lc(T) = T$.
\end{enumerate}
In other words, the following diagram commutes.
\begin{center}
      \begin{tikzpicture}[->,>=stealth',node distance=40mm]
        \tikzstyle{structure node}=
        [ minimum size=7mm,%
          circle,%
	text centered ]

        \node [structure node](A) {$\boldsymbol {\E^*/\equiv_{\theta}}$};
	\node [structure node] (B) [right of=A] {$\LCT(\theta)$};

        \path [thick] 
			 (A) edge[bend left] node[right,xshift= -6 mm,yshift=3mm]
                  {$\mathsf{ct2lct}$} (B) 
                  	 (A)	edge[loop left] node {$\mathsf{ id}_{\E^*/\equiv_{\theta}}$} (A)
                         (B) edge[bend left] node[right,xshift= -6 mm,yshift=-3mm]
                  {$\mathsf{lct2ct}$} (A)
                  	(B)	edge[loop right] node {$\mathsf{ id}_{\LCT(\theta)}$} (B); 

      \end{tikzpicture}
\end{center}
\end{theorem}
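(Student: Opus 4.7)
The plan is to assemble Theorem~\ref{theo:com2sos}, the canonical-representation machinery from Proposition~\ref{prop:isoExt}, and Lemmas~\ref{lem:l1}--\ref{lem:l3}. Lemma~\ref{lem:l3} already guarantees that $\cl$ lands in $\LCT(\theta)$, so $\cl$ is well-defined; the well-definedness of $\lc$ — that $\lc(T)$ really is a single $\equiv_\theta$-class — will drop out as a byproduct of part (2).

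For part (1), I would just unfold the definitions and invoke Theorem~\ref{theo:com2sos}. Starting from $\mathbf{t}\in\E^*_\theta/\!\equiv_\theta$ and expanding $\lc(\cl(\mathbf{t}))$ yields $\bset{\map(\ell,\Omega_\lhd)\mid \lhd\in ext(S_\mathbf{t})}$. Theorem~\ref{theo:com2sos} identifies $ext(S_\mathbf{t})$ with $\set{\lhd_u\mid u\in\mathbf{t}}$, and for any step sequence $u$ we have $\Omega_{\lhd_u}=\h{u}$ and $\map(\ell,\h{u})=u$ by the construction in \secref{steps}. The composition collapses to $\set{u\mid u\in\mathbf{t}}=\mathbf{t}$.

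For part (2), I would fix an lsos-comtrace $T=\seq{X,\PO,\sq,\lambda}$ and pass to its canonical representation $T_0=(\Sigma,\PO_0,\sq_0,\ell)$ from \secref{can}, which is lp-isomorphic to $T$ via $\enu$ by Proposition~\ref{prop:isoExt}(1). Pick any $\lhd_0\in ext(T_0)$ and set $u=\map(\ell,\Omega_{\lhd_0})$; this is a valid step sequence over $\theta$ by Proposition~\ref{prop:validss}, and by construction $\Sigma=\Sigma_u$ with $\lhd_0=\lhd_u$. Lemmas~\ref{lem:l1} and~\ref{lem:l2} together give $S_{[u]}=(\Sigma_u,\PO_0,\sq_0)$, so after attaching the labelling $\ell$ we read off $\cl([u])=T_0$, hence $\cl([u])=T$ at the level of lp-isomorphism classes.

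It remains to verify $\lc(T)=[u]$, which also settles the well-definedness of $\lc$. For $\lc(T)\subseteq[u]$: any $\lhd\in ext(T)$ produces $w=\map(\lambda,\Omega_\lhd)$, and Proposition~\ref{prop:isoExt}(2) gives $\lhd_w\in ext(T_0)=ext(S_{[u]})$, so Theorem~\ref{theo:com2sos} yields $w\in[u]$. The reverse inclusion is symmetric: for $v\in[u]$ the order $\lhd_v$ lies in $ext(S_{[u]})=ext(T_0)$, and pulling back along $\enu^{-1}$ produces a stratified extension $\lhd\in ext(T)$ with $\map(\lambda,\Omega_\lhd)=v$. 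Combining these, $\cl(\lc(T))=\cl([u])=T$. The main conceptual content already resides in Lemmas~\ref{lem:l1}--\ref{lem:l3}; the principal difficulty at the level of the theorem itself is the careful bookkeeping of moving between $T$ and its canonical form $T_0$ via the renaming $\enu$ and verifying that the step-sequence and stratified-extension constructions commute with this renaming. Once that correspondence is in hand, both identities follow by straightforward assembly of the already-proved ingredients.
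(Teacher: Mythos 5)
Your proposal is correct and takes essentially the same route as the paper's own proof: part (1) is the identical unfolding via Theorem~\ref{theo:com2sos}, and part (2) matches the paper's argument of passing to the canonical representation $T_0$, using Lemmas~\ref{lem:l1} and~\ref{lem:l2} to identify $S_{[u]}=(\Sigma,\PO_0,\sq_0)$ for a step sequence $u$ read off a stratified extension of $T_0$, then invoking Theorem~\ref{theo:com2sos}. The only cosmetic difference is that you spell out the two inclusions establishing $\lc(T)=[u]$ explicitly (via Proposition~\ref{prop:isoExt}), whereas the paper compresses this into the single chain $\Delta=\bset{\map(\ell,\Omega_\lhd)\mid \lhd\in ext(S_{[u]})}=[u]$.
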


\begin{proof}\textbf{1. }The fact that ${\rm ran}(\cl)\subseteq \LCT(\theta)$ follows from \lemref{l3}. Now for a given $\mathbf{t}\in \quotient{\E_\theta^*}{\equiv_{\theta}}$,  we have $\cl(\mathbf{t})=(\Sigma_{\mathbf{t}},\PO_{\mathbf{t}},\sq_{\mathbf{t}},\ell)$. Thus, it follows that
\begin{align*}
\lc(\cl(\mathbf{t}))	& = \bset{\map(\ell,\Omega_\lhd)\mid \lhd \in ext(S_\mathbf{t})} 	& \\
		& = \bset{\map(\ell,\Omega_\lhd)\mid \lhd \in \set{\lhd_s\mid s\in \mathbf{t}}}	&\ttcomment{\text{by \theoref{com2sos}}}\\
		& = \bset{\map(\ell,\Omega_{\lhd_{s}})\mid s\in \mathbf{t}} &\\
	  	&= \mathbf{t}	&
\end{align*} 

\textbf{2. }Let $T_{0}=(\Sigma,\PO_{0},\sq_{0},\ell)$ be the canonical representation of $T$. Note that since $T_0\cong T$, we have 
\[\bset{\map(\ell,\Omega_\lhd)\mid \lhd \in ext(T_0)} = \bset{\map(\lambda,\Omega_\lhd)\mid \lhd \in ext(T)}.\]

Let $\Delta = \bset{\map(\ell,\Omega_\lhd)\mid \lhd \in ext(T_0)}$. We will show that $\Delta\in \quotient{\E_\theta^*}{\equiv_{\theta}}$ and  $\cl\bigl(\Delta\bigr) = \seq{T_0}$. Fix an arbitrary $u\in\Delta$, then by Lemmas \ref{lem:l1} and \ref{lem:l2}, it follows that $S_{[u]} = (\Sigma,\PO_{0},\sq_{0})$. Thus, from \theoref{com2sos}, we have 
\[\Delta =  \bset{\map(\ell,\Omega_\lhd)\mid \lhd \in ext(S_{[u]})} = [u].\] 
And the rest follows.  
\qed
\end{proof}

The theorem says that the mappings $\cl$ and $\lc$ are inverses of each other and hence are both \emph{bijective}.

\subsection{Representation theorem for lsos-comtraces and combined dependency graphs}

Recently, inspired by the dependency graph notion for Mazurkiewicz traces (cf. \cite[Chapter 2]{DR}), Kleijn and Koutny claimed without  a  proof that their \emph{combined dependency graph} notion is another alternative way to define comtraces \cite{KK08}. In this section, we will give a detailed proof of their claim. We will now recall the combined dependency graph definition.

\begin{definition}[combined dependency graph \cite{KK08}] \label{def:comdag}
A \emph{combined dependency graph} (\emph{cd-graph}) over a comtrace alphabet $\theta=(E,ser,sim)$ is (an lp-isomorphic class of) a finite labeled relational structure $D=\seq{X,\cau,\wcau,\lambda}$ such that $\lambda:X\rightarrow E$, the relations $\cau,\wcau$ are irreflexive, $D^{\lozenge}$ is an so-structure, and for all elements $\alpha\not=\beta$ of $X$,
\begin{center}
\tikzstyle{mybox} = [draw=black, very thick,
    rectangle, rounded corners, inner sep=5pt, inner ysep=10pt]

\begin{tikzpicture}
\node [mybox] (box){%
    \begin{minipage}{.5\textwidth}
\begin{itemize}
\item[] \textsf{CD1:\mbox{\hspace{5mm}}} $(\lambda(\alpha),\lambda(\beta))\not\in sim \implies \alpha \cau \beta \vee \beta \cau \alpha$
\item[] \textsf{CD2:\mbox{\hspace{5mm}}} $(\lambda(\alpha),\lambda(\beta))\not\in ser \implies \alpha \cau \beta \vee \beta \wcau \alpha$
\item[] \textsf{CD3:\mbox{\hspace{5mm}}} $\alpha \cau \beta \implies (\lambda(\alpha),\lambda(\beta))\not\in ser$
\item[] \textsf{CD4:\mbox{\hspace{5mm}}} $\alpha \wcau \beta \implies (\lambda(\beta),\lambda(\alpha))\not\in ser$
\end{itemize}
    \end{minipage}
};
\end{tikzpicture}%
\end{center}
We will write $\DCD(\theta)$ to denote the class of all cd-graphs over $\theta$.  \EOD
\end{definition}

Cd-graphs can be seen as a reduced graph-theoretic representation for lsos-comtraces, where some arcs that can be recovered using $\lozenge$-closure are omitted. It is interesting to observe that the non-serializable sets of a cd-graph are exactly the \emph{strongly connected components} of the directed graph $(X,\wcau)$ and can easily be found in time $O(|X|+|\wcau|)$ using any standard algorithm (cf. \cite[Section 22.5]{CLR}). \bigskip\\
\begin{minipage}{0.60\textwidth}\begin{remark} Cd-graphs were called \emph{dependence comdags} in \cite{KK08}. But this name could be misleading since the directed graph $(X,\wcau)$ is not necessarily acyclic. For example, the graph on the right is the cd-graph that corresponds to  the lsos-comtrace from \figref{f1}, but it is not acyclic  (here, we use the dashed edges to denote the relation $\wcau$ and the solid edges to denote \emph{only} the relation  $\cau$). Thus, we use the name ``combined dependency graph'' instead. \EOD\medskip\\ \end{remark}
\end{minipage}
\begin{minipage}{0.4\textwidth}
\centering
\vspace{-3mm}
\begin{tikzpicture}[node distance=14mm,->,>=stealth']
        \tikzstyle{every node}=[circle,fill=black!30,minimum size=15pt,inner sep=0pt]

        \node (A) {$a$};
        \node (C) [below of=A,xshift=1.6cm] {$c$};
	\node (B) [above of=C,xshift=1.6cm] {$c$};
        \node (D) [below of=C,xshift=-1.6cm] {$b$};
        \node (E) [below of=C,xshift=1.6cm] {$b$};

        \path [thick,shorten >=1pt] 
        			(A)  edge[bend right] (E)
                         (C) edge (B) edge (E)
                         (D) edge (C) edge[bend right] (E) edge[bend left] (B);

        \path [dashed,thick,shorten >=1pt]
        			(A) edge (C)
			(B) edge[bend left] (E)
			(E) edge[bend left] (B);
\end{tikzpicture}			
\end{minipage}

We are going to show that the combined dependency graph notion is another correct alternative definition for comtraces. We will define several  representation mappings  that are needed for our proofs.

\begin{definition}[representation mappings $\cd$, $\dl$ and $\ld$] \label{def:drepmaps}
Let $\theta$ be a comtrace alphabet. 
\begin{enumerate}
 \item The mapping  $\cd:\quotient{\E_\theta^*}{\equiv_{\theta}} \rightarrow \DCD(\theta)$ is defined as
\[\cd(\mathbf{t}) \df (\Sigma_{\mathbf{t}},\PO_{u},\sq_{u},\ell),\]
where $u$ is any step sequence in $\mathbf{t}$ and $\PO_{u}$ and $\sq_{u}$ are defined as in \defref{s2inv}. 
 \item The mapping  $\dl:\DCD(\theta)\rightarrow \LCT(\theta)$ is defined as  $\dl(D) \df D^{\lozenge}$.
 \item The mapping  $\ld:\LCT(\theta)\rightarrow \DCD(\theta)$ is defined as
\[\ld(T)\df \cd \circ \lc (T) .\]  
\EOD
\end{enumerate}
\end{definition}

Before proceeding futher, we want to make sure that:
\begin{lemma}
\begin{enumerate}
\item The function $\dl:\DCD(\theta)\rightarrow \LCT(\theta)$ is well-defined.
\item The function $\cd:\quotient{\E_\theta^*}{\equiv_{\theta}} \rightarrow \DCD(\theta)$ is well-defined.
\end{enumerate}
\end{lemma}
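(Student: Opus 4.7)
The plan is to address the two assertions in turn.

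For part (1), given $D = (X, \cau, \wcau, \lambda) \in \DCD(\theta)$, the relational structure $D^\lozenge = (X, \PO, \sq, \lambda)$ is an so-structure by \defref{comdag}, so the task reduces to verifying conditions \textsf{LC1}--\textsf{LC5} of \defref{lcomtrace}. Since $\cau \subseteq \PO$ and $\wcau \subseteq \sq$ by definition of the $\lozenge$-closure, \textsf{LC4} and \textsf{LC5} follow immediately from the contrapositives of \textsf{CD2} and \textsf{CD1}. The conceptually substantial step is \textsf{LC1}--\textsf{LC3}, and my plan hinges on one preliminary observation: no $\cau$ edge can lie strictly inside a single $\sq$-cycle equivalence class. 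Indeed, if $\alpha \cau \beta$ with $\alpha \neq \beta$ and $[\alpha] = [\beta]$, then $\alpha \PO \beta$ and $\beta \sq \alpha$, and axiom \textsf{S4} yields the forbidden $\alpha \PO \alpha$.

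Combined with the immediacy in $[\alpha] \imm{\widehat{\sq}} [\beta]$, this observation forces any $(\cau \cup \wcau)$-path witnessing $x \sq y$ or $x \PO y$ with $x \in [\alpha]$ and $y \in [\beta]$ to stay inside $[\alpha] \cup [\beta]$: an intermediate vertex in a third class $[\gamma]$ would insert $[\gamma]$ strictly between $[\alpha]$ and $[\beta]$ in $\widehat{\sq}$, violating immediacy. From here the contradictions come from a short edge-by-edge case analysis. For \textsf{LC1}, the required $\cau$ edge on a $\PO$-witness cannot be internal to either class (by the observation), cannot point from $[\alpha]$ to $[\beta]$ (by \textsf{CD3} against the assumed $ser$-inclusion), and cannot point from $[\beta]$ to $[\alpha]$ (this would give $[\beta] \widehat{\PO} [\alpha]$, contradicting antisymmetry of $\widehat{\PO}$). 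For \textsf{LC2}, the hypothesis $[\alpha] \not\widehat{\PO} [\beta]$ rules out $\cau$ edges on the $\sq$-path entirely, so the first $[\alpha] \to [\beta]$ crossing is a $\wcau$ edge that violates \textsf{CD4}. For \textsf{LC3}, we first remark $A \cap B = \emptyset$ from irreflexivity of $ser$; then a $\sq$-path from $B$ to $A$ (with only $\wcau$ edges, since it stays inside $[\alpha]$) must cross back, and that crossing is forbidden by \textsf{CD4}.

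For part (2), the task splits into checking that $(\Sigma_\mathbf{t}, \PO_u, \sq_u, \ell)$ satisfies the cd-graph axioms and that it is independent of the chosen $u \in \mathbf{t}$. The first half is a direct check: irreflexivity of $\PO_u$ and $\sq_u$ is visible from \defref{s2inv}; the $\lozenge$-closure is an so-structure since it equals $S_\mathbf{t}$ from \theoref{com2sos}; and \textsf{CD1}--\textsf{CD4} follow from \defref{s2inv} together with $ser \subseteq sim$, the irreflexivity of $sim$, and the fact that any two occurrences sharing a step of $u$ have $sim$-related labels.

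For independence from the choice of $u$, my plan is to reduce, using the transitive definition of $\equiv_\theta$, to the elementary case $u \eqa_\theta v$, i.e., $u = wAz$ and $v = wBCz$ with $B \cup C = A$, $B \cap C = \emptyset$, and $B \times C \subseteq ser$. The identification $\Sigma_u = \Sigma_v$ is transparent from the enumeration rule, and the equalities $\PO_u = \PO_v$ and $\sq_u = \sq_v$ follow from a case analysis on whether each member of a pair of event occurrences lies inside $w$, inside the modified block $A = B \cup C$, or inside $z$. The only nontrivial case is a pair drawn from $B \times C$ or $C \times B$, where $\lhd_u$ and $\lhd_v$ genuinely differ; there, the ``$(\ell(\alpha), \ell(\beta)) \notin ser$'' clause (or its reverse) in \defref{s2inv}, together with $B \times C \subseteq ser$, exactly cancels the difference. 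The main obstacle I anticipate is keeping the case analysis for independence in part (2) systematic without error; the combinatorial path argument in part (1) is the main conceptual content.
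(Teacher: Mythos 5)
Your proposal is correct, and it follows the paper's overall decomposition---for part~(1), reduce to verifying \textsf{LC1}--\textsf{LC5} of \defref{lcomtrace}, get \textsf{LC4}/\textsf{LC5} immediately from \textsf{CD2}/\textsf{CD1}, and handle \textsf{LC1}--\textsf{LC3} by locating a crossing edge on a $(\cau\cup\wcau)$-path and refuting it with \textsf{CD3}/\textsf{CD4}---but the execution genuinely differs at two points. In part~(1) the paper finds the crossing edge by induction on the length of a shortest path in $(X,\cau\cup\wcau)$, whereas you first prove two structural facts (no $\cau$ edge can be internal to a $\sq$-cycle class, via \textsf{S4} and irreflexivity of $\PO$; and the covering hypothesis $[\alpha]\imm{\hat{\sq}}[\beta]$ together with antisymmetry of the quotient orders confines any witnessing path to $[\alpha]\cup[\beta]$ with no edge returning from $[\beta]$ to $[\alpha]$) and then dispatch the unique crossing by cases. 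Your version is arguably tighter: the paper's base case asserts that $\alpha\,(\cau\cup\wcau)\,\beta$ must in fact be $\alpha\cau\beta$ without spelling out why a $\wcau$ edge is excluded, while you start from a path that contains a $\cau$ edge by the very definition of the $\lozenge$-closure of the causality component and show the edge has nowhere consistent to sit; your confinement lemma also cleanly covers \textsf{LC2} and \textsf{LC3} (where only $\wcau$ edges survive and \textsf{CD4} kills the crossing) in place of the paper's ``similarly, by a shortest-path argument.'' In part~(2) the paper simply appeals to the proof of \cite[Lemma 4.7]{JK95} for independence of the chosen representative $u\in\mathbf{t}$ and declares the check of \textsf{CD1}--\textsf{CD4} routine; you instead re-derive the invariance from scratch, reducing along $\equiv_\theta$ to a single elementary swap $wAz \leftrightsquigarrow wBCz$ and checking $\PO_u=\PO_v$ and $\sq_u=\sq_v$ directly, where the $ser$ clauses of \defref{s2inv} cancel exactly the pairs in $\h{B}\times\h{C}$ and $\h{C}\times\h{B}$ on which $\lhd_u$ and $\lhd_v$ disagree (and $B\cap C=\emptyset$ from irreflexivity of $ser$ makes the enumerations coincide). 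That is in substance the external lemma the paper cites, so your route buys self-containment and an explicit view of the mechanism, at the cost of the bookkeeping you correctly flag as the main hazard; the paper's citation buys brevity.
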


\begin{proof} \textbf{1. }Given a cd-graph $D= \seq{X,\cau,\wcau,\lambda}\in \DCD(\theta)$, let $T= \seq{X,\PO,\sq,\lambda} = D^{\lozenge}$. We know from \defref{comdag} that  $(X,\PO,\sq)$ is an so-structure. It remains to show that $T$ is indeed an lsos-comtrace by verifying that $T$ satisfies the conditions \textsf{LC1}-\textsf{LC5} of \defref{lcomtrace}. 

We first verify \textsf{LC1}. Suppose for contradiction that there two distinct $\sq$-cycle equivalence classes $[\alpha],[\beta]\subset X$ satisfying $[\alpha](\imm{\hat{\sq}}\cap \hat{\PO})[\beta]$ but $\lambda\bigl[[\alpha]\bigr]\times \lambda\bigl[[\beta]\bigr] \subseteq ser$. Clearly, this implies that $\alpha \PO \beta$, and thus by the $\lozenge$-closure definition, $\beta$ is reachable from $\alpha$ on the directed graph $G=(X,\bcau)$, where $\bcau = \cau\cup \wcau$. Now we consider a shortest path $P$ 
\[\alpha = \delta_1 \bcau\delta_2 \bcau\ldots\bcau \delta_{k-1}\bcau \delta_k = \beta\]
on $G$ that connects $\alpha$ to $\beta$. Our strategy is to show that there exist two consecutive elements $\delta_i$ and $\delta_{i+1}$ on $P$ such that $\delta_i\in [\alpha]$ and $\delta_{i+1}\in [\beta]$ and $(\lambda(\delta_{i}),\lambda(\delta_{i+1}))\not \in ser$, which contradicts with  $\lambda([\alpha])\times \lambda([\beta]) \subseteq ser$. By \textsf{CD3} it suffices to show the following claim.
\begin{center}
\tikzstyle{mybox} = [draw=black, very thick,
    rectangle, rounded corners, inner sep=5pt, inner ysep=5pt]
\begin{tikzpicture}
\node [mybox] (box){%
    \begin{minipage}{0.67\textwidth}\centering
        \textbf{Claim:} There are two consecutive elements $\delta_i$ and $\delta_{i+1}$ on $P$ such 	that \\
	$\delta_i\in [\alpha]$ and $\delta_{i+1}\in [\beta]$ and  $\delta_i \longrightarrow \delta_{i+1}$.
    \end{minipage}
};
\end{tikzpicture}%
\end{center}
We will prove this claim by induction on the number of elements on the path $P$, where $k\ge 2$. 
\begin{itemize}
\item[]\textbf{Base case:} when $k=2$, then $\alpha \bcau \beta$. Since $[\alpha](\imm{\hat{\sq}}\cap \hat{\PO})[\beta]$, we have $\alpha \cau \beta$.
\item[]\textbf{Inductive case:} when $k>2$, we consider  first two elements on the path $\delta_1$ and  $\delta_2$. If $\delta_1\in [\alpha]$ and $\delta_2\in [\beta]$, then by the assumption $[\alpha](\imm{\hat{\sq}}\cap \hat{\PO})[\beta]$,  it must be the case that  $\delta_1 \cau \delta_2$. Otherwise, we have $\delta_2\not\in [\alpha]\cup [\beta]$ or $\bset{\delta_1,\delta_2}\subseteq [\alpha]$. For the first case, we get $[\alpha]\hat{\sq}[\delta_2]\hat{\sq}[\beta]$, which contradicts that $[\alpha]\imm{\hat{\sq}}[\beta]$. For the latter case, we can apply the induction hypothesis on the path $\delta_2 \bcau\ldots\bcau \delta_{k-1}\bcau \delta_k$.
\end{itemize}
\textsf{LC2} and \textsf{LC3} can also be shown similarly using a ``shortest path'' argument as above. These proofs are easier since we only need to consider paths with edges in $\wcau$.  \textsf{LC4} and \textsf{LC5} easily follow from the fact that the cd-graph $D$ satisfies  \textsf{CD1} and \textsf{CD2}.\\
 
\textbf{2. }By the proof of \cite[Lemma 4.7]{JK95}, for any two step sequences $t$ and $u$ in $\E_\theta^*$, we have $u \eqb t$ if and only if $\cd([u]) = \cd([t])$. This ensures that  $\cd(\mathbf{t})$ gives us the same cd-graph no matter how we choose the step sequence $u \in \mathbf{t}$.  It is also not hard to check that the range of $\cd$ consists only of cd-graphs over $\theta$.  Thus the mapping $\cd$ is well-defined. 
\qed
\end{proof}

\begin{lemma} \label{lem:dlinj}
The mapping  $\dl:\DCD(\theta)\rightarrow \LCT(\theta)$ is injective.
\end{lemma}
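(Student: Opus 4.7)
The plan is to show that a cd-graph $D = \seq{X,\cau,\wcau,\lambda}$ can be uniquely reconstructed from its $\lozenge$-closure. Writing $D^{\lozenge} = (X,\PO,\sq,\lambda)$, I will first establish the two characterizations
\begin{align*}
\alpha \cau \beta &\iff \alpha \PO \beta \;\wedge\; (\lambda(\alpha),\lambda(\beta))\notin ser, \\
\alpha \wcau \beta &\iff \alpha \sq \beta \;\wedge\; (\lambda(\beta),\lambda(\alpha))\notin ser.
\end{align*}
The forward directions are straightforward: $\cau \subseteq \PO$ and $\wcau \subseteq \sq$ by \defref{SO-CL}, while \textsf{CD3} and \textsf{CD4} supply the label constraints.

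For the backward direction of the first equivalence, assume $\alpha \PO \beta$ with $(\lambda(\alpha),\lambda(\beta)) \notin ser$. By \textsf{CD2} either $\alpha \cau \beta$ or $\beta \wcau \alpha$; the second alternative would give $\beta \sq \alpha$, contradicting $\alpha \PO \beta$ since in any so-structure $\alpha \PO \beta \Rightarrow \beta \not\sq \alpha$ (the observation recorded right after \defref{sos}). The backward direction of the second equivalence is dual: from $\alpha \sq \beta$ and $(\lambda(\beta),\lambda(\alpha))\notin ser$, apply \textsf{CD2} to the pair $(\beta,\alpha)$ to get $\beta \cau \alpha \vee \alpha \wcau \beta$, and rule out the first disjunct because it would yield $\beta \PO \alpha$, contradicting $\alpha \sq \beta$.

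With these characterizations in hand, suppose $\dl(D_1) = \dl(D_2)$ for two cd-graphs $D_i = \seq{X_i,\cau_i,\wcau_i,\lambda_i}$. Fix a witnessing lp-isomorphism $f: X_1 \to X_2$ between $D_1^{\lozenge}$ and $D_2^{\lozenge}$; by definition $f$ preserves $\PO$, $\sq$ and labels. Applying the two characterizations to both $D_1$ and $D_2$, the relations $\cau_i$ and $\wcau_i$ are completely determined by $\PO_i$, $\sq_i$, $\lambda_i$ and the fixed alphabet $\theta$, so $f$ also preserves $\cau$ and $\wcau$. Hence $f$ is an lp-isomorphism from $D_1$ to $D_2$, giving $[D_1] = [D_2]$ in $\DCD(\theta)$.

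The only delicate step is the backward direction of the two characterizations; everything else is bookkeeping. The subtlety is that \textsf{CD2} alone does not say which of $\cau$ or $\wcau$ is active, and we need the so-structure axioms on $D^{\lozenge}$ (specifically $\PO \Rightarrow \sq$ together with the irreflexivity consequence $\alpha \PO \beta \Rightarrow \beta \not\sq \alpha$) to eliminate the wrong disjunct. Once this is settled, the injectivity conclusion is immediate.
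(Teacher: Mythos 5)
Your proposal is correct and is essentially the paper's own argument: the paper also shows $\cau_1\subseteq\cau_2$ (and dually $\wcau_1\subseteq\wcau_2$) by combining \textsf{CD3} (resp.\ \textsf{CD4}) with the \textsf{CD2} dichotomy and then eliminating the wrong disjunct via $(X,\cau_i,\wcau_i)\subseteq(X,\PO,\sq)$ together with the so-structure fact $\alpha\PO\beta\Rightarrow\beta\not\sq\alpha$, exactly the step you identify as the delicate one. Your packaging of this as an explicit reconstruction of $\cau$ and $\wcau$ from $(\PO,\sq,\lambda)$ is a harmless rephrasing (and handles the lp-isomorphism bookkeeping slightly more carefully than the paper, which simply fixes a common carrier set $X$), but it is not a different route.
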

\begin{proof} Assume that $D_1,D_2 \in \DCD(\theta)$, such that 
\[\dl(D_1)=\dl(D_2) = T= [X,\PO,\sq,\lambda].\] 
Since $\lozenge$-closure does not change the labeling function, we can assume  $D_i= [X,\cau_i,\wcau_i,\lambda]$ and $(X,\cau_i,\wcau_i)^{\lozenge} = (X,\PO,\sq)$. We want to  show that $(X,\cau_1,\wcau_1)\subseteq(X,\cau_2,\wcau_2)$.

($\cau_1\;\subseteq\;\cau_2$): Let $\alpha,\beta\in X$ such that $\alpha \cau_1 \beta$. Suppose for  a contradiction that $\neg (\alpha \cau_2 \beta)$. Since $\alpha \cau_1 \beta$, by \textsf{CD3},  $(\lambda(\alpha),\lambda(\beta))\not\in ser$. Thus, by \textsf{CD2}, we must have $\beta \wcau_2 \alpha$. But since $(X,\cau_i,\wcau_i)^{\lozenge} = (X,\PO,\sq)$, it follows that $(X,\cau_i,\wcau_i) \subseteq (X,\PO,\sq)$ (by \propref{so-cl}). Thus, $\alpha\PO\beta$ and $\beta \sq \alpha$, a contradiction.

($\wcau_1\;\subseteq\;\wcau_2$): Can be proved similarly.

By reversing the role of $D_1$ and $D_2$, we have $(X,\cau_1,\wcau_1)\supseteq(X,\cau_2,\wcau_2)$. Thus, $D_1=D_2$.
\qed
\end{proof}

We are now ready to show the following representation theorem which ensures that $\ld$ and $\dl$ are valid representation mappings for $\LCT(\theta)$ and $\DCD(\theta)$.

\begin{theorem}[The 2$^{nd}$ Representation Theorem] \label{theo:deprep}
Let $\theta$ be a comtrace alphabet.
\begin{enumerate}
\item For every $T\in \LCT(\theta)$, $\dl \circ \ld(T) = T$. 
\item For every $D\in \DCD(\theta)$, $\ld \circ \dl(D) = D$. 
\end{enumerate}
In other words, the following diagram commutes.
\begin{center}
      \begin{tikzpicture}[->,>=stealth',node distance=40mm]
        \tikzstyle{struct node}=
        [%
          minimum size=7mm,%
          circle,%
          thick,%
	text centered
        ]

        \node [struct node](A) {$\LCT(\theta)$};
	\node [struct node] (B) [right of=A] {$\DCD(\theta)$};

        \path [thick] 
			 (A) edge[bend left] node[right,xshift= -6 mm,yshift=3mm]
                  {$\mathsf{lct2dep}$} (B) 
                  	 (A)	edge[loop left] node {$\mathsf{ id}_{\LCT(\theta)}$} (A)
                         (B) edge[bend left] node[right,xshift= -6 mm,yshift=-3mm]
                  {$\mathsf{dep2lct}$} (A)
                  	(B)	edge[loop right] node {$\mathsf{ id}_{\DCD(\theta)}$} (B); 

      \end{tikzpicture}
\end{center}
\end{theorem}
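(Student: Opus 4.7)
The strategy is to reduce this theorem to the First Representation Theorem (\theoref{rep}) together with the injectivity of $\dl$ (\lemref{dlinj}), via the simple but crucial identity $\dl \circ \cd = \cl$.

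\textbf{Step 1: Establish $\dl \circ \cd = \cl$.} Fix $\mathbf{t} \in \quotient{\E^*_{\theta}}{\equiv_{\theta}}$ and any $u \in \mathbf{t}$. By \defref{drepmaps}, $\cd(\mathbf{t}) = \seq{\Sigma_{\mathbf{t}}, \PO_u, \sq_u, \ell}$, where $\PO_u$ and $\sq_u$ are as in \defref{s2inv}. Applying $\dl$ simply performs the $\lozenge$-closure on the underlying relational structure while leaving the labeling function $\ell$ and carrier $\Sigma_{\mathbf{t}}$ untouched. By \defref{s2sos}, $(\Sigma_u, \PO_u, \sq_u)^\lozenge = S_{\mathbf{t}} = (\Sigma_{\mathbf{t}}, \PO_{\mathbf{t}}, \sq_{\mathbf{t}})$, so $\dl(\cd(\mathbf{t})) = \seq{\Sigma_{\mathbf{t}}, \PO_{\mathbf{t}}, \sq_{\mathbf{t}}, \ell} = \cl(\mathbf{t})$ by \defref{repmaps}. (The fact that this is independent of the choice of $u \in \mathbf{t}$ is already guaranteed by the well-definedness of $\cd$.)

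\textbf{Step 2: Prove part 1.} For any $T \in \LCT(\theta)$, using the definition $\ld = \cd \circ \lc$ and Step 1:
\begin{equation*}
\dl \circ \ld(T) \;=\; \dl \circ \cd \circ \lc(T) \;=\; \cl \circ \lc(T) \;=\; T,
\end{equation*}
where the final equality is \theoref{rep}(2).

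\textbf{Step 3: Prove part 2 by injectivity.} For any $D \in \DCD(\theta)$, let $T = \dl(D) \in \LCT(\theta)$. Applying part 1 to $T$ gives $\dl(\ld(\dl(D))) = \dl(\ld(T)) = T = \dl(D)$. Since $\dl$ is injective by \lemref{dlinj}, we conclude $\ld(\dl(D)) = D$.

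\textbf{Expected main obstacle.} There is no deep obstacle once Step 1 is pinned down: all the hard technical work has already been absorbed into the first representation theorem and the injectivity lemma. The only subtle point to handle carefully is the lp-isomorphism-class wrapping in Step 1 — we must verify that the $\lozenge$-closure commutes with the quotienting by lp-isomorphism (which it does, because $\lozenge$-closure is defined solely in terms of $R_1$ and $R_2$ and preserves the carrier and labeling function, so any lp-isomorphism between cd-graphs lifts to one between their $\lozenge$-closures).
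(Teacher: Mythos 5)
Your proof is correct and takes essentially the same approach as the paper: part 1 rests on the identity $\dl \circ \cd = \cl$ (which the paper invokes more loosely as ``by the way we construct so-structure from comtraces'') combined with \theoref{rep}(2), and part 2 follows from part 1 together with the injectivity of $\dl$ from \lemref{dlinj}. The only difference is presentational — the paper phrases both parts as proofs by contradiction, while you argue directly and spell out the key identity and the lp-isomorphism bookkeeping explicitly, which is arguably cleaner.
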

\begin{proof}\textbf{1. } Let $T\in \LCT(\theta)$ and let $D = \ld(T)$. Suppose for a contradiction that $Q = \dl \circ \ld(T)$ and $Q\not= T$. Since $\ld = \cd\circ \lc$, if we let $\mathbf{t}= \lc(T)$, then  $Q = \dl\circ \cd (\mathbf{t})\not= T$. Thus, we have shown that $\mathbf{t}=\lc(T)$ and by the way we construct so-structure from comtraces, we also have $\cl(\mathbf{t})=\dl\circ \cd (\mathbf{t})=Q$. Since $Q\not= T$, we have $\cl \circ \lc (T) \not= T$, which  contradicts \theoref{rep} (2).

\textbf{2. }Let $D \in \DCD(\theta)$ and $T = \dl(D)$. Suppose for a contradiction that $E =\ld \circ \dl(D)$ and $E\not = D$. From \textbf{1.}, we know that  $\dl \circ \ld (T) = T$, and thus it must be the case that $\dl(E)= T$. Hence, we have $\dl(E)= T = \dl(D)$, but $E\not=D$, which contradicts the injectivity of $\dl$ from \lemref{dlinj}.
\qed
\end{proof}

This theorem shows that both $\ld$ and $\dl$ are bijective. Note that we do not need to prove another representation theorem for cd-graphs and comtraces since their representation mappings are simply the composition of the representation mappings from Theorems \ref{theo:rep} and \ref{theo:deprep}. In other words, we have shown that the following diagram commutes.
\begin{center}
      \begin{tikzpicture}[->,>=stealth',node distance=40mm]
        \tikzstyle{struct node}=
        [%
          	minimum size=5mm,%
          	circle,%
        	thick,%
		text centered
        ]

        \node [struct node](A) {$\boldsymbol{\E^*/\equiv_{\theta}}$};
	\node [struct node] (B) [right of=A] {$\LCT(\theta)$};
	\node [struct node] (C) [below of=B,xshift=-22mm,yshift=10mm] {$\DCD(\theta)$};
	
        \path [thick] 
			 (A) edge[bend left] node[right,xshift= -6 mm,yshift=3mm]
                  {$\mathsf{ct2lct}$} (B) 
                         (B) edge[bend left=20pt] node[right,xshift= -5 mm,yshift=3mm]
                  {$\mathsf{lct2ct}$} (A)
                  	 (A)	edge[loop left] node {$\mathsf{ id}_{\boldsymbol {\E^*/\equiv_{\theta}}}$} (A)
                  
			(B) edge[bend left=30pt] node[right,xshift= 0 mm,yshift=0mm]
                  {$\mathsf{lct2dep}$} (C) 
			(C) edge[bend left=25pt] node[left,xshift= 14 mm,yshift=-1mm]
                  {$\mathsf{dep2lct}$} (B) 

                  	(B)	edge[loop right] node {$\mathsf{ id}_{\LCT(\theta)}$} (B) 

			(C) edge[bend left] (A) 
			(A) edge[bend left=25pt] (C) 

                  	(C)	edge[loop right] node {$\mathsf{ id}_{\DCD(\theta)}$} (C); 

      \end{tikzpicture}
\end{center}
In \secref{op}, after constructing suitable composition operators for lsos-comtraces and cd-graphs, we will show that the representation mappings in this diagram are indeed monoid isomorphisms. Thus, lsos-comtraces and cd-graphs are equivalent representations for comtraces. 

\section{Composition operators \label{sec:op}}
For a comtrace monoid $(\E^*/\!\!\equiv_{\theta},\circledast,[\epsilon])$, the comtrace operator $\_\circledast\_$ was defined as $[r]\circledast[t] = [r\ast t]$. We will construct analogous composition operators for lsos-comtraces and cd-graphs. We will then show that lsos-comtraces (cd-graphs) over a comtrace alphabet $\theta$ together with its composition operator form a monoid isomorphic to the comtrace monoid $(\E^*/\!\!\equiv_{\theta},\circledast,[\epsilon])$. In other words, we need to show that the mappings from Theorems \ref{theo:rep} and \ref{theo:deprep} are compatible with the corresponding monoid operators. 

\subsection{Monoid of lsos-comtraces}
Given two sets $X_{1}$ and $X_{2}$, we  write $X_{1}\uplus X_{2}$ to denote the \emph{disjoint union} of $X_{1}$ and $X_{2}$. Such disjoint union can be easily obtained by renaming the elements in $X_{1}$ and $X_{2}$ so that $X_{1}\cap X_{2} =\emptyset$. We define the lsos-comtrace composition operator as follows.

\begin{definition}[composition of lsos-comtraces] \label{def:op1}
Let $T_{1}$ and $T_{2}$ be lsos-comtraces over an alphabet $\theta=(E,sim,ser)$, where $T_{i}=\seq{X_{i},\prec_{i},\sqsubset_{i},\lambda_{i}}$. The \emph{composition} $T_{1}\odot T_{2}$ is defined as (an lp-isomorphic class of) a labeled so-structure $\seq{X,\prec,\sqsubset,\lambda}$ such that $X=X_{1}\uplus X_{2}$, $\lambda = \lambda_{1}\uplus \lambda_{2}$, and $(X,\prec,\sqsubset) = \left(X,\PO_{\langle1,2\rangle},\sq_{\langle1,2\rangle}\right)^{\lozenge}$, where
\begin{align*}
\PO_{\langle1,2\rangle} &\;=\; \PO_{1}\cup \PO_{2}\cup\; \bset{(\alpha,\beta)\in X_{1}\times X_{2}\mid (\lambda(\alpha),\lambda(\beta))\not\in ser}\\
\sq_{\langle1,2\rangle} &\; =\; \sq_{1}\cup \sq_{2}\cup\; \bset{(\alpha,\beta)\in X_{1}\times X_{2}\mid (\lambda(\beta),\lambda(\alpha))\not\in ser} 
\end{align*}
\EOD 
\end{definition}

The operator $\_\odot\_$ is well-defined since  we can easily check that:
\begin{proposition}
For every $T_1,T_2\in \LCT(\theta)$, $T_1\odot T_2 \in \LCT(\theta)$.   
\end{proposition}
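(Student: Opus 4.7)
The plan is to verify that $T_1 \odot T_2 = \seq{X,\prec,\sqsubset,\lambda}$ satisfies Definition~\ref{def:lcomtrace}, by first confirming that $(X,\prec,\sqsubset)$ is an so-structure and then checking conditions \textsf{LC1}--\textsf{LC5}. The cornerstone observation is that every edge between $X_1$ and $X_2$ occurring in $\PO_{\langle1,2\rangle}\cup\sq_{\langle1,2\rangle}$ is directed from $X_1$ to $X_2$; no such edge points the other way. Hence on the graph $(X,\PO_{\langle1,2\rangle}\cup\sq_{\langle1,2\rangle})$, any directed path that leaves $X_1$ never returns. Two facts follow immediately: (i)~every directed cycle lies entirely within $X_1$ or entirely within $X_2$, and (ii)~each $\sq$-cycle equivalence class of $T_1\odot T_2$ is contained in a single $X_i$ and coincides with a $\sq_i$-cycle equivalence class of $T_i$. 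Combined with the fact that $T_1,T_2$ are so-structures, (i) makes the relation $(\PO_{\langle1,2\rangle}\cup\sq_{\langle1,2\rangle})^*\circ\PO_{\langle1,2\rangle}\circ(\PO_{\langle1,2\rangle}\cup\sq_{\langle1,2\rangle})^*$ irreflexive, so Proposition~\ref{prop:so-cl}(3) guarantees that $(X,\prec,\sqsubset)$ is an so-structure.

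Conditions \textsf{LC4} and \textsf{LC5} fall out easily from the construction. For \textsf{LC4}, given $(\lambda(\alpha),\lambda(\beta))\notin ser$: if $\alpha,\beta$ both lie in the same $X_i$, \textsf{LC4} for $T_i$ transfers up; if $\alpha\in X_1$ and $\beta\in X_2$, the defining condition of $\PO_{\langle1,2\rangle}$ gives $\alpha\prec\beta$ directly; and if $\alpha\in X_2$ and $\beta\in X_1$, the defining condition of $\sq_{\langle1,2\rangle}$ gives $\beta\sqsubset\alpha$. \textsf{LC5} is analogous, exploiting $ser\subseteq sim$ together with the symmetry of $sim$ to handle both directions between $X_1$ and $X_2$ at once.

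The main obstacle lies in \textsf{LC1}--\textsf{LC3}, which quantify over $\sq$-cycle equivalence classes. By fact~(ii), each class $[\alpha]$ sits wholly in one $X_i$. When $[\alpha]$ and $[\beta]$ both lie on the same side, the one-way nature of the between-side edges prevents any class from the opposite side from fitting strictly between them under $\hat{\sqsubset}$, so these conditions reduce to their counterparts in $T_i$. The genuinely new case is $[\alpha]\subseteq X_1$ and $[\beta]\subseteq X_2$ (the reverse cannot occur, again by the one-way property). For \textsf{LC1}, I would argue by contradiction: assuming $[\alpha](\imm{\hat{\sqsubset}}\cap\hat{\prec})[\beta]$ together with $\lambda[[\alpha]]\times\lambda[[\beta]]\subseteq ser$, pick a witness path $\gamma\to\cdots\to\delta$ realizing $\gamma\prec\delta$ for some $\gamma\in[\alpha],\,\delta\in[\beta]$ and locate its unique crossing edge $(u,w)$ with $u\in X_1,\, w\in X_2$. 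The chain $[\alpha]=[\gamma]\,\hat{\sqsubset}^*\,[u]\,\hat{\sqsubset}\,[w]\,\hat{\sqsubset}^*\,[\delta]=[\beta]$ together with the covering hypothesis $[\alpha]\imm{\hat{\sqsubset}}[\beta]$ forces $u\in[\alpha]$ and $w\in[\beta]$. A case analysis on the location of the obligatory $\PO_{\langle1,2\rangle}$-edge on the path then produces the contradiction: if it is the crossing edge $(u,w)$, then $(\lambda(u),\lambda(w))\notin ser$ contradicts $\lambda[[\alpha]]\times\lambda[[\beta]]\subseteq ser$; if it lies entirely within $X_1$, then $\gamma\prec_1 u$ together with $u\sq_1\gamma$ (from $u,\gamma\in[\alpha]$) violates irreflexivity of $\prec_1$ via axiom \textsf{S4} of $T_1$; and the case inside $X_2$ is symmetric. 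Condition \textsf{LC2} is dual, and \textsf{LC3} reduces immediately to \textsf{LC3} for $T_i$ since the cycle-class structure is preserved.
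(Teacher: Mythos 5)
Your proposal is correct, and it is worth knowing that the paper does not actually prove this proposition: it is dispatched with ``we can easily check,'' so your argument fills in a verification the paper leaves entirely to the reader. Your key observation --- that every cross edge of $\PO_{\langle1,2\rangle}\cup\sq_{\langle1,2\rangle}$ is oriented from $X_1$ to $X_2$, so directed paths never return to $X_1$ --- is exactly the right engine: it confines cycles to one side, which together with \propref{so-cl}~(4) applied inside each $T_i$ makes $(\PO_{\langle1,2\rangle}\cup\sq_{\langle1,2\rangle})^*\circ\PO_{\langle1,2\rangle}\circ(\PO_{\langle1,2\rangle}\cup\sq_{\langle1,2\rangle})^*$ irreflexive (so \propref{so-cl}~(3) yields the so-structure property), and it identifies the $\sq$-cycle classes of $T_1\odot T_2$ with those of the $T_i$, which immediately settles \textsf{LC3} and reduces the same-side instances of \textsf{LC1}--\textsf{LC2} to $T_1$ or $T_2$ (for that reduction only the trivial direction is needed: a covering pair in the composition restricts to a covering pair in $T_i$). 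Your treatment of the genuinely new cross case, forcing every vertex of the witness path into $[\alpha]\cup[\beta]$ via the covering hypothesis and then case-splitting on the location of the obligatory causal edge, is sound and is the same chain-plus-covering style of argument the paper itself uses in \lemref{l2} and in the well-definedness proof for $\dl$. Two small points of precision: the step ``$\gamma\prec_1 u$'' is not a single edge but is assembled from the $X_1$-segment of the path via $(\PO_1\cup\sq_1)^*\circ\PO_1\circ(\PO_1\cup\sq_1)^*=\PO_1$, which simultaneously rules out $\gamma=u$; and in the dual \textsf{LC2} cross case the crossing edge itself needs its own subcase --- if it is a causal edge one contradicts $\neg\bigl([\alpha]\hat{\PO}[\beta]\bigr)$ rather than serializability --- which is implicit in the duality you invoke. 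Finally, note an alternative one-line derivation available later in the paper: by surjectivity of $\cl$ (\theoref{rep}~(2)) write $T_i=\cl(\mathbf{t}_i)$, and then \lemref{hom1}~(2) together with \lemref{l3} gives $T_1\odot T_2=\cl(\mathbf{t}_1\circledast\mathbf{t}_2)\in\LCT(\theta)$; but since the paper states the proposition before that machinery, in order to justify that $\odot$ is well defined, your direct combinatorial verification is the appropriate, non-circular route.
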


We will next show that this composition operator $\_\odot\_$ properly corresponds to the operator $\_\circledast\_$  of the comtrace monoid over $\theta$.

\begin{lemma}  \label{lem:hom1}
Let $\theta$ be a comtrace alphabet. Then  
\begin{enumerate}
\item For every $R,T \in \LCT(\theta)$, $\lc(R \odot T) = \lc(R) \circledast \lc(T).$

\item For every $\mathbf{r},\mathbf{t} \in \quotient{\E_\theta^*}{\equiv_{\theta}}$,
$\cl(\mathbf{r}\circledast \mathbf{t})= \cl(\mathbf{r})\odot\cl(\mathbf{t}).$
\end{enumerate}
\end{lemma}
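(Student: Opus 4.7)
The plan is to prove part~(2) first by a direct matching of the ``initial'' relational structures that define both sides prior to $\lozenge$-closure, and then to deduce part~(1) from part~(2) using the bijectivity of $\cl$ and $\lc$ supplied by \theoref{rep}.

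For part~(2), I would fix representatives $r\in \mathbf{r}$ and $t\in \mathbf{t}$, so that $\mathbf{r}\circledast \mathbf{t} = [r\ast t]$. The occurrence set decomposes naturally as $\Sigma_{r\ast t} \cong \Sigma_r\uplus \Sigma_t$ by shifting each event index on the $\Sigma_t$ side by the corresponding count $|r|_a$. A short case analysis on which side each endpoint lies in, using \defref{s2inv} together with the observation that in $r\ast t$ every element of $\Sigma_r$ strictly precedes every element of $\Sigma_t$ along $\lhd_{r\ast t}$, yields
\begin{align*}
\PO_{r\ast t} &= \PO_r \cup \PO_t \cup \bset{(\alpha,\beta) \in \Sigma_r\times\Sigma_t \mid (\ell(\alpha),\ell(\beta))\notin ser} \\
\sq_{r\ast t} &= \sq_r \cup \sq_t \cup \bset{(\alpha,\beta) \in \Sigma_r\times\Sigma_t \mid (\ell(\beta),\ell(\alpha))\notin ser}
\end{align*}
with no cross-pairs arising in the reverse direction. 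The right-hand sides coincide, modulo the identification above, with the initial relational structure defining $\cl(\mathbf{r})\odot \cl(\mathbf{t})$ in \defref{op1}, except that on each of the $\Sigma_r$ and $\Sigma_t$ components the latter uses the already $\lozenge$-closed relations $\PO_\mathbf{r},\sq_\mathbf{r}$ and $\PO_\mathbf{t},\sq_\mathbf{t}$ in place of the basic $\PO_r,\sq_r$ and $\PO_t,\sq_t$. It then remains to observe that both descriptions yield the same so-structure after $\lozenge$-closure; this is a consequence of the idempotence and monotonicity of $\lozenge$-closure given by \propref{so-cl}, since the $\lozenge$-closed components are sandwiched between the basic components and the overall $\lozenge$-closure of the whole system, so applying $\lozenge$-closure a second time cannot enlarge the result.

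For part~(1), I would then simply combine part~(2) with \theoref{rep}. Using $R = \cl(\lc(R))$ and $T = \cl(\lc(T))$ gives
\[
R\odot T \;=\; \cl(\lc(R))\odot \cl(\lc(T)) \;=\; \cl\bigl(\lc(R)\circledast \lc(T)\bigr),
\]
where the final equality is part~(2) applied with $\mathbf{r} = \lc(R)$ and $\mathbf{t} = \lc(T)$; applying $\lc$ to both sides and using $\lc\circ\cl = \mathrm{id}$ then yields $\lc(R\odot T) = \lc(R)\circledast \lc(T)$. The main obstacle is the case analysis establishing the decomposition of $\PO_{r\ast t}$ and $\sq_{r\ast t}$ into intra-component relations plus cross-edges, together with the subsequent algebraic argument that $\lozenge$-closure is insensitive to whether one pre-closes the $\Sigma_r$ and $\Sigma_t$ components; both steps are conceptually routine but require care about the asymmetric role of $ser$ in the weak-causality case and about verifying that the two cross-edge sets produced on the two sides really do match.
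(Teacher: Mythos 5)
Your proposal is correct, and your treatment of part~(2) is essentially the paper's own proof: the paper also compares the pre-closure structures and runs the same sandwich through $\lozenge$-closure, showing $(\Sigma,\PO_{\langle\mathbf{r},\mathbf{t}\rangle},\sq_{\langle\mathbf{r},\mathbf{t}\rangle})\subseteq S_{\mathbf{q}}$ and $(\Sigma,\PO_{q},\sq_{q})\subseteq(\Sigma,\PO_{\langle\mathbf{r},\mathbf{t}\rangle},\sq_{\langle\mathbf{r},\mathbf{t}\rangle})$ and then invoking items~(1) and~(5) of \propref{so-cl}; your explicit decomposition of $\PO_{r\ast t}$ and $\sq_{r\ast t}$ into intra-component relations plus one-directional cross-edges spells out what the paper compresses into ``by Definitions~\ref{def:s2inv} and~\ref{def:s2sos}.'' (One small citation point: ``monotonicity'' of $\lozenge$-closure is not literally an item of \propref{so-cl}, though it is immediate from \defref{SO-CL}, or recoverable from items~(1) and~(5) once you note that $(\Sigma,\PO_q,\sq_q)^{\lozenge}=S_{[r\ast t]}$ is an so-structure by \theoref{com2sos}; a line of justification would tighten this.) Where you genuinely diverge is part~(1): the paper proves it directly by picking $\lhd_1\in ext(R)$ and $\lhd_2\in ext(T)$, observing that the stratified order represented by $\Omega_{\lhd_1}\ast\Omega_{\lhd_2}$ is a stratified extension of $R\odot T$, and then applying \theoref{rep}; you instead derive~(1) formally from~(2) via $R=\cl(\lc(R))$, $T=\cl(\lc(T))$, and $\lc\circ\cl=\mathrm{id}$. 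Your derivation is valid and non-circular, since \theoref{rep} is established before this lemma and independently of it, and it exploits the general algebraic fact that when two mutually inverse bijections are given, verifying the homomorphism property on one side suffices --- effectively halving the work. The paper's direct argument, by contrast, is self-contained at the level of stratified extensions and yields the concrete insight that extensions of a composition arise by concatenating extensions of the factors, which is also the observation reused in the proof of \lemref{hom2}; your route trades that operational picture for a cleaner formal deduction.
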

\begin{proof}\textbf{1. }Without loss of generality, we can assume that $R=\seq{X_1,\PO_1,\sq_1,\lambda_1}$, $T=\seq{X_2,\PO_2,\sq_2,\lambda_2}$ and $Q=R \odot T=\seq{X_1\uplus X_1,\PO,\sq,\lambda}$, where $\lambda=\lambda_1\uplus\lambda_2$. We can pick any $\lhd_1\in ext(R)$ and $\lhd_2\in ext(T)$. Then observe that the stratified order $\lhd$ satisfying $\Omega_{\lhd} = \Omega_{\lhd_1}\ast \Omega_{\lhd_2}$ is also a stratified extension of $Q$. Thus, by \theoref{rep}, we get
\[\lc(R) \circledast \lc(T) = [\map(\lambda_1,\lhd_1)]\circledast [\map(\lambda_2,\lhd_2)] =  [\map(\lambda,\lhd)] = \lc(Q)\] as desired.

\textbf{2. } Without loss of generality, we assume that $\mathbf{r}= [r]$, $\mathbf{t}=[t]$ and $\mathbf{q}=[q] = \mathbf{r}\circledast \mathbf{t}$, where $q = r \ast t$. By reindexing $\Sigma_{\mathbf{t}}$ appropriately, we can also assume that $\Sigma_{\mathbf{q}} = \Sigma_{\mathbf{r}}\uplus \Sigma_{\mathbf{t}}$. Under these assumptions, let 
\begin{align*}
T_{1} &= \seq{\Sigma_{\mathbf{r}},\PO_{\mathbf{r}},\sq_{\mathbf{r}},\ell_{1}}=\cl(\mathbf{r}),\\ 
T_{2} &= \seq{\Sigma_{\mathbf{t}},\PO_{\mathbf{t}},\sq_{\mathbf{t}},\ell_{2}}=\cl(\mathbf{t}),\\
T &= \seq{\Sigma_{\mathbf{q}},\PO_{\mathbf{q}},\sq_{\mathbf{q}},\ell}=\cl(\mathbf{q}),
\end{align*}
where $\ell=\ell_{1}\uplus \ell_{2}$ is simply the standard labeling functions. It will now suffice to show that $T_{1}\odot T_{2} = T$.

($\subseteq$): Let $T_{1}\odot T_{2} = (\Sigma_{\mathbf{r}}\uplus \Sigma_{\mathbf{t}},\PO_{\langle \mathbf{r},\mathbf{t}\rangle},\sq_{\langle \mathbf{r},\mathbf{t}\rangle},l)^{\lozenge}$. By Definitions \ref{def:s2inv} and \ref{def:s2sos}, we have
\begin{align*}
 \PO_{\langle \mathbf{r},\mathbf{t}\rangle}&\;=\; \PO_{\mathbf{r}}\cup \PO_{\mathbf{t}}\cup\; \bset{(\alpha,\beta)\in \Sigma_{\mathbf{r}}\times \Sigma_{\mathbf{t}}\mid (\lambda(\alpha),\lambda(\beta))\not\in ser} \subseteq\, \PO_{\mathbf{q}}\\
\sq_{\langle \mathbf{r},\mathbf{t}\rangle}&\;=\; \sq_{\mathbf{r}}\cup \sq_{\mathbf{t}}\cup\; \bset{(\alpha,\beta)\in \Sigma_{\mathbf{r}}\times \Sigma_{\mathbf{t}}\mid (\lambda(\beta),\lambda(\alpha))\not\in ser}\subseteq\, \sq_{\mathbf{q}}
\end{align*}
Thus, from \propref{so-cl} (5),  we have $(\Sigma_{\mathbf{r}}\uplus \Sigma_{\mathbf{t}},\PO_{\langle \mathbf{r},\mathbf{t}\rangle},\sq_{\langle \mathbf{r},\mathbf{t}\rangle},\ell)^{\lozenge}\subseteq  (\Sigma_{\mathbf{q}},\PO_{\mathbf{q}},\sq_{\mathbf{q}},\ell)$.

($\supseteq$): By Definitions \ref{def:s2inv} and \ref{def:s2sos}, we have $\PO_{q}\subseteq\, \PO_{\langle \mathbf{r},\mathbf{t}\rangle}$ and $\sq_{q}\subseteq\, \sq_{\langle \mathbf{r},\mathbf{t}\rangle}$. 
From \defref{op1}, we already know that  $ \sq_{\langle \mathbf{r},\mathbf{t}\rangle}$ is irreflexive since $\sq_{\bf{t}}$ and $\sq_{\bf{r}}$ are irreflexive. Thus, by \propref{so-cl} (1), 
\[(\Sigma_{\mathbf{r}}\uplus \Sigma_{\mathbf{t}},\PO_{\langle \mathbf{r},\mathbf{t}\rangle},\sq_{\langle \mathbf{r},\mathbf{t}\rangle})\subseteq (\Sigma_{\mathbf{r}}\uplus \Sigma_{\mathbf{t}},\PO_{\langle \mathbf{r},\mathbf{t}\rangle},\sq_{\langle \mathbf{r},\mathbf{t}\rangle})^{\lozenge}\]
Hence, we have $(\Sigma_{\mathbf{q}},\PO_{q},\sq_{q},\ell)\subseteq (\Sigma_{\mathbf{r}}\uplus \Sigma_{\mathbf{t}},\PO_{\langle \mathbf{r},\mathbf{t}\rangle},\sq_{\langle \mathbf{r},\mathbf{t}\rangle})^{\lozenge}$. Thus, from \propref{so-cl} (5), 
\[T=(\Sigma_{\mathbf{q}},\PO_{\mathbf{q}},\sq_{\mathbf{q}},\ell) = (\Sigma_{\mathbf{q}},\PO_{q},\sq_{q},\ell)^{\lozenge}\subseteq (\Sigma_{\mathbf{r}}\uplus \Sigma_{\mathbf{t}},\PO_{\langle \mathbf{r},\mathbf{t}\rangle},\sq_{\langle \mathbf{r},\mathbf{t}\rangle},\ell)^{\lozenge} = T_{1}\odot T_{2}.\]
Thus, we have shown that $T_{1}\odot T_{2} = T$. 
\qed
\end{proof}

Let $\mathbb{I}$ denote the lp-isomorphic class  $[\emptyset,\emptyset,\emptyset,\emptyset]$. Then observe that $\cl([\epsilon]) = \mathbb{I}$ and $\lc(\mathbb{I}) = [\epsilon]$. By \lemref{hom1} and \theoref{rep}, it follows that the structures $(\LCT(\theta),\odot, \mathbb{I})$ and $(\quotient{\E_\theta^*}{\equiv_{\theta}},\circledast,[\epsilon])$ are isomorphic under the isomorphisms $\cl:\quotient{\E_\theta^*}{\equiv_{\theta}} \rightarrow \LCT(\theta)$ and $\lc:\LCT(\theta)\rightarrow \quotient{\E_\theta^*}{\equiv_{\theta}}$. Thus, the triple $(\LCT(\theta),\odot , \mathbb{I})$ is also a monoid. We can summarize these facts in the following theorem:
\begin{theorem} \label{theo:iso1}
The mappings $\cl$ and $\lc$ are monoid isomorphisms between $(\quotient{\E_\theta^*}{\equiv_{\theta}},\circledast,[\epsilon])$ and  $(\LCT(\theta),\odot, \mathbb{I})$.  
\end{theorem}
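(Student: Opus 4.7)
The plan is to combine the two main facts already established: bijectivity of $\cl$ and $\lc$ from \theoref{rep}, and composition preservation from \lemref{hom1}. The only remaining ingredient is identity preservation, which is essentially by inspection.

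First I would verify that $\cl([\epsilon]) = \mathbb{I}$ and $\lc(\mathbb{I}) = [\epsilon]$. For the former, the empty step sequence $\epsilon$ has an empty event occurrence set $\Sigma_{[\epsilon]} = \emptyset$, so by \defref{repmaps} the labeled so-structure $\cl([\epsilon])$ is the empty labeled so-structure $\mathbb{I}$. For the latter, the empty so-structure has exactly one stratified extension (the empty one), whose image under $\map$ is $\epsilon$, so $\lc(\mathbb{I}) = \{\epsilon\} = [\epsilon]$.

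Next I would argue that $(\LCT(\theta), \odot, \mathbb{I})$ is in fact a monoid. This is not entirely a triviality, because associativity of $\_\odot\_$ was not established directly in \defref{op1}. The cleanest route is to transport the monoid laws across the bijection: for any $R,S,T \in \LCT(\theta)$, use \lemref{hom1}(1) to compute
\[
\lc\bigl((R\odot S)\odot T\bigr) = \bigl(\lc(R)\circledast\lc(S)\bigr)\circledast \lc(T) = \lc(R)\circledast\bigl(\lc(S)\circledast\lc(T)\bigr) = \lc\bigl(R\odot(S\odot T)\bigr),
\]
and then apply injectivity of $\lc$ from \theoref{rep}. Identity laws $\mathbb{I}\odot T = T \odot \mathbb{I} = T$ follow analogously, using $\lc(\mathbb{I})=[\epsilon]$ and the identity laws in the comtrace monoid.

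Finally, the theorem assembles immediately. By \theoref{rep}, $\cl$ and $\lc$ are mutually inverse bijections; by \lemref{hom1}, both preserve the binary operation; and by the identity preservation above, both send the identity to the identity. Hence each is a monoid isomorphism. There is no serious obstacle here: all the substantive work has already been done in \theoref{rep} (bijectivity) and \lemref{hom1} (homomorphism property); what remains is only the bookkeeping of identity elements and the transfer of the monoid axioms through the bijection.
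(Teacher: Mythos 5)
Your proposal is correct and follows essentially the same route as the paper: the paper likewise observes $\cl([\epsilon]) = \mathbb{I}$ and $\lc(\mathbb{I}) = [\epsilon]$ by inspection, invokes Lemma~\ref{lem:hom1} and Theorem~\ref{theo:rep}, and transports the monoid structure through the bijection to conclude that $(\LCT(\theta),\odot,\mathbb{I})$ is a monoid. Your only divergence is a cosmetic one: you spell out the associativity transfer via injectivity of $\lc$, which the paper leaves implicit in the phrase ``the triple $(\LCT(\theta),\odot,\mathbb{I})$ is also a monoid.''
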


\subsection{Monoid of cd-graphs}
Similarly to the previous section, for a given a comtrace alphabet, we can also define a composition operator for cd-graphs.
\begin{definition}[composition of cd-graphs] Let $D_{1}$ and $D_{2}$ be cd-graphs over an alphabet $\theta=(E,sim,ser)$, where $D_{i}=\seq{X_{i},\cau_{i},\wcau_{i},\lambda_{i}}$. The \emph{composition} $D_{1}\circledcirc D_{2}$  is defined as (an lp-isomorphic class of) a labeled so-structure $\seq{X,\cau,\wcau,\lambda}$ such that $X=X_{1}\uplus X_{2}$, $\lambda = \lambda_{1}\uplus \lambda_{2}$, and 
\begin{align*}
\cau\;&=\; \cau_{1}\cup \cau_{2}\cup\; \set{(\alpha,\beta)\in X_{1}\times X_{2}\mid (\lambda(\alpha),\lambda(\beta))\not\in ser}\\ 
\wcau\;&=\; \wcau_{1}\cup \wcau_{2}\cup\; \set{(\alpha,\beta)\in X_{1}\times X_{2}\mid (\lambda(\beta),\lambda(\alpha))\not\in ser}
\end{align*}
\EOD
\end{definition}

The operator $\_\circledcirc\_$ is well-defined since  we can easily check that:

\begin{proposition}
For every $D_1,D_2\in \DCD(\theta)$, $D_1\circledcirc D_2 \in \DCD(\theta)$.   
\end{proposition}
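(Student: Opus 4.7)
The plan is to verify the three defining requirements of \defref{comdag} for $D = D_1 \circledcirc D_2 = \seq{X,\cau,\wcau,\lambda}$, where $X = X_1 \uplus X_2$ and $\lambda = \lambda_1 \uplus \lambda_2$.

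First, irreflexivity of $\cau$ and $\wcau$ is immediate: the pairs coming from $\cau_i,\wcau_i$ are irreflexive because each $D_i$ is a cd-graph, while the additional pairs both lie in $X_1 \times X_2$ and are therefore irreflexive because $X_1 \cap X_2 = \emptyset$.

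Next, I would check \textsf{CD1}--\textsf{CD4} by a short case analysis on the membership of $\alpha,\beta$ in $X_1$ or $X_2$. For \textsf{CD3} and \textsf{CD4} the argument is direct: if $\alpha \cau \beta$ (resp.\ $\alpha \wcau \beta$) lies in $\cau_i$ (resp.\ $\wcau_i$) then use \textsf{CD3} (resp.\ \textsf{CD4}) for $D_i$; otherwise the pair is one of the newly introduced cross edges, and the required $ser$-statement holds by the very construction of $\cau$ and $\wcau$. For \textsf{CD1} and \textsf{CD2}, assume the relevant negation, say $(\lambda(\alpha),\lambda(\beta))\notin sim$; if $\alpha$ and $\beta$ lie in the same $X_i$ use the corresponding axiom for $D_i$; if they lie in different components, use symmetry of $sim$ together with $ser \subseteq sim$ to reduce to the case $\alpha \in X_1$ and $\beta \in X_2$, in which the cross edge $\alpha \cau \beta$ is supplied by the definition. \textsf{CD2} is handled analogously.

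The hard part is the third requirement, that $D^\lozenge$ is an so-structure. By \propref{so-cl}(3), this amounts to showing that the relation $R \df (\cau \cup \wcau)^* \circ \cau \circ (\cau \cup \wcau)^*$ is irreflexive. The key structural observation is that every cross edge introduced by the composition goes from $X_1$ to $X_2$; there are no edges of $\cau \cup \wcau$ from $X_2$ to $X_1$. Consequently, any $(\cau \cup \wcau)^*$-path that returns to its starting vertex must stay entirely inside $X_1$ or entirely inside $X_2$. So if $\alpha \in X_i$ were a witness to $\alpha\, R\, \alpha$, the entire cycle would consist of edges of $D_i$ only, yielding $\alpha\, R_i\, \alpha$ for the analogous relation built from $\cau_i$ and $\wcau_i$; this contradicts that $D_i^\lozenge$ is already an so-structure. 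I expect this one-directionality argument to be the main—and in fact the only nontrivial—obstacle; the rest is bookkeeping.
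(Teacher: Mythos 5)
Your proposal is correct and complete: the paper itself states this proposition without proof (merely remarking that it is ``easily checked''), and your argument supplies exactly the details that were omitted. In particular, you correctly identify the only nontrivial point --- since all cross edges of $D_1\circledcirc D_2$ run from $X_1$ to $X_2$, any cycle in $\cau\cup\wcau$ lies entirely within one $X_i$, so irreflexivity of $(\cau\cup\wcau)^*\circ\cau\circ(\cau\cup\wcau)^*$ reduces via \propref{so-cl}(3) to the corresponding property of the component $D_i$ --- and your case analysis for \textsf{CD1}--\textsf{CD4}, including the use of symmetry of $sim$ and $ser\subseteq sim$ to handle the $\alpha\in X_2$, $\beta\in X_1$ cases, is sound.
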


Using techniques similar to the proofs of \lemref{hom1} and \theoref{deprep}, it is not hard to show the following lemma.

\begin{lemma}\label{lem:hom2}
Let $\theta$ be a comtrace alphabet. Then  
\begin{enumerate}
\item For every $R,T \in \LCT(\theta)$, $\ld(R \odot T) = \ld(R) \circledcirc \ld(T).$
\item For every $D,E \in \DCD(\theta)$, $\dl(D \circledcirc E) = \dl(D) \odot \dl(E).$  
\end{enumerate}
\end{lemma}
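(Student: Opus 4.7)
My plan is to prove part 1 by unfolding $\ld = \cd \circ \lc$ and combining Lemma~\ref{lem:hom1}(1) with a companion compatibility statement for $\cd$. Specifically, I would first establish the sub-lemma
\[\cd(\mathbf{r} \circledast \mathbf{t}) = \cd(\mathbf{r}) \circledcirc \cd(\mathbf{t})\]
for all $\mathbf{r},\mathbf{t} \in \quotient{\E_\theta^*}{\equiv_\theta}$. Picking representatives $u \in \mathbf{r}$ and $v \in \mathbf{t}$, the event-occurrence set $\Sigma_{u*v}$ splits canonically as a disjoint union of $\Sigma_u$ and $\Sigma_v$ via the label-preserving shift bijection $\phi$ that sends the $j$-th $v$-occurrence of an event $e$ to the $(|u|_e + j)$-th occurrence in $u*v$. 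Unfolding Definition~\ref{def:s2inv} on both sides then shows the relations agree: on same-side pairs, $\PO_{u*v}$ and $\sq_{u*v}$ restrict to $\PO_u,\PO_v$ and $\sq_u,\sq_v$ respectively; on cross-pairs $(\alpha,\beta) \in \Sigma_u \times \Sigma_v$ one has $\alpha \lhd_{u*v} \beta$ and $\alpha \lhd_{u*v}^\frown \beta$ automatically while their converses fail, so the cross-relations of $\cd([u*v])$ collapse to exactly the non-serializability conditions defining $\circledcirc$. Part~1 then follows from the chain
\[\ld(R \odot T) = \cd(\lc(R \odot T)) = \cd\bigl(\lc(R) \circledast \lc(T)\bigr) = \cd(\lc(R)) \circledcirc \cd(\lc(T)) = \ld(R) \circledcirc \ld(T),\]
where the second equality invokes Lemma~\ref{lem:hom1}(1) and the third invokes the sub-lemma.

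For part~2, I would avoid a direct computation and instead derive it from part~1 using that $\dl$ and $\ld$ are mutual inverses by Theorem~\ref{theo:deprep}. Given $D, E \in \DCD(\theta)$, set $R = \dl(D)$ and $T = \dl(E)$, so $\ld(R) = D$ and $\ld(T) = E$. Part~1 then yields $\ld(R \odot T) = D \circledcirc E$, and applying $\dl$ together with $\dl \circ \ld = \mathrm{id}_{\LCT(\theta)}$ gives
\[\dl(D) \odot \dl(E) \;=\; R \odot T \;=\; \dl(D \circledcirc E),\]
which is exactly part~2.

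The main obstacle is the bookkeeping in the sub-lemma: the raw sets $\Sigma_u$ and $\Sigma_v$ share symbols such as $a^{(1)}$, so ``$\Sigma_u \uplus \Sigma_v$'' must be interpreted via the shift bijection $\phi$, and I would need to verify carefully that $\phi$ is an lp-isomorphism between $\cd([u*v])$ and $\cd([u]) \circledcirc \cd([v])$. Once this bookkeeping is sorted, every remaining assertion is forced by unwinding Definition~\ref{def:s2inv} and the cd-graph composition operator, with no appeal to the $\lozenge$-closure needed because the cd-graph composition is defined without it.
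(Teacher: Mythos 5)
Your proof is correct, and it fills in a proof the paper itself omits: the paper disposes of this lemma in one sentence, asserting it follows ``using techniques similar to the proofs of'' \lemref{hom1} and \theoref{deprep}, with no details given. Your route is a clean instantiation of that hint, with two choices worth noting. First, for part~1 you work at the level of $\cd$ rather than re-running the $\lozenge$-closure manipulations of \lemref{hom1}: the sub-lemma $\cd(\mathbf{r}\circledast\mathbf{t})=\cd(\mathbf{r})\circledcirc\cd(\mathbf{t})$ is a finite check against \defref{s2inv} --- same-side pairs of $\PO_{u\ast v},\sq_{u\ast v}$ restrict to $\PO_u,\sq_u$ and (via $\phi$) to $\PO_v,\sq_v$, while for cross pairs $(\alpha,\beta)\in\Sigma_u\times\Sigma_v$ one always has $\alpha\lhd_{u\ast v}\phi(\beta)$ and never the converse, so the cross-relations reduce exactly to the $ser$-conditions defining $\circledcirc$ --- and, as you correctly observe, no closure argument is needed because $\circledcirc$, unlike $\odot$, is defined without $\lozenge$; all closure reasoning is absorbed into the appeal to \lemref{hom1}(1). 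Second, you obtain part~2 purely formally from part~1 together with $\dl\circ\ld=\mathrm{id}_{\LCT(\theta)}$ and $\ld\circ\dl=\mathrm{id}_{\DCD(\theta)}$ from \theoref{deprep} (implicitly using that $\LCT(\theta)$ and $\DCD(\theta)$ are closed under $\odot$ and $\circledcirc$, which the paper's well-definedness propositions supply), rather than by a second direct computation; this bootstrapping is the same style the paper uses inside the proof of \theoref{deprep} itself, so it is squarely within the intended toolkit. The bookkeeping concern you flag is the right one to be careful about: $\Sigma_u$ and $\Sigma_v$ genuinely overlap as symbol sets (both may contain $a^{(1)}$), so the identification $\Sigma_{u\ast v}\cong\Sigma_u\uplus\Sigma_v$ must go through the shift bijection $\phi$ and one must check $\phi$ is an lp-isomorphism; this is harmless since cd-graphs are lp-isomorphism classes, and independence of $\cd(\mathbf{t})$ from the choice of representative $u\in\mathbf{t}$ is already established by the paper's well-definedness lemma, so your sub-lemma may be verified on the single representative $u\ast v\in\mathbf{r}\circledast\mathbf{t}$.
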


Putting the preceding lemma  and \theoref{deprep} together, we conclude:
\begin{theorem} \label{theo:iso2}
The mappings $\ld$ and $\dl$ are monoid isomorphisms between $(\LCT(\theta),\odot, \mathbb{I})$ and $(\DCD(\theta),\circledcirc, \mathbb{I})$.  
\end{theorem}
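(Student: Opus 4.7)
The plan is to derive this as a direct corollary of Theorem~\ref{theo:deprep} and Lemma~\ref{lem:hom2}, so the proof should be quite brief. First I would appeal to Theorem~\ref{theo:deprep} to conclude that $\ld$ and $\dl$ are mutually inverse bijections between $\LCT(\theta)$ and $\DCD(\theta)$. Next I would invoke Lemma~\ref{lem:hom2}, which supplies exactly the multiplicativity identities $\ld(R\odot T)=\ld(R)\circledcirc\ld(T)$ and $\dl(D\circledcirc E)=\dl(D)\odot\dl(E)$, showing that both maps are semigroup homomorphisms with respect to the composition operators.

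The only genuinely new verification is preservation of the identity element $\mathbb{I}=[\emptyset,\emptyset,\emptyset,\emptyset]$. For $\dl$, this is immediate because the $\lozenge$-closure of the empty relational structure is empty, so $\dl(\mathbb{I})=\mathbb{I}^{\lozenge}=\mathbb{I}$. For $\ld=\cd\circ\lc$, I would observe that $\lc(\mathbb{I})$ admits only the empty step sequence as a representative, hence $\lc(\mathbb{I})=[\epsilon]$, and then note that $\cd([\epsilon])=\mathbb{I}$ since the underlying event occurrence set $\Sigma_{\epsilon}$ is empty; alternatively, one can simply remark that $\ld(\mathbb{I})=\ld(\dl(\mathbb{I}))=\mathbb{I}$ using part (2) of Theorem~\ref{theo:deprep}. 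Combining bijectivity, multiplicativity, and identity preservation gives that $\ld$ and $\dl$ are mutually inverse monoid isomorphisms.

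I do not anticipate any real obstacle in carrying out this proof; all of the substantive content sits in Theorem~\ref{theo:deprep} and Lemma~\ref{lem:hom2}, whose proofs carry the genuine technical difficulty (particularly the careful bookkeeping for $\lozenge$-closure and disjoint union of labeled so-structures in the composition). Once those are in hand, the present theorem is essentially a packaging step, entirely analogous to how Theorem~\ref{theo:iso1} was obtained from Theorem~\ref{theo:rep} together with \lemref{hom1}.
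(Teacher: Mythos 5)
Your proposal is correct and follows essentially the same route as the paper, which states \theoref{iso2} as an immediate consequence of \lemref{hom2} (the two homomorphism identities) together with \theoref{deprep} (the mutually inverse bijections). Your explicit verification that $\mathbb{I}$ is preserved under $\ld$ and $\dl$ is a harmless extra step that the paper leaves implicit (it also follows automatically, since a bijective semigroup homomorphism between monoids necessarily preserves the identity).
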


By composing the isomorphisms from Theorems \ref{theo:iso1} and \ref{theo:iso2}, we have:

\begin{corollary}
The monoids $(\quotient{\E_\theta^*}{\equiv_{\theta}},\circledast,[\epsilon])$  and $(\DCD(\theta),\circledcirc, \mathbb{I})$ are isomorphic.  
\end{corollary}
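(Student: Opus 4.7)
The plan is to deduce this corollary directly from the two preceding monoid isomorphism theorems by composing them. Since Theorem~\ref{theo:iso1} establishes that $\cl$ and $\lc$ are mutually inverse monoid isomorphisms between $(\quotient{\E_\theta^*}{\equiv_{\theta}},\circledast,[\epsilon])$ and $(\LCT(\theta),\odot, \mathbb{I})$, and Theorem~\ref{theo:iso2} establishes that $\ld$ and $\dl$ are mutually inverse monoid isomorphisms between $(\LCT(\theta),\odot, \mathbb{I})$ and $(\DCD(\theta),\circledcirc, \mathbb{I})$, the natural candidate isomorphisms between the endpoint monoids are $\ld \circ \cl$ and $\lc \circ \dl$.

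First I would define $\Phi \df \ld \circ \cl : \quotient{\E_\theta^*}{\equiv_{\theta}} \to \DCD(\theta)$ and $\Psi \df \lc \circ \dl : \DCD(\theta) \to \quotient{\E_\theta^*}{\equiv_{\theta}}$. Since a composition of bijections is a bijection, both $\Phi$ and $\Psi$ are bijective, and a direct computation shows $\Phi \circ \Psi = \mathit{id}_{\DCD(\theta)}$ and $\Psi \circ \Phi = \mathit{id}_{\quotient{\E_\theta^*}{\equiv_{\theta}}}$, so they are mutual inverses. The monoid homomorphism property is also preserved under composition: for any $\mathbf{r}, \mathbf{t} \in \quotient{\E_\theta^*}{\equiv_{\theta}}$, applying Theorem~\ref{theo:iso1} and then Theorem~\ref{theo:iso2} gives $\Phi(\mathbf{r} \circledast \mathbf{t}) = \ld(\cl(\mathbf{r}) \odot \cl(\mathbf{t})) = \ld(\cl(\mathbf{r})) \circledcirc \ld(\cl(\mathbf{t})) = \Phi(\mathbf{r}) \circledcirc \Phi(\mathbf{t})$. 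Finally, the identity is preserved since $\Phi([\epsilon]) = \ld(\cl([\epsilon])) = \ld(\mathbb{I}) = \mathbb{I}$.

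There is essentially no obstacle here — the result is a pure formal consequence of the fact that isomorphisms compose. The substantive content has already been packed into Theorems~\ref{theo:iso1} and~\ref{theo:iso2}, and the only thing worth stating in the proof is the explicit description of the isomorphism pair $(\Phi, \Psi)$; one could equally well just invoke the general categorical fact that monoid isomorphism is a transitive relation.
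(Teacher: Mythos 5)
Your proposal is correct and matches the paper exactly: the paper's entire proof is the one-line remark ``by composing the isomorphisms from Theorems~\ref{theo:iso1} and~\ref{theo:iso2}'', which is precisely your pair $\Phi = \ld \circ \cl$ and $\Psi = \lc \circ \dl$. Your explicit verification of bijectivity, the homomorphism law, and $\Phi([\epsilon]) = \mathbb{I}$ just spells out details the paper leaves implicit.
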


\section{Conclusion \label{sec:conc}}

The simple yet useful construction we used extensively in this paper is to build a quotient so-structure modulo the $\sq$-cycle equivalence relation. Intuitively, each $\sq$-cycle equivalence class consists of  events that must be executed simultaneously with one another and hence can be seen as a single ``composite event''.  The resulting quotient so-structure  is technically  easier to handle since both relations of the quotient so-structure are acyclic. From this construction, we were able to give a labeled so-structure definition of comtraces analogous to the labeled poset definition of traces. 

We have also formally shown that the quotient monoid of comtraces, the monoid of lsos-comtraces and the monoid of cd-graphs over the same comtrace alphabet are  isomorphic by constructing monoid isomorphisms between them. These three models are  formal linguistic,  order-theoretic, and  graph-theoretic respectively, which allows us to apply a variety of tools and techniques. We believe the ability  to conceptualize on three alternative representations is the main advantage of trace theory in general.

An immediate future task is to develop a framework similar to the one in this paper for \emph{generalized comtraces}, proposed and developed in \cite{Le08,JL11}. Generalized comtraces extend comtraces with the ability to model events that can be executed \emph{earlier than or later than but never simultaneously}.  We believe that the quotient so-structure technique developed in this paper can be used to simplify some proofs in \cite{JL11}.

The labeled so-structure definition of comtraces can easily be extended to define \emph{infinite comtraces} to model nonterminating concurrent processes, and thus it would be interesting to generalize the results in  \cite{Gas90,Die91} for comtraces. It is also promising to use lsos-comtraces and cd-graphs  to develop logics for comtraces similarly to what have been done for traces (see, e.g., \cite{TW02,Wal02,DG06a,DG06b,GK07,DHK07}). 

\section*{Acknowledgment}
I would like to thank the anonymous referees who pointed out several serious typos and mistakes in the earlier version of this paper.  Their comments also helped improving the presentation of this paper substantially. I am grateful to the Mathematics Institute of Warsaw University and the Theoretical Computer Science Group of Jagiellonian University for their generous supports during my visits in 2009. It was during these visits that the ideas from this paper emerge.  I am indebted to Steve Cook, Gabriel Juh\'as, Ryszard Janicki, Jetty Kleijn, Maciej Koutny and  Yuli Ye  for their suggestions and encouragement. This research is financially  supported by the Ontario Graduate Scholarship and the Natural Sciences and Engineering Research Council of Canada.

\bibliographystyle{plain}
\bibliography{comtraces}
\end{document}